
\documentclass[a4paper]{article}

\usepackage{acronym}
\usepackage{amsmath}
\usepackage{amssymb}
\usepackage{amsthm}
\usepackage{algorithm}
\usepackage{algorithmicx}
\usepackage{authblk}
\usepackage[american]{babel}
\usepackage[format=plain,position=bottom,labelfont=bf,textfont=small]{caption}
\usepackage{circuitikz}
\usepackage{dsfont}
\usepackage[inline]{enumitem}
\usepackage[top=1in,left=1in,bottom=1in,right=1in]{geometry}
\usepackage{graphicx}
\usepackage{multirow}
\usepackage[small]{subfigure}
\usepackage{tikz}
\ifdefined \arXiv
	\usepackage[disable]{todonotes}
\else
	\usepackage{todonotes}
\fi
\usepackage{xspace}
\usepackage[pdfborder={0 0 0}]{hyperref} 

\acrodef{ADC}{Analog-to-Digital Converter}
	\acrodefindefinite{ADC}{an}{an}
\acrodef{ASIC}{Application-Specific Integrated Circuit}
	\acrodefindefinite{ASIC}{an}{an}
\acrodef{BRGC}{Binary Reflected Gray Code}
\acrodef{CMOS}{Complementary Metal-Oxide-Semiconductor}
\acrodef{CMUX}{Metastability-Containing Multiplexer}
\acrodef{DAG}{Directed Acyclic Graph}
\acrodef{DCO}{Digitally Controlled Oscillator}
\acrodef{FPGA}{Field-Programmable Gate Array}
	\acrodefindefinite{FPGA}{an}{a}
\acrodef{MTBF}{Mean Time Between Failures}
	\acrodefindefinite{MTBF}{an}{a}
\acrodef{MUX}{Multiplexer}
\acrodef{KV}{Karnaugh-Veitch}
\acrodef{SEU}{Single Event Upset}
	\acrodefindefinite{SEU}{an}{a}
\acrodef{TC}{Thermometer Code}
\acrodef{TDC}{Time-to-Digital Converter}
\acrodef{TTP}{Time-Triggered Protocol}
\acrodef{VLSI}{Very Large-Scale Integration}

\newtheorem{theorem}{Theorem}
\newtheorem{lemma}[theorem]{Lemma}
\newtheorem{corollary}[theorem]{Corollary}
\newtheorem{definition}[theorem]{Definition}

\newtheorem{observation}[theorem]{Observation}
\newtheorem{example}[theorem]{Example}

\hyphenation{Bool-ean}
\hyphenation{coun-ter}
\hyphenation{dem-on-strate}
\hyphenation{Flex-Ray}
\hyphenation{hi-er-ar-chy}
\hyphenation{im-ple-ment-a-bil-ity}
\hyphenation{im-ple-ment-a-ble}
\hyphenation{Ma-ri-no}
\hyphenation{me-ta-sta-bil-i-ty}
\hyphenation{me-ta-sta-ble}
\hyphenation{mul-ti-plex-er}
\hyphenation{syn-chro-nous}

\setlist[enumerate,1]{label=(\arabic*)}
\setlist[enumerate,2]{label=(\alph*),ref=(\arabic{enumi}\alph*)}

\graphicspath{{./graphics/}}

\usetikzlibrary{circuits,calc}


\newcommand{\B}{\mathds{B}}
\newcommand{\BM}{\B_\meta}
\def\dash---{\kern.16667em---\penalty\exhyphenpenalty\hskip.16667em\relax}

\newcommand{\Eval}{\operatorname{Eval}}
\newcommand{\Fun}{\operatorname{Fun}}
\newcommand{\gand}{\ensuremath{\operatorname{\textsc{And}}}\xspace}

\newcommand{\gnand}{\ensuremath{\operatorname{\textsc{Nand}}}\xspace}
\newcommand{\gnot}{\ensuremath{\operatorname{\textsc{Not}}}\xspace}
\newcommand{\gor}{\ensuremath{\operatorname{\textsc{Or}}}\xspace}
\newcommand{\gone}{\ensuremath{\operatorname{\textsc{Const1}}}\xspace}
\newcommand{\gxor}{\ensuremath{\operatorname{\textsc{Xor}}}\xspace}
\newcommand{\gzero}{\ensuremath{\operatorname{\textsc{Const0}}}\xspace}
\newcommand{\In}{\operatorname{In}}
\newcommand{\Loc}{\operatorname{Loc}}
\newcommand{\meta}{\ensuremath{\textsc{M}}\xspace}
\newcommand{\N}{\mathds{N}}
\newcommand{\nop}[1]{}
\newcommand{\Out}{\operatorname{Out}}
\newcommand{\Pow}{\operatorname{\mathcal{P}}}

\newcommand{\Read}{\operatorname{Read}}
\newcommand{\Res}{\operatorname{Res}}
\newcommand{\ResM}{\Res_\meta}
\newcommand{\rg}{\operatorname{rg}}
\newcommand{\rgtoun}{\operatorname{rg2un}}
\newcommand{\un}{\operatorname{un}}
\newcommand{\Write}{\operatorname{Write}}

\title{Metastability-Containing Circuits}
\date{}

\author[1,2]{Stephan~Friedrichs}
\author[3]{Matthias~F\"ugger}
\author[1]{Christoph~Lenzen}
\affil[1]{Max Planck Institute for Informatics, Saarland Informatics Campus, Germany\newline
	Email:~\texttt{\{sfriedri,clenzen\}@mpi-inf.mpg.de}}
\affil[2]{Saarbr\"ucken Graduate School of Computer Science}
\affil[3]{CNRS, LSV, ENS Paris-Saclay, Email:~\texttt{mfuegger@lsv.fr}}

\begin{document}
\maketitle

\begin{abstract}
	In digital circuits, \emph{metastability} can cause deteriorated signals that neither are logical 0 or logical~1, breaking the abstraction of Boolean logic.
	Unfortunately, any way of reading a signal from an unsynchronized clock domain or performing an analog-to-digital conversion incurs the risk of a metastable upset;
	no digital circuit can deterministically avoid, resolve, or detect metastability (Marino,~1981).
	Synchronizers, the only traditional countermeasure, exponentially decrease the odds of maintained metastability over time.
	Trading synchronization delay for an increased probability to resolve metastability to logical 0 or~1, they do not guarantee success.

	We propose a fundamentally different approach:
	It is possible to \emph{contain} metastability by fine-grained logical masking so that it cannot infect the entire circuit.
	This technique \emph{guarantees} a limited degree of metastability in\dash---and uncertainty about\dash---the output.

	At the heart of our approach lies a time- and value-discrete model for metastability in synchronous clocked digital circuits.
	Metastability is propagated in a worst-case fashion, allowing to derive deterministic guarantees, without and unlike synchronizers.
	The proposed model permits positive results and passes the test of reproducing Marino's impossibility results.
	We fully classify which functions can be computed by circuits with standard registers.
	Regarding masking registers, we show that they become computationally strictly more powerful with each clock cycle, resulting in a non-trivial hierarchy of computable functions.

	Demonstrating the applicability of our approach, we present the first fault-tolerant distributed clock synchronization algorithm that deterministically guarantees correct behavior in the presence of metastability.
	As a consequence, clock domains can be synchronized without using synchronizers, enabling metastability-free communication between them.
\end{abstract}

\section{Introduction}
\label{sec:introduction}

A classic image invoked to explain metastability is a ball ``resting'' on the peak of a steep mountain.
In this unstable equilibrium the tiniest displacement exponentially self-amplifies, and the ball drops into a valley.
While for Sisyphus metastability admits some nanoseconds of respite, it fundamentally disrupts operation in \ac{VLSI} circuits by breaking the abstraction of Boolean logic.

In digital circuits, every bistable storage element can become \emph{metastable.}
Metastability refers to volatile states that usually involve an internal voltage strictly between logical $0$ and~$1$.
A metastable storage element can output deteriorated signals, e.g., voltages stuck between logical $0$ and logical~$1$, oscillations, late or unclean transitions, or otherwise unspecified behavior.
Such deteriorated signals may violate timing constraints or input specifications of gates and further storage elements.
Hence, deteriorated signals may spread through combinational logic and drive further bistables into metastability.
While metastability refers to a state of a bistable, we refer to the abovementioned deteriorated signals as ``metastable'' for the sake of exposition.

Unfortunately, any way of reading a signal from an unsynchronized clock domain or performing an analog-to-digital or time-to-digital conversion incurs the risk of a metastable result;
no physical implementation of a non-trivial digital circuit can deterministically avoid, resolve, or detect metastability~\cite{m-gtmo-81}.

Traditionally, the only countermeasure is to write a potentially metastable signal into a synchronizer~\cite{bg-ewbys-15,bgccz-mcsm-13,bgpdk-ds-10,g-fwfys-03,k-sads-08,kby-scp-02} and wait.
Synchronizers exponentially decrease the odds of maintained metastability over time~\cite{k-sads-08,kby-scp-02,v-bffspfr-80}:
In this unstable equilibrium the tiniest displacement exponentially self-amplifies and the bistable resolves metastability.
Put differently, the waiting time determines the probability to resolve to logical $0$ or~$1$.
Accordingly, this approach delays subsequent computations and does not guarantee success.

We propose a fundamentally different approach:
It is possible to \emph{contain} metastability by fine-grained logical masking so that it cannot infect the entire circuit.
This technique \emph{guarantees} a limited degree of metastability in\dash---and uncertainty about\dash---the output.
At the heart of our approach lies a model for metastability in synchronous clocked digital circuits.
Metastability is propagated in a worst-case fashion, allowing to derive deterministic guarantees, without and unlike synchronizers.

\paragraph*{The Challenge}

The problem with metastability is that it fundamentally disrupts operation in \ac{VLSI} circuits by breaking the abstraction of Boolean logic:
A metastable signal can neither be viewed as being logical $0$ or~$1$.
In particular, a metastable signal is not a random bit, and does not behave like an unknown but fixed Boolean signal.
As an example, the circuit that computes $\lnot x \lor x$ using a \gnot and a binary \gor gate may output an arbitrary signal value if $x$ is metastable:
$0$, $1$, or again a metastable signal.
Note that this is not the case for unknown, but Boolean,~$x$.
The ability of such signals to ``infect'' an entire circuit poses a severe challenge.

\paragraph*{The Status Quo}

The fact that metastability cannot be avoided, resolved or detected, the hazard of infecting entire circuits, and the unpleasant property of breaking the abstraction of Boolean logic have led to the predominant belief that waiting\dash---using well-designed synchronizers\dash---essentially is the \emph{only} method of coping with the threat of metastability:
Whenever a signal is potentially metastable, e.g., when it is communicated across a clock boundary, its value is written to a synchronizer.
After a predefined time, the synchronizer output is assumed to have stabilized to logical $0$ or~$1$, and the computation is carried out in classical Boolean logic.
In essence, this approach trades synchronization delay for increased reliability;
it does, however, not provide deterministic guarantees.

\paragraph*{Relevance}

\ac{VLSI} circuits grow in complexity and operating frequency, leading to a growing number unsynchronized clock domains, technology becomes smaller, and the operating voltage is decreased to save power~\cite{itrs-13}.
These trends increase the risk of metastable upsets.
Treating these risks in the traditional way\dash---by adding synchronizer stages\dash---increases synchronization delays and thus is counterproductive w.r.t.\ the desire for faster systems.
Hence, we urgently need alternative techniques to reliably handle metastability in both mission-critical and day-to-day systems.

\paragraph*{Our Approach}

We challenge this point of view and exploit that \emph{logical masking} provides some leverage.
If, e.g., one input of a \gnand gate is stable~$0$, its output remains $1$ even if its other input is arbitrarily deteriorated.
This is owed to the way gates are implemented in \ac{CMOS} logic and to transistor behavior under intermediate input voltage levels.

We conclude that it is possible to \emph{contain} metastability to a limited part of the circuit instead of attempting to resolve, detect, or avoid it altogether.
Given Marino's result~\cite{m-gtmo-81}, this is surprising, but not a contradiction.
More concretely, we show that a variety of operations can be performed in the presence of a limited degree of metastability in the input, maintaining an according guarantee on the output.

As an example, recall that in \ac{BRGC} $x$ and $x + 1$ always only differ in exactly one bit;
each upcount flips one bit.
Suppose \acp{ADC} output \ac{BRGC} but, due to their analog input, a possibly metastable bit $u$ decides whether to output $x$ or $x + 1$.
As $x$ and $x + 1$ only differ in a single bit, this bit is the only one that may become metastable in an appropriate implementation (the \acp{CMUX} discussed in Section~\ref{sec:mux}).
Hence, all possible stabilizations are in $\{x, x + 1\}$, we refer to this as \emph{precision-$1$.}
Among other things, we show that it is possible to sort such inputs in a way that the output still has precision-$1$.

We assume worst-case metastability propagation and still are able to \emph{guarantee} correct results.
This opens up an alternative to the classic approach of \emph{postponing} the actual computation by first using synchronizers.
Advantages over synchronizers are:
\begin{enumerate}
\item
	No time is lost waiting for (possible) stabilization.
	This permits fast response times as, e.g., useful for high-frequency clock synchronization in hardware, see Section~\ref{sec:arithmetic}.
	Note that this removes synchronization delay from the list of fundamental limits to the operating frequency.

\item
	Correctness is guaranteed deterministically instead of probabilistically.

\item
	Stabilization can, but is not required to, happen ``during'' the computation, i.e., synchronization and calculation happen simultaneously.
\end{enumerate}

\paragraph*{Separation of Concerns}

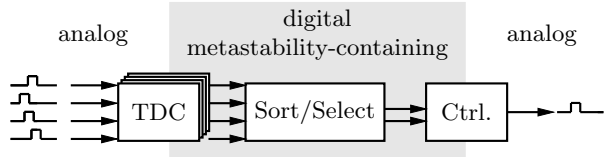
\begin{figure}
	\begin{center}
		{\small \begin{tikzpicture}[scale=2.54]
\ifx\dpiclw\undefined\newdimen\dpiclw\fi
\global\def\dpicdraw{\draw[line width=\dpiclw]}
\global\def\dpicstop{;}
\dpiclw=0.8bp
\dpiclw=0.8bp
\fill[fill=white](-0.269837,-0.719566) rectangle (0.14991,-0.419747)\dpicstop
\draw (-0.059964,-0.569656) node{TDC};
\dpicdraw[fill=white](-0.21844,-0.668168) rectangle (0.201307,-0.368349)\dpicstop
\dpicdraw[fill=white](-0.235572,-0.685301) rectangle (0.184175,-0.385482)\dpicstop
\dpicdraw[fill=white](-0.252705,-0.702433) rectangle (0.167042,-0.402614)\dpicstop
\dpicdraw[fill=white](-0.269837,-0.719566) rectangle (0.14991,-0.419747)\dpicstop
\draw (-0.059964,-0.569656) node{TDC};
\dpicdraw[fill=white](0.389765,-0.719566) rectangle (1.10933,-0.419747)\dpicstop
\draw (0.749548,-0.569656) node{Sort/Select};
\dpicdraw[fill=white](1.3252,-0.719566) rectangle (1.744947,-0.419747)\dpicstop
\draw (1.535073,-0.569656) node{Ctrl.};
\filldraw[line width=0bp](1.245248,-0.572987)
 --(1.3252,-0.552999)
 --(1.245248,-0.533012) --cycle\dpicstop
\dpicdraw (1.10933,-0.552999)
 --(1.306886,-0.552999)\dpicstop
\filldraw[line width=0bp](1.245248,-0.636283)
 --(1.3252,-0.616295)
 --(1.245248,-0.596307) --cycle\dpicstop
\dpicdraw (1.10933,-0.616295)
 --(1.306886,-0.616295)\dpicstop
\filldraw[line width=0bp](1.90485,-0.589644)
 --(1.984802,-0.569656)
 --(1.90485,-0.549668) --cycle\dpicstop
\dpicdraw (1.744947,-0.569656)
 --(1.966488,-0.569656)\dpicstop
\dpicdraw (-0.827261,-0.709572)
 --(-0.719326,-0.709572)\dpicstop
\dpicdraw (-0.719326,-0.709572)
 --(-0.719326,-0.661601)\dpicstop
\dpicdraw (-0.719326,-0.661601)
 --(-0.671355,-0.661601)\dpicstop
\dpicdraw (-0.671355,-0.661601)
 --(-0.671355,-0.709572)\dpicstop
\dpicdraw (-0.671355,-0.709572)
 --(-0.587405,-0.709572)\dpicstop
\filldraw[line width=0bp](-0.355545,-0.72956)
 --(-0.275594,-0.709572)
 --(-0.355545,-0.689584) --cycle\dpicstop
\dpicdraw (-0.515449,-0.709572)
 --(-0.293907,-0.709572)\dpicstop
\filldraw[line width=0bp](0.309813,-0.449729)
 --(0.389765,-0.429741)
 --(0.309813,-0.409753) --cycle\dpicstop
\dpicdraw (0.197881,-0.429741)
 --(0.371451,-0.429741)\dpicstop
\dpicdraw (-0.827261,-0.616295)
 --(-0.755304,-0.616295)\dpicstop
\dpicdraw (-0.755304,-0.616295)
 --(-0.755304,-0.568324)\dpicstop
\dpicdraw (-0.755304,-0.568324)
 --(-0.707333,-0.568324)\dpicstop
\dpicdraw (-0.707333,-0.568324)
 --(-0.707333,-0.616295)\dpicstop
\dpicdraw (-0.707333,-0.616295)
 --(-0.587405,-0.616295)\dpicstop
\filldraw[line width=0bp](-0.355545,-0.636283)
 --(-0.275594,-0.616295)
 --(-0.355545,-0.596307) --cycle\dpicstop
\dpicdraw (-0.515449,-0.616295)
 --(-0.293907,-0.616295)\dpicstop
\filldraw[line width=0bp](0.309813,-0.543006)
 --(0.389765,-0.523018)
 --(0.309813,-0.50303) --cycle\dpicstop
\dpicdraw (0.197881,-0.523018)
 --(0.371451,-0.523018)\dpicstop
\dpicdraw (-0.827261,-0.523018)
 --(-0.77929,-0.523018)\dpicstop
\dpicdraw (-0.77929,-0.523018)
 --(-0.77929,-0.475047)\dpicstop
\dpicdraw (-0.77929,-0.475047)
 --(-0.731319,-0.475047)\dpicstop
\dpicdraw (-0.731319,-0.475047)
 --(-0.731319,-0.523018)\dpicstop
\dpicdraw (-0.731319,-0.523018)
 --(-0.587405,-0.523018)\dpicstop
\filldraw[line width=0bp](-0.355545,-0.543006)
 --(-0.275594,-0.523018)
 --(-0.355545,-0.50303) --cycle\dpicstop
\dpicdraw (-0.515449,-0.523018)
 --(-0.293907,-0.523018)\dpicstop
\filldraw[line width=0bp](0.309813,-0.636283)
 --(0.389765,-0.616295)
 --(0.309813,-0.596307) --cycle\dpicstop
\dpicdraw (0.197881,-0.616295)
 --(0.371451,-0.616295)\dpicstop
\dpicdraw (-0.827261,-0.429741)
 --(-0.737315,-0.429741)\dpicstop
\dpicdraw (-0.737315,-0.429741)
 --(-0.737315,-0.38177)\dpicstop
\dpicdraw (-0.737315,-0.38177)
 --(-0.689344,-0.38177)\dpicstop
\dpicdraw (-0.689344,-0.38177)
 --(-0.689344,-0.429741)\dpicstop
\dpicdraw (-0.689344,-0.429741)
 --(-0.587405,-0.429741)\dpicstop
\filldraw[line width=0bp](-0.355545,-0.449729)
 --(-0.275594,-0.429741)
 --(-0.355545,-0.409753) --cycle\dpicstop
\dpicdraw (-0.515449,-0.429741)
 --(-0.293907,-0.429741)\dpicstop
\filldraw[line width=0bp](0.309813,-0.72956)
 --(0.389765,-0.709572)
 --(0.309813,-0.689584) --cycle\dpicstop
\dpicdraw (0.197881,-0.709572)
 --(0.371451,-0.709572)\dpicstop
\dpicdraw (2.00375,-0.569656)
 --(2.075707,-0.569656)\dpicstop
\dpicdraw (2.075707,-0.569656)
 --(2.075707,-0.521685)\dpicstop
\dpicdraw (2.075707,-0.521685)
 --(2.123678,-0.521685)\dpicstop
\dpicdraw (2.123678,-0.521685)
 --(2.123678,-0.569656)\dpicstop
\dpicdraw (2.123678,-0.569656)
 --(2.243606,-0.569656)\dpicstop
\draw (-0.399759,-0.166566) node{analog};
\draw (0.76254,-0.166566) node{\shortstack{digital\\%
metastability-containing}};
\draw (1.924838,-0.166566) node{analog};
\dpicdraw[fill=white,draw=white](-0.683787,-0.809511) rectangle (2.208867,0)\dpicstop
\dpicdraw[fill=black!10,draw=black!10](1.525079,-0.809511) rectangle (0,0)\dpicstop
\fill[fill=white](-0.269837,-0.719566) rectangle (0.14991,-0.419747)\dpicstop
\draw (-0.059964,-0.569656) node{TDC};
\dpicdraw[fill=white](-0.21844,-0.668168) rectangle (0.201307,-0.368349)\dpicstop
\dpicdraw[fill=white](-0.235572,-0.685301) rectangle (0.184175,-0.385482)\dpicstop
\dpicdraw[fill=white](-0.252705,-0.702433) rectangle (0.167042,-0.402614)\dpicstop
\dpicdraw[fill=white](-0.269837,-0.719566) rectangle (0.14991,-0.419747)\dpicstop
\draw (-0.059964,-0.569656) node{TDC};
\dpicdraw[fill=white](0.389765,-0.719566) rectangle (1.10933,-0.419747)\dpicstop
\draw (0.749548,-0.569656) node{Sort/Select};
\dpicdraw[fill=white](1.3252,-0.719566) rectangle (1.744947,-0.419747)\dpicstop
\draw (1.535073,-0.569656) node{Ctrl.};
\filldraw[line width=0bp](1.245248,-0.572987)
 --(1.3252,-0.552999)
 --(1.245248,-0.533012) --cycle\dpicstop
\dpicdraw (1.10933,-0.552999)
 --(1.306886,-0.552999)\dpicstop
\filldraw[line width=0bp](1.245248,-0.636283)
 --(1.3252,-0.616295)
 --(1.245248,-0.596307) --cycle\dpicstop
\dpicdraw (1.10933,-0.616295)
 --(1.306886,-0.616295)\dpicstop
\filldraw[line width=0bp](1.90485,-0.589644)
 --(1.984802,-0.569656)
 --(1.90485,-0.549668) --cycle\dpicstop
\dpicdraw (1.744947,-0.569656)
 --(1.966488,-0.569656)\dpicstop
\dpicdraw (-0.827261,-0.709572)
 --(-0.719326,-0.709572)\dpicstop
\dpicdraw (-0.719326,-0.709572)
 --(-0.719326,-0.661601)\dpicstop
\dpicdraw (-0.719326,-0.661601)
 --(-0.671355,-0.661601)\dpicstop
\dpicdraw (-0.671355,-0.661601)
 --(-0.671355,-0.709572)\dpicstop
\dpicdraw (-0.671355,-0.709572)
 --(-0.587405,-0.709572)\dpicstop
\filldraw[line width=0bp](-0.355545,-0.72956)
 --(-0.275594,-0.709572)
 --(-0.355545,-0.689584) --cycle\dpicstop
\dpicdraw (-0.515449,-0.709572)
 --(-0.293907,-0.709572)\dpicstop
\filldraw[line width=0bp](0.309813,-0.449729)
 --(0.389765,-0.429741)
 --(0.309813,-0.409753) --cycle\dpicstop
\dpicdraw (0.197881,-0.429741)
 --(0.371451,-0.429741)\dpicstop
\dpicdraw (-0.827261,-0.616295)
 --(-0.755304,-0.616295)\dpicstop
\dpicdraw (-0.755304,-0.616295)
 --(-0.755304,-0.568324)\dpicstop
\dpicdraw (-0.755304,-0.568324)
 --(-0.707333,-0.568324)\dpicstop
\dpicdraw (-0.707333,-0.568324)
 --(-0.707333,-0.616295)\dpicstop
\dpicdraw (-0.707333,-0.616295)
 --(-0.587405,-0.616295)\dpicstop
\filldraw[line width=0bp](-0.355545,-0.636283)
 --(-0.275594,-0.616295)
 --(-0.355545,-0.596307) --cycle\dpicstop
\dpicdraw (-0.515449,-0.616295)
 --(-0.293907,-0.616295)\dpicstop
\filldraw[line width=0bp](0.309813,-0.543006)
 --(0.389765,-0.523018)
 --(0.309813,-0.50303) --cycle\dpicstop
\dpicdraw (0.197881,-0.523018)
 --(0.371451,-0.523018)\dpicstop
\dpicdraw (-0.827261,-0.523018)
 --(-0.77929,-0.523018)\dpicstop
\dpicdraw (-0.77929,-0.523018)
 --(-0.77929,-0.475047)\dpicstop
\dpicdraw (-0.77929,-0.475047)
 --(-0.731319,-0.475047)\dpicstop
\dpicdraw (-0.731319,-0.475047)
 --(-0.731319,-0.523018)\dpicstop
\dpicdraw (-0.731319,-0.523018)
 --(-0.587405,-0.523018)\dpicstop
\filldraw[line width=0bp](-0.355545,-0.543006)
 --(-0.275594,-0.523018)
 --(-0.355545,-0.50303) --cycle\dpicstop
\dpicdraw (-0.515449,-0.523018)
 --(-0.293907,-0.523018)\dpicstop
\filldraw[line width=0bp](0.309813,-0.636283)
 --(0.389765,-0.616295)
 --(0.309813,-0.596307) --cycle\dpicstop
\dpicdraw (0.197881,-0.616295)
 --(0.371451,-0.616295)\dpicstop
\dpicdraw (-0.827261,-0.429741)
 --(-0.737315,-0.429741)\dpicstop
\dpicdraw (-0.737315,-0.429741)
 --(-0.737315,-0.38177)\dpicstop
\dpicdraw (-0.737315,-0.38177)
 --(-0.689344,-0.38177)\dpicstop
\dpicdraw (-0.689344,-0.38177)
 --(-0.689344,-0.429741)\dpicstop
\dpicdraw (-0.689344,-0.429741)
 --(-0.587405,-0.429741)\dpicstop
\filldraw[line width=0bp](-0.355545,-0.449729)
 --(-0.275594,-0.429741)
 --(-0.355545,-0.409753) --cycle\dpicstop
\dpicdraw (-0.515449,-0.429741)
 --(-0.293907,-0.429741)\dpicstop
\filldraw[line width=0bp](0.309813,-0.72956)
 --(0.389765,-0.709572)
 --(0.309813,-0.689584) --cycle\dpicstop
\dpicdraw (0.197881,-0.709572)
 --(0.371451,-0.709572)\dpicstop
\dpicdraw (2.00375,-0.569656)
 --(2.075707,-0.569656)\dpicstop
\dpicdraw (2.075707,-0.569656)
 --(2.075707,-0.521685)\dpicstop
\dpicdraw (2.075707,-0.521685)
 --(2.123678,-0.521685)\dpicstop
\dpicdraw (2.123678,-0.521685)
 --(2.123678,-0.569656)\dpicstop
\dpicdraw (2.123678,-0.569656)
 --(2.243606,-0.569656)\dpicstop
\draw (-0.399759,-0.166566) node{analog};
\draw (0.76254,-0.166566) node{\shortstack{digital\\%
metastability-containing}};
\draw (1.924838,-0.166566) node{analog};
\end{tikzpicture}}
	\end{center}
	\caption{%
		The separation of concerns (analog -- digital metastability-containing -- analog) for fault-tolerant clock synchronization in hardware.}
	\label{fig:clocksync}
\end{figure}

Clearly, the impossibility of resolving metastability still holds;
metastability may still occur, even if it is contained.
Hence, a \emph{separation of concerns,} compare Figure~\ref{fig:clocksync}, is key to our approach.

For the purpose of illustration, consider a hardware clock-synchronization algorithm, we discuss this in Section~\ref{sec:arithmetic}.
We start in the \emph{analog} world: nodes generate clock pulses.
Each node measures the time differences between its own and all other nodes' pulses using \acp{TDC}.
Since this involves entering the \emph{digital} world, metastability in the measurements is unavoidable~\cite{m-gtmo-81}.
The traditional approach is to hold the \ac{TDC} outputs in synchronizers, spending time and thus imposing a limit on the operating frequency.
But as discussed above, it is possible to limit the metastability of each measurement to at most one bit in \acs{BRGC}-encoded numbers, where the metastable bit represents the ``uncertainty between $x$ and $x + 1$ clock ticks,'' i.e., precision-$1$.

We apply \emph{metastability-containing} components to digitally process these inputs to derive digital correction parameters for the node's oscillator.
These parameters contain at most one metastable bit, as above accounting for precision-$1$.
We convert them to an \emph{analog} control signal for the oscillator.
This way, the metastability translates to a small frequency offset within the uncertainty from the initial \ac{TDC} measurements.

In short, metastability is introduced at the \acp{TDC}, \emph{deterministically} contained in the digital subcircuit, and ultimately absorbed by the analog control signal.

\subsection{Our Contribution}
\label{sec:introduction-contribution}

In Section~\ref{sec:model}, we present a rigorous time-discrete value-discrete model for metastability in clocked as well as in purely combinational digital circuits.
We consider two types of registers: simple (standard) registers that do not provide any guarantees regarding metastability and masking registers that can ``hide'' internal metastability to some degree using high- or low-threshold inverters.
The propagation of metastability is modeled in a worst-case fashion and metastable registers may or may not stabilize to $0$ or~$1$.
Hence, the resulting model thus allows us to derive deterministic guarantees concerning circuit behavior under metastable inputs.

We consider the model that allows a novel and fundamentally different worst-case treatment of metastability our main contribution.
Accordingly, we are obligated to demonstrate that the model is not too pessimistic, i.e., that it allows non-trivial positive results.
We do this in Section~\ref{sec:mux}, where we develop \acp{CMUX}, these also serve as an example for the concept of metastability-containment as a whole.
At the same time, we are obligated to verify that it properly reflects the physical behavior of digital circuits, i.e., that it is sufficiently pessimistic.
We establish some basic properties in Section~\ref{sec:basics} and continue with a reality check in Section~\ref{sec:realitycheck}, showing that the physical impossibility of avoiding, resolving, or detecting metastability~\cite{m-gtmo-81} holds in our model.

Having established some confidence that our model properly reflects the physical world and allows reasoning about circuit design, we turn our attention to the question of computability.
In Section~\ref{sec:hierarchy}, we analyze what functions are computable by circuits w.r.t.\ the available register types and the number of clock cycles.
Let $\Fun_M^r$ denote the class of functions that can be implemented by an arbitrary circuit in $r$ clock cycles;\footnote{
	The $M$ indicates that the circuit may comprise masking registers.
} analogously, let $\Fun_S^r$ denote the class of functions implementable in $r$ clock cycles of circuits that can only use simple registers.
We show that the number of clock cycles is irrelevant for combinational and simple circuits:
\begin{equation}
	\cdots = \Fun_S^2 = \Fun_S^1 = \Fun_M^1 \subsetneq \Fun_M^2 \subsetneq \cdots\,.
\end{equation}
The collapse of the hierarchy $\Fun_S^r$ reflects the intuition from electrical engineering that synchronous Boolean circuits can be unrolled.
In the presence masking registers, however, unrolling does not yield equivalent circuits and we obtain a strict inclusion.

In Section~\ref{sec:simple}, we move on to demonstrating that even with simple registers, non-trivial functions can be computed in the face of worst-case propagation of metastability.
To this end, we fully classify $\Fun_S$.
Furthermore, we establish the \emph{metastable closure,} the strictest possible extension of a function specification that allows it to be computed by a combinational or simple circuit.
Our classification provides an extremely simple test deciding whether a desired specification can be implemented.

Finally, we apply our techniques to show that an advanced, useful circuit is in reach.
We show in Section~\ref{sec:arithmetic} that all operations required by the widely used~\cite{bbb+-fcp-03,kb-tta-03} fault-tolerant clock synchronization algorithm of Lundelius Welch and Lynch~\cite{ll-ftacs-88}\dash---$\max$ and $\min$, sorting, and conversion between \ac{TC} and \ac{BRGC}\dash---can be performed in a metastability-containing manner.
Employing the abovementioned separation of concerns, a hardware implementation of the entire algorithm is within reach, providing the deterministic guarantee that the algorithm works correctly at all times, despite metastable upsets originating in the \acp{TDC} and without synchronizers.

As a consequence, we show that
\begin{enumerate*}
\item
	synchronization delay poses no fundamental limit on the operating frequency of clock synchronization in hardware and that

\item
	clock domains can be synchronized without synchronizers.
\end{enumerate*}
The latter shows that we may eliminate communication across unsynchronized clock domains as a source of metastable upsets altogether.

\subsection{Related Work}
\label{sec:introduction-rw}

\paragraph*{Metastability}

The phenomenon of metastable signals in fact has been studied for decades~\cite{k-sads-08} with the following key results.
\begin{enumerate*}
\item
	No physical implementation of a digital circuit can reliably avoid, resolve, or detect metastability;
	any digital circuit, including ``detectors,'' producing different outputs for different input signals can be forced into metastability~\cite{m-gtmo-81}.

\item
	The probability of an individual event generating metastability can be kept low.
	Large transistor counts and high operational frequencies, low supply voltages, temperature effects, and changes in technology, however, disallow to neglect the problem~\cite{bgccz-mcsm-13}.

\item
	Being an unstable equilibrium, the probability that, e.g., a memory cell remains in a metastable state decreases exponentially over time~\cite{k-sads-08,kby-scp-02,v-bffspfr-80}.
	Thus, waiting for a sufficiently long time reduces the probability of sustained metastability to within acceptable bounds.
\end{enumerate*}

\paragraph*{Synchronizers}

The predominant technique to cope with metastable upsets is to use synchronizers~\cite{bg-ewbys-15,bgccz-mcsm-13,bgpdk-ds-10,g-fwfys-03,k-sads-08,kby-scp-02}.
Synchronizers are carefully designed~\cite{bg-ewbys-15,g-fwfys-03} bistable storage elements that hold potentially metastable signals, e.g., after communicating them across a clock boundary.
After a predefined time, the synchronizer output is assumed to have stabilized to logical $0$ or $1$ and the computation is carried out in classical Boolean logic.
In essence, this approach trades delay for increased reliability, typically expressed as \ac{MTBF}
\begin{equation}\label{eq:mtbf}
	\text{\acs{MTBF}} = \frac{e^{t / \tau}}{T_W F_C F_D},
\end{equation}
where $F_C$ and $F_D$ are the clock and data transition frequencies, $\tau$~and $T_W$ are technology-dependent values, and $t$ is the predetermined time allotted for synchronization~\cite{bg-ewbys-15,bgccz-mcsm-13,bgpdk-ds-10,g-fwfys-03,k-sads-08,kby-scp-02,ty-rtlmhcdcl-12,tym-eslusl-14}.
Synchronizers, however, do not provide deterministic guarantees and avoiding synchronization delay is an important issue~\cite{ty-rtlmhcdcl-12,tym-eslusl-14}.

\paragraph*{Glitch/Hazard Propagation}

Logically masking metastability is related to glitch-free and hazard-free circuits.

Metastability-containing circuits are related to glitch-free/hazard-free circuits, which have been extensively studied since Huffman~\cite{h-duhfsn-57} and Unger~\cite{u-hdassc-59} introduced them.
Eichelberger~\cite{e-hdcssc-65} extended these results to multiple switching inputs and dynamic hazards, Brzozowski and Yoeli extended the simulation algorithm~\cite{by-tmgn-79}, Brzozowski et~al.\ surveyed techniques using higher-valued logics~\cite{bei-ahd-01} such as Kleene's $3$-valued extension of Boolean logic, and Mendler et~al.\ studied delay requirements needed to achieve consistency with simulated results~\cite{msb-cbcetts-12}.

While we too resort to Kleene's $3$-valued to model metastability, there are differences to the classical work on hazard-tolerant circuits:
\begin{enumerate*}
\item
	A common assumption in hazard detection is that inputs only perform well-defined, clean transitions, i.e., the assumption of a hazard-free input-generating circuitry is made.
	This is the key difference to metastability-containment:
	Metastability encompasses much more than inputs that are in the process of switching;
	metastable signals may or may not be in the process of completing a transition, may be oscillating, and may get ``stuck'' at an intermediate voltage.

\item
	Another common assumption in hazard detection is that circuits have a constant delay.
	This is no longer the case in the presence of metastability;
	unless metastability is properly masked, circuit delays can deteriorate in the presence of metastable input signals, even if the circuit eventually generates a stable output~\cite{fk-emcm-17}.
	This can cause late transitions that potentially drive further registers into metastability.

\item
	Glitch-freedom is no requirement for metastability-containment.

\item
	When studying synthesis, we allow for specifications where outputs may contain metastable bits.
	This is necessary for non-trivial specifications in the presence of metastable inputs~\cite{m-gtmo-81}.

\item
	We allow a circuit to compute a function in multiple clock cycles.

\item
	Circuits may comprise masking registers~\cite{k-sads-08}.
\end{enumerate*}

\paragraph*{OR Causality}

The work on weak~(OR) causality in asynchronous circuits~\cite{ykkl-orcmhi-94} studies the computation of functions under availability of only a proper subset of its parameters.
As an example, consider a Boolean function $f(x,y)$, where $f(0,0) = f(0,1)$.
An early-deciding asynchronous module may set its output as soon as $x = 0$ arrives at its input, disregarding the value of~$y$.
Early-deciding circuits, however, differ from our work because they are neither clocked synchronous designs nor do they necessarily operate correctly in presence of metastable input bits: $f(0,\meta) = f(0,0) = f(0,1)$ does not necessarily hold.

\paragraph*{Speculative Computing}

To the best of our knowledge, the most closely related work is that by Tarawneh et~al.\ on speculative computing~\cite{ty-rtlmhcdcl-12,tym-eslusl-14}.
The idea is the following:
When computing $f(x,y)$ in presence of a potentially metastable input bit~$x$,
\begin{enumerate*}
\item
	speculatively compute both $f(0,y)$ and $f(1,y)$,

\item
	in parallel, store the input bit $x$ in a synchronizer for a predefined time that provides a sufficiently large probability of resolving metastability of~$x$, and

\item
	use $x$ to select whether to output $f(0,y)$ or $f(1,y)$.
\end{enumerate*}
This hides (part of) the delay needed to synchronize~$x$.

Like our approach, speculative computations allow for an overlap of synchronization and computation time.
The key differences are:
\begin{enumerate*}
\item
	Relying on synchronizers, speculative computing incurs a non-zero probability of failure;
	metastability-containment insists on deterministic guarantees.

\item
	In speculative computing, the set of potentially metastable bits $X$ must be known in advance.
	Regardless of the considered function, the complexity of a speculative circuit grows exponentially in~$|X|$.
	Neither is the case for metastability-containment, as illustrated by several circuits~\cite{blm-nomcsn-17,fklp-mametdc-17,lm-emcgc2s-16,tfl-mtc-17}.

\item
	Our model is rooted in an extension of Boolean logic, i.e., uses a different function space.
	Hence, we face the question of computability of such functions by digital circuits;
	this question does not apply to speculative computing as it uses traditional Boolean functions.
\end{enumerate*}

\paragraph*{Metastability-Containing Circuits}

Many of the proposed techniques have been successfully employed to obtain metastability-aware \acp{TDC}~\cite{fklp-mametdc-17}, metastability-containing \ac{BRGC} sorting networks~\cite{blm-nomcsn-17,lm-emcgc2s-16}, \acp{CMUX}~\cite{fk-emcm-17}, and metastability-tolerant network-on-chip routers~\cite{tfl-mtc-17}.
Simulations verify the positive impact of metastability-containing techniques~\cite{blm-nomcsn-17,fk-emcm-17,tfl-mtc-17}.
Most of these works channel efforts towards metastability-containing \ac{FPGA} and \ac{ASIC} implementations of fault-tolerant distributed clock synchronization;
this paper establishes that all required components are within reach.

\subsection{Notation}
\label{sec:introduction-notation}

$\N_0$ and $\N$ denote the natural numbers with and without~$0$.
We abbreviate $[k] := \{ \ell \in \N_0 \mid \ell < k \}$ for $k \in \N_0$.
Tuples $a, b$ are concatenated by $a \circ b$, and given a set~$S$, $\Pow(S) := \{ S' \subseteq S \}$ is its power set.

\section{Model of Computation}
\label{sec:model}

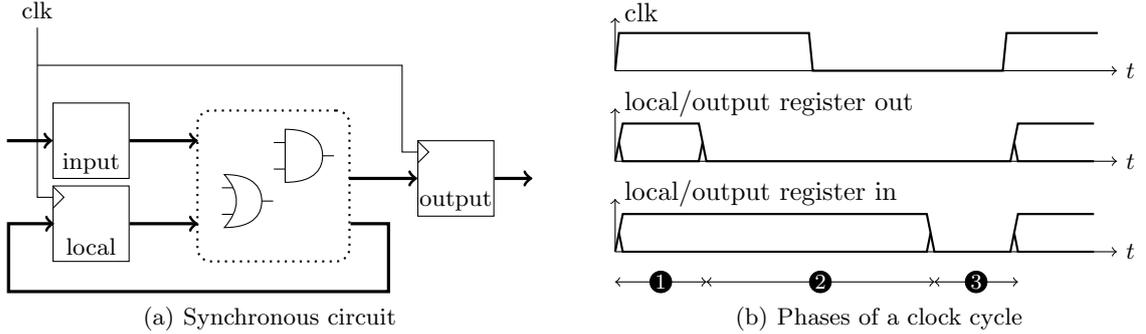
\begin{figure}
\begin{center}
\hfill\subfigure[Synchronous circuit]{
\begin{tikzpicture}[scale=1,transform shape,cktbaselength=0.5pt]

  \path (-1,-0.1) node[draw,rectangle,minimum width=1cm,minimum height=1cm] (loc) {} -- ++(0,-0.3)
    node {\small local};
  \draw (loc.west)++(0,0.2) -- ++(0.15,0.15) -- ++(-0.15,0.15);

  \path (-1,1) node[draw,rectangle,minimum width=1cm,minimum height=1cm] (in) {} -- ++(0,-0.3)
    node {\small input};

  \draw (1.0,0.2) node[or2,draw,rotate=90] (o1) {};
  \draw (1.8,0.8) node[and2,draw,rotate=90] (o2) {};
  \draw (o1.a) -- ++(-0.15,0);
  \draw (o1.b) -- ++(-0.15,0);
  \draw (o1.z) -- ++(0.15,0);
  \draw (o2.a) -- ++(-0.15,0);
  \draw (o2.b) -- ++(-0.15,0);
  \draw (o2.z) -- ++(0.15,0);
  \draw (1.4,0.4) node[draw,thick,rounded corners=5pt,rectangle,dotted,minimum width=2cm,minimum height=2cm] (logic) {};

  \path (3.8,0.5) node[draw,rectangle,minimum width=1cm,minimum height=1cm] (out) {} -- ++(0,-0.3)
    node {\small output};
  \draw (out.west)++(0,0.2) -- ++(0.15,0.15) -- ++(-0.15,0.15);

  \draw[very thick,->] (in.east) -- ++(0.9,0);
  \draw[very thick,->] (in.west)++(-0.6,0) -- ++(0.6,0);
  \draw[very thick,->] (loc.east) -- ++(0.9,0);
  \draw[very thick,->] (out.west)++(-0.9,0) -- ++(0.9,0);
  \draw[very thick,->] (out.east) -- ++(0.5,0);
  \draw[very thick,->] (logic.east)++(0,-0.5) -- ++(0.5,0) -- ++(0,-0.9) -- ++(-5.0,0) -- ++(0,0.9) -- ++(0.6,0);

  \draw[-,thin] (loc.west)++(-0.2,2.6) node[above] {\small clk} -- ++(0,-2.25) -- ++(0.2,0);
  \draw[-,thin] (out.west)++(-5,1.5) -- ++(4.8,0) -- ++(0,-1.15) -- ++(0.2,0);

\end{tikzpicture}
\label{fig:EE.a}}
\hfill\subfigure[Phases of a clock cycle]{
\begin{tikzpicture}[scale=1,transform shape,cktbaselength=0.5pt]
  \draw[->] (0,1.2) -- ++(6.6,0) node[right] {\small $t$};
  \draw[->] (0,1.2) -- ++(0,0.7) node[right,yshift=2pt] {clk};

  \draw[->] (0,0) -- ++(6.6,0) node[right] {\small $t$};
  \draw[->] (0,0) -- ++(0,0.7) node[right,yshift=2pt] {local/output register out};

  \draw[->] (0,-1.2) -- ++(6.6,0) node[right] {\small $t$};
  \draw[->] (0,-1.2) -- ++(0,0.7) node[right,yshift=2pt] {local/output register in};

  \draw[-,thick] (0,1.2) --
  ++(0.05,0.5) -- ++(2.5,0) --
  ++(0.05,-0.5) -- ++(2.5,0) --
  ++(0.05,0.5) -- ++(1.2,0);

  \draw[-,thick] (0,0.0) --
  ++(0.1,0.5) -- ++(1,0) --
  ++(0.1,-0.5) -- ++(4,0) --
  ++(0.1,0.5) -- ++(1,0);

  \draw[-,thick] (0.05,0.25) --
  ++(0.05,-0.25) -- ++(1,0) --
  ++(0.05,0.25);

  \draw[-,thick] (5.25,0.25) --
  ++(0.05,-0.25) -- ++(1,0);

  \draw[-,thick] (0,-1.2) --
  ++(0.1,0.5) -- ++(4,0) --
  ++(0.1,-0.5) -- ++(1,0) --
  ++(0.1,0.5) -- ++(1,0);

  \draw[-,thick] (0.05,-0.95) --
  ++(0.05,-0.25) -- ++(4,0) --
  ++(0.05,0.25);

  \draw[-,thick] (5.25,-0.95) --
  ++(0.05,-0.25) -- ++(1,0);

  \draw[<->] (0,-1.6) -- ++(1.2,0) node[midway,draw,fill=black,circle,text=white,inner sep=0.5pt]{\footnotesize 1};
  \draw[<->] (1.2,-1.6) -- ++(3.0,0) node[midway,draw,fill=black,circle,text=white,inner sep=0.5pt]{\footnotesize 2};
  \draw[<->] (4.2,-1.6) -- ++(1.1,0) node[midway,draw,fill=black,circle,text=white,inner sep=0.5pt]{\footnotesize 3};

\end{tikzpicture}
\label{fig:EE.b}}\hfill
\end{center}
\caption{Generic synchronous state machine design in \subref{fig:EE.a}. The input register is initially prefilled.
  Local and output registers are updated at each rising clock transition. The circuit behavior over time is depicted in \subref{fig:EE.b}.
  The three phases of a clock cycle are shown:
  (1)~register output stabilization, (2)~propagation of outputs through combinational logic to register inputs, and~(3)
  stable register inputs.}
\label{fig:EEcircuit}
\end{figure}

We propose a time-discrete and value-discrete model in which registers can become metastable and their resulting output signals deteriorated.
The model supports synchronous, clocked circuits composed of registers and combinational logic and purely combinational circuits.
Specifically, we study the generic synchronous state-machine design depicted in Figure~\ref{fig:EEcircuit}.
Data is initially written into input registers.
At each rising clock transition, local and output registers update their state according to the circuit's combinational logic.
Figure~\ref{fig:EE.b} shows the circuit's behavior over time:
\begin{enumerate*}
\item\label{enum:phase1}
	During the first phase, the output of the recently updated local and output
  registers stabilizes.
	This is accounted for by the \emph{clock-to-output} time that can be bounded, except for the case of a metastable register.
In this case, no deterministic upper bound exists.
\item\label{enum:phase2} During phase two, the stable register output propagates through the combinational logic to the register inputs.
Its duration can be upper-bounded by the worst-case propagation delay through the combinational part.
\item\label{enum:phase3} In the third phase, the register inputs are stable, ready to be read (sampled), and result in updated local and output register states.
The duration of this phase is chosen such that it can account for potential delays in phase~\ref{enum:phase1};
this can mitigate some metastable upsets.
If the stabilization in phase~\ref{enum:phase1}, however, also exceeds the additional time in phase~\ref{enum:phase3}, a register may read an unstable input value,
  potentially resulting in a metastable register.
\end{enumerate*}

As motivated, metastable registers output an undefined, arbitrarily deteriorated signal. Deteriorated can mean any constant voltage between logical $0$ and logical~$1$, arbitrary signal behavior over time, oscillations, or simply violated timing constraints, such as late signal transitions.
Furthermore, deteriorated signals can cause registers to become metastable, e.g., due to violated constraints regarding timing or input voltage.
Knowing full well that metastability is a state of a \emph{bistable} element and not a signal value or voltage, we still need to talk about the ``deterioration caused by or potentially causing metastability in a register'' in \emph{signals.}
For the sake of presentation\dash---and as these effects are causally linked\dash---we refer to both phenomena using the term metastability without making the distinction explicit.

Our model uses Kleene's $3$-valued logic, a ternary extension of binary logic;
the third value appropriately expresses the uncertainty about gate behavior in the presence of metastability.
In the absence of metastability, our model behaves like a traditional, deterministic, binary circuit model.
In order to obtain \emph{deterministic} guarantees, we assume worst-case propagation of metastability:
If a signal can be ``infected'' by metastability, there is no way to prevent that.

Section~\ref{sec:mux} demonstrates our model using \acp{CMUX}, and Section~\ref{sec:realitycheck} ensures that it is not ``too optimistic'' by proving that it reproduces well-known impossibility results.
Concretely, we show that for circuits in our model avoiding, detecting, and resolving metastability is impossible, just as in physical circuits~\cite{m-gtmo-81}.
Clearly, this obliges us to provide evidence that our model has practical relevance, i.e., that it is indeed possible to perform meaningful computations.
Surprisingly, the classification derived in Section~\ref{sec:simple} entails that many interesting functions can be implemented by circuits, which is discussed in Section~\ref{sec:arithmetic}.

In our model circuits are synchronous state machines:
Combinational logic, represented by gates, maps a circuit state to possible successor states.
The combinational logic uses, and registers store, \emph{signal values} $\BM := \{0, 1, \meta\}$.
\meta~represents a \emph{metastable} signal, the only source of non-determinism.
The classical \emph{stable} Boolean signal values are $\B := \{0, 1\}$.
Let $x \in \BM^k$ be a $k$-bit tuple.
Stored in registers over time, the metastable bits may resolve to $0$ or~$1$.
The set of partial resolutions of $x$ is $\ResM(x)$, and the set of metastability-free, i.e., completely stabilized, resolutions is $\Res(x)$.
If $m$ bits in $x$ are metastable, $|\ResM(x)| = 3^m$ and $|\Res(x)| = 2^m$, since \meta serves as ``wildcard'' for $\BM$ and~$\B$, respectively.
Formally,
\begin{align}
	\ResM(x)
		&:= \left\{ y \in \BM^k \mid \forall i \in [k]\colon x_i = y_i \lor x_i = \meta \right\}, \\
	\Res(x)
		&:= \ResM(x) \cap \B^k. \label{eq:resm}
\end{align}

\subsection{Registers}
\label{sec:model-registers}

\begin{figure}
	\hfill\subfigure[simple]{
		\def\svgwidth{.25\linewidth}
		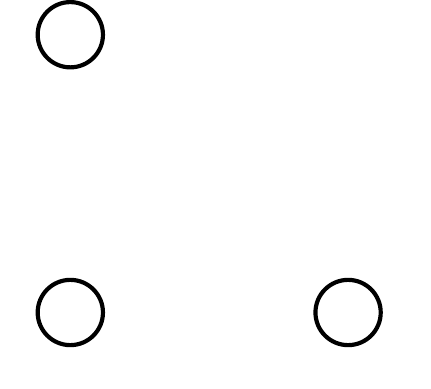
		\label{fig:register-states-simple}
	}\hfill
	\subfigure[mask-$0$]{
		\def\svgwidth{.25\linewidth}
		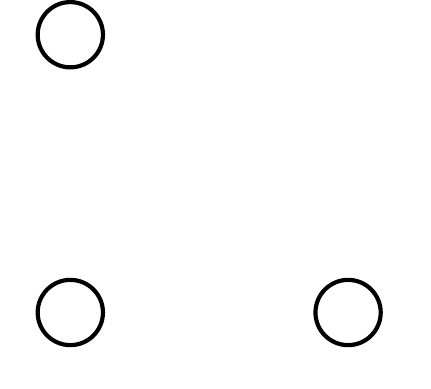
		\label{fig:register-states-mask0}
	}\hfill
	\subfigure[mask-$1$]{
		\def\svgwidth{.25\linewidth}
		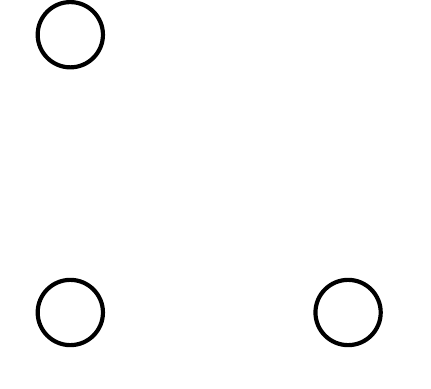
		\label{fig:register-states-mask1}
	}\hfill{}
	\caption{%
		Registers as non-deterministic state machines;
		state transitions represent reads and are associated with an output.
		As we propose a worst-case model, the dashed state transitions can be left out.
	}
	\label{fig:register-states}
\end{figure}

We consider three types of single-bit registers, all of which behave just like in binary circuit models unless metastability occurs:
\begin{enumerate*}
\item
	\emph{simple} registers which are oblivious to metastability, and

\item
	registers that mask an internal metastable state to an output of~$1$ \emph{(mask-$1$)} or

\item
	to~$0$ \emph{(mask-$0$)}.
\end{enumerate*}
Physical realizations of masking registers are obtained by flip-flops with high- or low-threshold inverters at the output, amplifying an internal metastable signal to $1$ or~$0$; see, e.g., Section~3.1 on metastability filters in~\cite{k-sads-08}.
A register $R$ has a \emph{type} (simple, mask-$0$, or mask-$1$) and a state $x_R \in \BM$.
$R$~behaves according to $x_R$ and its type's non-deterministic state machine in Figure~\ref{fig:register-states}.
Each clock cycle, $R$~performs one state transition annotated with some $o_R \in \BM$, which is the result of sampling $R$ at that clock cycle's rising clock flank.
This happens exactly once per clock cycle in our model and we refer to it as \emph{reading}~$R$.
The state transitions are not caused by sampling $R$ but account for the possible resolution of metastability during the preceeding clock cycle.

Consider a simple register in Figure~\ref{fig:register-states-simple}.
When in state~$0$, its output and successor state are both~$0$; it behaves symmetrically in state~$1$.
In state~\meta, however, any output in $\BM$ combined with any successor state in $\BM$ is possible.

Since our goal is to design circuits that operate correctly under metastability even if it never resolves, we make two pessimistic simplifications:
\begin{enumerate*}
\item
	If there are three parallel state transitions from state $x$ to $x'$ with outputs $0$,\,$1$,\,\meta, we only keep the one with output~\meta, and

\item
	if, for some fixed output $o \in \BM$, there are state transitions from a state $x$ to multiple states including~\meta, we only keep the one with successor state~\meta.
\end{enumerate*}
This simplification is obtained by ignoring the dashed state transitions in Figure~\ref{fig:register-states}, and we maintain it throughout the paper.
Observe that the dashed lines are a remnant of the highly non-deterministic ``anything can happen'' behavior in the physical world;
if one is pessimistic about the behavior, however, one obtains the proposed simplification that ignores the dashed state transitions.

The mask-$b$ registers, $b \in \B$, shown in Figures~\ref{fig:register-states-mask0} and~\ref{fig:register-states-mask1}, exhibit the following behavior:
As long as their state remains~\meta, they output $b \neq \meta$;
only when their state changes from \meta to $1 - b$ they output \meta once, after that they are stable.

\subsection{Gates}
\label{sec:model-gates}

We model the behavior of combinational gates in the presence of metastability.
A \emph{gate} is defined by $k \in \N_0$ input ports, one output port\dash---gates with $k \geq 2$ distinct output ports are represented by $k$ single-output gates\dash---and a Boolean function $f\colon \B^k \to \B$.
We generalize $f$ to $f_\meta\colon \BM^k \to \BM$ as follows.
Each metastable input can be perceived as~$0$, as~$1$, or as metastable superposition~\meta.
Hence, to determine $f_\meta(x)$, consider $O := \{ f(x') \mid x' \in \Res(x) \}$, the set of possible outputs of $f$ after $x$ fully stabilized.
If there is only a single possible output, i.e., $O = \{b\}$ for some $b \in \B$, the metastable bits in $x$ have no influence on $f(x)$ and we set $f_\meta(x) := b$.
Otherwise, $O = \B$, i.e., the metastable bits can change $f(x)$, and we set $f_\meta(x) := \meta$.
Observe that this is equivalent to Kleene's $3$-valued logic and that $f_\meta(x) = f(x)$ for all $x \in \B^k$.

\begin{table}[ht]
	\hfill
	\begin{tabular}{c||c|c}
		$f^{\gand}$ & $0$ & $1$ \\
		\hline
		\hline
		$0$         & $0$ & $0$ \\
		\hline
		$1$         & $0$ & $1$
	\end{tabular}
	\hfill
	\begin{tabular}{c||c|c|c}
		$f_\meta^{\gand}$ & $0$ & $1$ & $\meta$ \\
		\hline
		\hline
		$0$     & $0$ &     $0$ &     $0$ \\
		\hline
		$1$     & $0$ &     $1$ & $\meta$ \\
		\hline
		$\meta$ & $0$ & $\meta$ & $\meta$
	\end{tabular}
	\hfill
	\begin{tabular}{c||c|c}
		$f^{\gor}$ & $0$ & $1$ \\
		\hline
		\hline
		$0$         & $0$ & $1$ \\
		\hline
		$1$         & $1$ & $1$
	\end{tabular}
	\hfill
	\begin{tabular}{c||c|c|c}
		$f_\meta^{\gor}$ & $0$ & $1$ & $\meta$ \\
		\hline
		\hline
		$0$     & $0$ &     $1$ & $\meta$ \\
		\hline
		$1$     & $1$ &     $1$ &     $1$ \\
		\hline
		$\meta$ & $\meta$ & $1$ & $\meta$
	\end{tabular}
	\hfill{}
	\caption{Gate behavior under metastability corresponds to Kleene's $3$-valued logic.}
	\label{tab:gate}
\end{table}

As an example, consider Table~\ref{tab:gate} and the \gand-gate with two input ports implementing $f^{\gand}(x_1, x_2) = x_1 \land x_2$.
We extend $f^{\gand}\colon \B^2 \to \B$ to $f^{\gand}_\meta\colon \BM^2 \to \BM$.
For $x \in \B^2$, we have $f^{\gand}(x) = f_\meta^{\gand}(x)$.
Now consider $x = \meta1$.
We have $\Res(\meta1) = \{ 01, 11 \}$, so $O = \{ f^{\gand}(01), f^{\gand}(11) \} = \{0,1\} = \B$, and thus $f_\meta^{\gand}(\meta1) = \meta$.
For $x = \meta0$ we obtain $\Res(x) = \{ 00, 10 \}$, and $O = \{ f^{\gand}(00), f^{\gand}(10) \} = \{0\}$.
Hence, $f_\meta^{\gand}(\meta0) = 0$, i.e., the metastable bit is masked.

The \gor-gate is handled analogously.
Refer to Figure~\ref{fig:mux-metapropagation} for an example of metastability propagation through combinational logic.

\subsection{Combinational Logic}
\label{sec:model-combinational}

\begin{figure}
	\begin{center}
		\def\svgwidth{.4\columnwidth}
		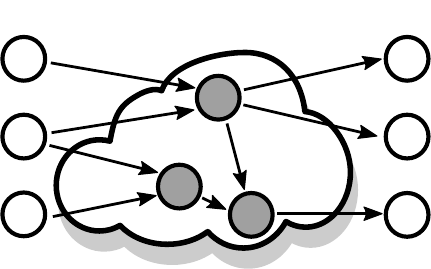
	\end{center}
	\caption{%
		Combinational logic \acs{DAG} with gates (gray) and registers (white).
		The input ($I_1$), output ($O_1$), and local ($L_1$ and $L_2$) registers occur as input nodes, output nodes, and both, respectively.}
	\label{fig:dag}
\end{figure}

We model combinational logic as \ac{DAG} $G = (V,A)$ with parallel arcs, compare Figure~\ref{fig:dag}.
Each node either is an \emph{input node,} an \emph{output node,} or a gate (see Section~\ref{sec:model-gates}).

Input nodes are sources in the \ac{DAG}, i.e., have indegree $0$ and an arbitrary outdegree, and output nodes are sinks with indegree~$1$, i.e., have indegree $1$ and outdegree~$0$.
If $v \in V$ is a gate, denote by $f_v\colon \BM^{k_v} \to \BM$ its gate function with $k_v \in \N_0$ parameters.
For each parameter of~$f_v$, $v$~is connected to exactly one input node or gate $w$ by an arc $(w,v) \in A$.
Every output node $v$ is connected to exactly one input node or gate $w$ by an arc $(w,v) \in A$.
Note that input nodes and gates can serve as input to multiple gates and output nodes.

Suppose $G$ has $m$ input nodes and $n$ output nodes.
Then $G$ defines a function $f^G\colon \BM^m \to \BM^n$ as follows.
Starting with input $x \in \BM^m$, we evaluate the nodes $v \in V$.
If $v$ is an input node, it evaluates to~$x_v$.
Gates of indegree $0$ are constants and evaluate accordingly.
If $v$ is a gate of non-zero indegree, it evaluates to~$f_v(\bar{x})$, where $\bar{x} \in \BM^{k_v}$ is the recursive evaluation of all nodes $w$ with $(w,v) \in A$.
Otherwise, $v$~is an output node, has indegree~$1$, and evaluates just as the unique node $w$ with $(w,v) \in A$.
Finally, $f^G(x)_v$ is the evaluation of the output node~$v$.

\subsection{Circuits}
\label{sec:model-circuits}

We formally define a circuit in this section, specify how it behaves in Section~\ref{sec:model-executions}, and give an example in Section~\ref{sec:model-example}.

\begin{definition}[Circuit]\label{def:circuit}
	A \emph{circuit $C$} is defined by:
	\begin{enumerate}
	\item
		$m$ \emph{input registers,} $k$ \emph{local registers,} and $n$ \emph{output registers,} $m,k,n \in \N_0$.
		Each register has exactly one \emph{type}\dash---simple, mask-$0$, or mask-$1$ (see Section~\ref{sec:model-registers})\dash---and is either input, output, or local register.

	\item
		A combinational logic \ac{DAG} $G$ as defined in Section~\ref{sec:model-combinational}.
		$G$~has $m + k$ input nodes, exactly one for each non-output register, and $k + n$ output nodes, exactly one for each non-input register.
		Local registers appear as both input node and output node.

	\item
		An initialization $x_0 \in \BM^{k+n}$ of the non-input registers.
	\end{enumerate}
	Each $s \in \BM^{m+k+n}$ defines a \emph{state} of~$C$.
\end{definition}
A meaningful application clearly uses a stable initialization $x_0 \in \B^{k+n}$;
this restriction, however, is not formally required.
Furthermore, observe that Definition~\ref{def:circuit} does not allow registers to be an input and an output register at the same time.
This overlap in responsibilities, however, is often used in digital circuits.
We note that we impose this restriction for purely technical reasons;
our model supports registers that are read and written\dash---local registers\dash---and it is possible to emulate the abovementioned behavior.\footnote{%
	Copy the input into local register in the first round.
	Then use the local register in the role where it is both read and written in every round.
	If needed, copy the content of the local register to an output register in every round.
}
Hence, this formal restriction has no practical implications.

We denote by
\begin{align}
	\In\colon  & \BM^{m+k+n} \to \BM^m, \\
	\Loc\colon & \BM^{m+k+n} \to \BM^k\text{, and} \\
	\Out\colon & \BM^{m+k+n} \to \BM^n
\end{align}
the projections of a circuit state to its values at input, local, and output registers, respectively.
In fact, the initialization of the output registers, $\Out(x_0)$, is irrelevant, because output registers are never read (see below).
We use the convention that for any state~$s$, $s = \In(s) \circ \Loc(s) \circ \Out(s)$.

\subsection{Executions}
\label{sec:model-executions}

Consider a circuit $C$ in state~$s$, and let $x = \In(s) \circ \Loc(s)$ be the state of the non-output registers.
Suppose each register $R$ is read, i.e., makes a non-dashed state transition according to its type, state, and corresponding state machine in Figure~\ref{fig:register-states}.
This state transition yields a value read from, as well as a new state for,~$R$.
We denote by
\begin{equation}\label{eq:read}
	\Read^C\colon \BM^{m+k} \to \Pow\left( \BM^{m+k} \right)
\end{equation}
the function mapping $x$ to the set of possible values read from non-output registers of $C$ depending on~$x$.
When only simple registers are involved, the read operation is deterministic:
\begin{observation}\label{obs:simple-read}
	In a circuit $C$ with only simple registers, $\Read^C(x) = \{ x \}$.
\end{observation}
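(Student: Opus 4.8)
The plan is to reduce the claim to the single-register case and then read off the (simplified) state machine for simple registers in Figure~\ref{fig:register-states-simple}. Recall that $\Read^C(x)$ collects the tuples of values read simultaneously from all $m+k$ non-output registers, that each register is read exactly once per clock cycle, and that the read of each register proceeds independently according to its own type, state, and state machine. Since every register is simple, it suffices to determine, for a single simple register in state $b \in \BM$, the set of values it can output, and then to combine these coordinatewise.

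First I would dispatch the two stable states. By definition of the simple register, a register in state $0$ outputs $0$ (and stays in state $0$), and symmetrically a register in state $1$ outputs $1$. Hence for $b \in \B$ the only possible read value is $b$ itself.

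Next I would handle the metastable state $\meta$, which is where the two pessimistic simplifications do the actual work. In the unsimplified, physical picture a simple register in state $\meta$ admits every combination of output and successor state, i.e.\ all nine pairs in $\BM \times \BM$. Applying the first simplification\dash---among the three parallel transitions to a common successor with outputs $0,1,\meta$, keep only the one with output $\meta$\dash---removes every transition whose output is $0$ or $1$, leaving the three transitions with output $\meta$. The second simplification\dash---for a fixed output, among transitions reaching several successors including $\meta$, keep only the one with successor $\meta$\dash---then collapses these to the single transition with output $\meta$ and successor $\meta$. Consequently the unique non-dashed transition out of state $\meta$ carries output $\meta$, so a simple register in state $\meta$ reads $\meta$.

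Combining the three cases, a single simple register in state $b$ reads exactly $b$, deterministically. Because the reads of the $m+k$ non-output registers are mutually independent and each coordinate is forced to its own state value, the only tuple that can be read is $x$ itself, giving $\Read^C(x) = \{x\}$. I do not expect a genuine obstacle here; the only step requiring care is verifying that the two simplification rules, applied to the metastable state, really leave a \emph{unique} surviving transition with output $\meta$ (rather than accidentally eliminating it or retaining several outputs). This, however, follows directly from the definitions of the two rules as sketched above.
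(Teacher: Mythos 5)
Your proof is correct and follows essentially the same route as the paper, which simply reads off that the only non-dashed transition of a simple register in state $x \in \BM$ has output $x$ (Figure~\ref{fig:register-states-simple}). The only difference is that you additionally re-derive the figure's non-dashed structure for state \meta from the two pessimistic simplification rules\dash---a worthwhile sanity check (and indeed independent of the order in which the two rules are applied), but not a different argument.
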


\begin{proof}
	By Figure~\ref{fig:register-states-simple}, the only non-dashed state transition for simple registers in state $x \in \BM$ has output~$x$.
\end{proof}

In the presence of masking registers, $x \in \Read^C(x)$ can occur, but the output may partially stabilize:
\begin{observation}\label{obs:read}
	Consider a circuit $C$ in state~$s$.
	Then for $x = \In(s) \circ \Loc(s)$
	\begin{gather}
		x \in \Read^C(x)\text{, and} \label{eq:read-id}\\
		\Read^C(x) \subseteq \ResM(x). \label{eq:read-resolve}
	\end{gather}
\end{observation}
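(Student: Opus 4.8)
The plan is to exploit that $\Read^C$ factorizes as a coordinate-wise (Cartesian) product over the individual non-output registers: a value $y$ is read from $C$ in state $x$ if and only if, for every non-output register $R$ with state $x_R$, the coordinate $y_R$ is a possible read of $R$ arising from some non-dashed state transition in its type's state machine (Figure~\ref{fig:register-states}). It therefore suffices to prove the two claims register by register, that is, to show for each single register $R$ that (i)~$x_R$ is always among the possible reads, and (ii)~every possible read of $R$ is compatible with $x_R$ in the sense of the defining condition of $\ResM$. Taking the product over all $m+k$ non-output registers then yields \eqref{eq:read-id} and \eqref{eq:read-resolve} simultaneously.

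First I would dispatch the stable states. If $x_R \in \B$, then for each of the three register types the unique non-dashed transition leaving the state $x_R$ has output $x_R$: for simple registers this is the content of Observation~\ref{obs:simple-read}, and for mask-$0$ and mask-$1$ registers the stable states $0$ and $1$ behave identically, as read off from Figures~\ref{fig:register-states-mask0} and~\ref{fig:register-states-mask1}. Hence the read is deterministic and equals $x_R$. This immediately settles both sub-claims for such coordinates: $x_R$ is a possible read, and since the read equals $x_R$ it trivially satisfies ``$y_R = x_R \lor x_R = \meta$''.

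The remaining case is $x_R = \meta$. Here the containment \eqref{eq:read-resolve} is immediate, because once $x_R = \meta$ the defining disjunction of $\ResM$ is satisfied for \emph{any} $y_R \in \BM$, so every possible read is admissible with no further argument. The substantive point is the identity \eqref{eq:read-id}: I must verify that $\meta$ itself is a possible read, even from a masking register. For simple registers this is clear from Figure~\ref{fig:register-states-simple}. For a mask-$b$ register the state machine outputs $b$ while it remains in state $\meta$, but it outputs $\meta$ exactly once on the transition from $\meta$ to $1-b$; reading along that edge shows $\meta$ is a legal output, so $x_R = \meta$ is among the reads.

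The main (and essentially only) obstacle is this last point: one might fear that the very purpose of masking registers---suppressing metastability at the output---prevents $\meta$ from ever being read, which would break $x \in \Read^C(x)$. The resolution is precisely the single $\meta$-output edge $\meta \to 1-b$ that survives the pessimistic simplification, and I would invoke it explicitly rather than relying on intuition about masking. Combining the stable-state analysis with the $\meta$ analysis across all coordinates, the product structure of $\Read^C$ delivers both $x \in \Read^C(x)$ and $\Read^C(x) \subseteq \ResM(x)$, completing the argument.
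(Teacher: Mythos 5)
Your proof is correct and follows essentially the same route as the paper's: both argue register by register from the non-dashed transitions in Figure~\ref{fig:register-states}, observing that every state has a transition outputting its own value (giving \eqref{eq:read-id}, with the mask-$b$ edge $\meta \to 1-b$ supplying the $\meta$-output), that stable states are read deterministically, and that states equal to $\meta$ place no restriction for \eqref{eq:read-resolve}. You merely make explicit the coordinate-wise product structure of $\Read^C$ that the paper leaves implicit.
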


\begin{proof}
	Check the non-dashed state transitions in Figure~\ref{fig:register-states}.
	For~\eqref{eq:read-id}, observe that in all state machines, a state transition with output $b \in \BM$ starts in state~$b$.
	Regarding~\eqref{eq:read-resolve}, observe that registers in state~\meta are not restricted by the claim, and registers of any type in state $b \in \B$ are deterministically read as $b \in \ResM(b) = \{b\}$.
\end{proof}

Let $G$ be the combinational logic \ac{DAG} of $C$ with $m + k$ input and $k + n$ output nodes.
Suppose $o \in \BM^{m+k}$ is read from the non-output registers.
Then the combinational logic of $C$ evaluates to $f^G(o)$, uniquely determined by $G$ and~$o$.
We denote all possible evaluations of $C$ w.r.t.\ $x$ by $\Eval^C(x)$:
\begin{gather}
	\Eval^C\colon \BM^{m+k} \to \Pow\left( \BM^{k+n} \right), \\
	\Eval^C(x) := \left\{ f^G(o) \mid o \in \Read^C(x) \right\}.
\end{gather}

When registers are written, we allow, but do not require, signals to stabilize.
If the combinational logic evaluates the new values for the non-input registers to $\bar{x} \in \BM^{k+n}$, their new state is in $\ResM(\bar{x})$;
the input registers are never overwritten.
We denote this by
\begin{gather}
	\Write^C\colon \BM^{m+k} \to \Pow\left( \BM^{k+n} \right), \\
	\Write^C(x) := \bigcup_{\bar{x} \in \Eval^C(x)} \ResM(\bar{x}).
\end{gather}
Observe that this is where metastability can cause inconsistencies:
If a gate is read as \meta and this is copied to three registers, it is possible that one stabilizes to~$0$, one to~$1$, and one remains~\meta.

For the sake of presentation, we write $\Read^C(s)$, $\Eval^C(s)$, and $\Write^C(s)$ for a circuit state $s \in \BM^{m+k+n}$, meaning that the irrelevant part of $s$ is ignored.

Let $s_r$ be a state of~$C$.
A \emph{successor state $s_{r+1}$ of $s_r$} is any state that can be obtained from $s_r$ as follows.
\begin{description}
\item [Read phase]
	First read all registers, resulting in read values $o \in \Read^C(s_r)$.
	Let $\iota_{r+1} \in \BM^m$ be the state of the input registers after the state transitions leading to reading~$o$.

\item [Evaluation phase]
	Then evaluate the combinational logic according to the result of the read phase to $\bar{x}_{r+1} = f^G(o) \in \Eval^C(s_r)$.

\item [Write phase]
	Pick a partial resolution $x_{r+1} \in \ResM(\bar{x}_{r+1}) \subseteq \Write^C(s_r)$ of the result of the evaluation phase.
	The successor state is $s_{r+1} = \iota_{r+1} \circ x_{r+1}$.
\end{description}
In each clock cycle, our model determines some successor state of the current state of the circuit;
we refer to this as \emph{round}.

Note that due to worst-case propagation of metastability, the evaluation phase is deterministic, while read and write phase are not:
Non-determinism in the read phase is required to model the non-deterministic read behavior of masking registers, and non-determinism in the write phase allows copies of metastable bits to stabilize inconsistently.
In a physical circuit, metastability may resolve within the combinational logic;
we do not model this as a non-deterministic evaluation phase, however, as it is equivalent to postpone possible stabilization to the write phase.

Let $C$ be a circuit in state~$s_0$.
For $r \in \N_0$, an \emph{$r$-round execution (w.r.t.~$s_0$) of $C$} is a sequence of successor states $s_0, s_1, \dots, s_r$.
We denote by $S^C_r(s_0)$ the set of possible states resulting from $r$-round executions w.r.t.\ $s_0$ of~$C$:
\begin{align}
	S^C_0(s_0) &:= \{ s_0 \}\text{, and} \\
	S^C_r(s_0) &:= \left\{ s_r \mid \text{$s_r$ successor state of some $s \in S^C_{r-1}(s_0)$} \right\}.
\end{align}
An \emph{initial state of $C$ w.r.t.\ input $\iota \in \BM^m$} is $s_0 = \iota \circ x_0$.
We use $C_r\colon \BM^m \to \Pow(\BM^n)$ as a function mapping an input to all possible outputs resulting from $r$-round executions of~$C$:
\begin{equation}\label{eq:C}
	C_r(\iota) := \left\{ \Out(s_r) \mid s_r \in S^C_r(\iota \circ x_0) \right\}.
\end{equation}
We say that \emph{$r$ rounds of $C$ implement $f\colon \BM^m \to \Pow(\BM^n)$} if and only if $C_r(\iota) \subseteq f(\iota)$ for all $\iota \in \BM^m$, i.e., if all $r$-round executions of $C$ result in an output permitted by~$f$.
If there is some $r \in \N$, such that $r$ rounds of $C$ implement~$f$, we say that $C$ implements~$f$.

Observe that our model behaves exactly like a traditional, deterministic, binary circuit model if $s_0 \in \B^{m+k+n}$.

\subsection{Example}
\label{sec:model-example}

\begin{figure}
\hfill\subfigure[Circuit]{
	\begin{tikzpicture}[scale=1,transform shape,cktbaselength=0.5pt]
		\draw (0,0.5) node[or2,draw,rotate=90] (or) {};
		\draw (1.5,0) node[and2,draw,rotate=90] (and) {};

		\draw (or.b) -- ++(-0.6,0) node[anchor=east, left] (i1) {$I_1$};
		\draw (or.a) -- (i1.east |- or.a) node[anchor=east, left] (i2) {$I_2$};
		\draw (and.a) -- (i2.east |- and.a) node[anchor=east, left] (l1in) {$L_1$};
		\draw (or.z) -- ++(2,0) node[anchor=west, right] (l1out) {$L_1$};

		\coordinate (split) at ($ (or.z) + (0.4,0) $);
		\draw (split) node[circle, fill=black, inner sep=0, minimum size=3pt] {}
			-- (split |- and.b)
			-- (and.b);
		\draw (and.z) -- (l1out.west |- and.z)
			node[anchor=west, right] (o1) {$O_1$};
	\end{tikzpicture}
\label{fig:model-example-circuit}
}\hfill\subfigure[States, reads, evaluations, and writes]{
	\begin{tabular}{r|cccc|ccc|cc|cc}
		\multirow{2}{*}{$r$}
			& \multicolumn{4}{|c}{state $s_r$}
			& \multicolumn{3}{|c}{read $o$}
			& \multicolumn{2}{|c}{eval $\bar{x}_{r+1}$}
			& \multicolumn{2}{|c}{write $x_{r+1}$} \\

			& $I_1$ & $I_2$ & $L_1$ & $O_1$
			& $I_1$ & $I_2$ & $L_1$
			& $L_1$ & $O_1$
			& $L_1$ & $O_1$ \\
		\hline
		$0$
			& $\meta$ & $\meta$ & $1$ & $1$
			& $0$ & $\meta$ & $1$
			& $\meta$ & $\meta$
			& $1$ & $\meta$ \\
		$1$
			& $\meta$ & $\meta$ & $1$ & $\meta$
			& $\meta$ & $\meta$ & $1$
			& $\meta$ & $\meta$
			& $\meta$ & $\meta$ \\
		$2$
			& $1$ & $\meta$ & $\meta$ & $\meta$
			& $1$ & $\meta$ & $\meta$
			& $1$ & $\meta$
			& $1$ & $0$ \\
		$3$
			& $1$ & $\meta$ & $1$ & $0$
			& $1$ & $\meta$ & $1$
			& $1$ & $1$
			& $1$ & $1$ \\
		$4$
			& $1$ & $\meta$ & $1$ & $1$
			& & &
			& &
			& &
	\end{tabular}
\label{fig:model-example-states}
}\hfill{}
\caption{%
	Example execution in a circuit~\subref{fig:model-example-circuit}.
	The node states as well as the results of the read, evaluation, and write phases are listed in the table~\subref{fig:model-example-states}.
	Register $I_1$ is a mask-$0$ register, all others are simple registers.
	The initialization is~$11$, the input is~$\meta\meta$, and hence $s_0 = \meta\meta11$.
}
\label{fig:model-example}
\end{figure}
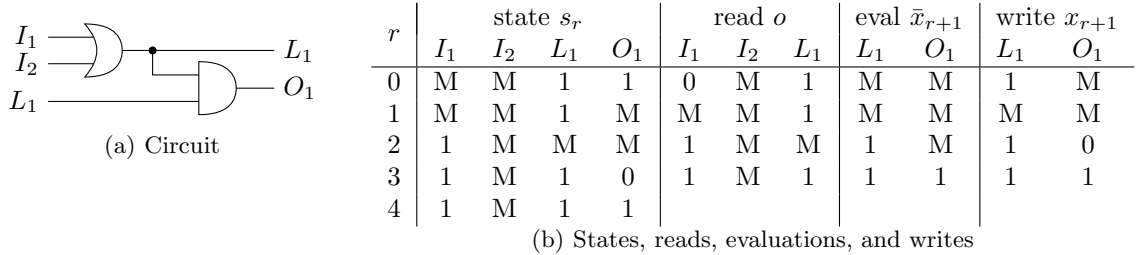

We use this section to present an example of our model.
Figure~\ref{fig:model-example} specifies a circuit and its states, as well as the results of the read, evaluation, and write phases.
The input registers are $I_1$ and~$I_2$, the only local register is~$L_1$, and the only output register is~$O_1$.
Regarding register types, the input register $I_1$ is a mask-$0$ register and all other registers are simple registers.

The initialization is $x_0 = 11$, the input is $\iota = \meta\meta$, and the initial state hence is $s_0 = \iota \circ x_0 = \meta\meta11$, which is indicated in the upper left entry in Figure~\ref{fig:model-example-states}.
In the read phase, all non-output registers are read.
Since $I_2$ and $L_1$ are simple registers, their read deterministically evaluates to $\meta$ and~$1$, respectively, by the state machine in Figure~\ref{fig:register-states-simple}.
The mask-$0$ register $I_1$ in state $\meta$ may either be read as $0$ and remain in state~$\meta$, or be read as $\meta$ and transition to state~$1$, compare Figure~\ref{fig:register-states-mask0};
in this case it does the former.
So far, we fixed the outcome of the read phase, $0\meta1$, and the follow-up state of the input registers,~$\meta\meta$;
the other registers are overwritten at the end of the write phase.
The evaluation is uniquely determined, a read phase resulting in $o$ evaluates to $f^G(o)$, here, $f^G(0\meta1) = \meta\meta$.
We are left with only one more step in this round:
The non-input registers are overwritten with some value in the resolution of the evaluation phase's result, in our case with $1\meta \in \ResM(\meta\meta)$.
Together we obtain the successor state $s_1 = \meta\meta1\meta$.

In the next round, $I_1$~uses the other state transition, i.e., is read as~$\meta$, and hence has state $1$ in the next round.
Hence its state remains fixed in all successive rounds by the state machine in Figure~\ref{fig:register-states-mask0}.
The other reads are deterministic, so we obtain $o = \meta\meta1$ as the result of the read phase and successor states $1\meta$ for $I_1$ and~$I_2$.
The evaluation is $f^G(o) = f^G(\meta\meta1) = \meta\meta$ the state of $L_1$ and $O_1$ is overwritten with some value from $\ResM(\meta\meta)$, here by~$\meta\meta$.

By round $r = 2$, the result of the read phase is deterministic because the only masking register stabilized, we read $o = 1\meta\meta$, and evaluate to~$1\meta$.
The remaining non-determinism is whether to write $1\meta$ or some stabilization thereof. We examine the case that $10$ is written.

Rounds $r \geq 3$ now are entirely deterministic.
The only possible read is $1\meta1$, which evaluates to $f^G(1\meta1) = 11$, fixing the result of the write phase to~$11$.
Further rounds are identical, the only metastable register, $I_2$, remains metastable but has no impact on the evaluation phase as the \gor gate always receives input $1$ from $I_2$ and hence masks the metastable input.

\section{Case Study: \acl{CMUX}}
\label{sec:mux}

In this section, we demonstrate the model proposed in Section~\ref{sec:model} by developing \iac{CMUX}.
Despite its simplicity, it demonstrates our concept, and is a crucial part of the more complex metastability-containing components required for the clock synchronization circuit outlined in Section~\ref{sec:arithmetic-application}~\cite{blm-nomcsn-17,fklp-mametdc-17,lm-emcgc2s-16}.
From a broader perspective, this section shows that our model, especially the worst-case propagation of metastability, is not ``too pessimistic'' to permit positive results.
We show in Section~\ref{sec:realitycheck} that it is not ``too optimistic,'' either.

\begin{figure*}{\small 
\hfill\subfigure[$C^{\text{\acs{MUX}1}}$]{
	\begin{tikzpicture}[scale=1,transform shape,cktbaselength=0.5pt]
		\draw (0,0) node[or2,draw,rotate=90] (o1) {};
		\draw (o1.a)
			-- ++(-0.3,0)
			-- ++(0,-0.2)
			-- ++(-0.3,0) node[and2,draw,rotate=90,anchor=south] (a1) {};
		\draw (o1.b)
			-- ++(-0.3,0)
			-- ++(0,+0.2)
			-- ++(-0.3,0) node[and2ni,draw,rotate=90,anchor=south] (a2) {};

		\draw (a1.a) -- ++(-0.6,0) node[xshift=-5pt] {$b$};
		\draw (a2.a) -- ++(-0.6,0) node[xshift=-5pt] {$a$};

		\draw (a2.b)
			-- ++(-0.3,0)
			-- ++(0,-1.6) node[yshift=-5pt] (s) {$s$};
		\draw (a1.b) -| (s);
		\path (a1.b) -- ++(-0.4,0) node[circle,fill=black,inner sep=0,minimum size=3pt] {};

		\draw (o1.z) -- ++(0.3,0) node[xshift=5pt] {$o$};
	\end{tikzpicture}
\label{fig:mux_gate}
}\hfill
\subfigure[$C^{\text{\acs{MUX}2}}$]{
	\begin{circuitikz}[scale = 0.8, transform shape]
		\draw (0,0) node[nmos,rotate=90] (mos1) {};
		\draw (0,0) node[pmos,rotate=-90] (mos1n) {};
		\draw (mos1.D) node[left=0.01] {$b$};

		\draw (0,-2) node[nmos,rotate=90] (mos2) {};
		\draw (0,-2) node[pmos,rotate=-90] (mos2n) {};
		\draw (mos2.D) node[left=0.01] {$a$};

		\draw (mos1.G) to (mos2n.G);
		\draw (mos1.S) to (mos2.S) to [short,*-] ++(0.6,0) {} node[right=0.05] {$o$};

		\draw (-1.8,-0.5) node[not port, scale=0.7] (not1) {};
		\draw (not1.in)
			to ++(-0.2,0)
			to[short,-*] ++(0,-0.5)
			to ++(-0.2,0) node[left=0.05] {$s$}
			to[short,-*] ++(2.69,0);
		\draw (not1.out)
			to ++(0.2,0)
			to ++(0,0.1) |- (mos1n.G);
		\draw (not1.out)
			to [short,-*] ++(0.2,0)
			to ++(0,-0.1) |- (mos2.G);
	\end{circuitikz}
\label{fig:mux_trans}
}\hfill
\subfigure[$C^{\text{\acs{CMUX}1}}$]{
	\begin{tikzpicture}[scale=1,transform shape,cktbaselength=0.5pt]
		\draw (0,0) node[or3,draw,rotate=90] (o1) {};
		\draw (o1.a)
			-- ++(-0.3,0)
			-- ++(0,-0.5)
			-- ++(-0.3,0) node[and2,draw,rotate=90,anchor=south] (a1) {};
		\draw (o1.b) -- ++(-0.6,0) node[and2,draw,rotate=90,anchor=south] (am) {};
		\draw (o1.c)
			-- ++(-0.3,0)
			-- ++(0,+0.5)
			-- ++(-0.3,0) node[and2ni,draw,rotate=90,anchor=south] (a2) {};

		\draw (a1.a) -- ++(-0.6,0) node[xshift=-5pt] {$b$};
		\draw (a2.a) -- ++(-0.6,0) node[xshift=-5pt] {$a$};

		\draw (a2.b)
			-- ++(-0.3,0)
			-- ++(0,-2.3) node[yshift=-5pt] (s) {$s$};
		\draw (a1.b) -| (s);
		\path (a1.b) -- ++(-0.4,0) node[circle,fill=black,inner sep=0,minimum size=3pt] {};

		\draw (am.b) to ++(-0.2,0) to ++(0,0.5)  node[circle,fill=black,inner sep=0,minimum size=3pt] {};
		\draw (am.a) to ++(-0.2,0) to ++(0,-0.85)  node[circle,fill=black,inner sep=0,minimum size=3pt] {};

		\draw (o1.z) -- ++(0.3,0) node[xshift=5pt] {$o$};
	\end{tikzpicture}
\label{fig:fixedmux}
}\hfill
\subfigure[$C^\text{\acs{CMUX}2}$]{
	\begin{tikzpicture}[scale=1,transform shape,cktbaselength=0.5pt]
		\draw (0,0) node[or2,draw,rotate=90] (o1) {};
		\draw (o1.a)
			-- ++(-0.3,0)
			-- ++(0,-0.2)
			-- ++(-0.3,0) node[and2,draw,rotate=90,anchor=south] (a1) {};
		\draw (o1.b)
			-- ++(-0.3,0)
			-- ++(0,+0.2)
			-- ++(-0.3,0) node[and2ni,draw,rotate=90,anchor=south] (a2) {};

		\draw (a1.a) -- ++(-0.5,0) node[xshift=-5pt] {$b$};
		\draw (a2.a) -- ++(-0.5,0) node[xshift=-5pt] {$a$};

		\path (a1.b) -- ++(-0.5,0) node[rounded corners=3pt, minimum width=0.7cm,text height=0.15cm,draw] (delay) {$\Delta$};
		\draw (a1.b) -- (delay.east);

		\draw (delay.west)
			-- ++(-0.1,0)
			-- ++(0,-1.0) node[yshift=-5pt] (s) {$s$};
		\draw (a2.b) -| (s);
		\node[xshift=24pt,yshift=0pt] at (s) {(mask-$1$)};
		\path (a1.b) -- ++(-0.955,0) node[circle,fill=black,inner sep=0,minimum size=3pt] {};

		\draw (o1.z) -- ++(0.3,0) node[xshift=5pt] {$o$};
	\end{tikzpicture}
\label{fig:mux_timed}
}\hfill}
\caption{%
	\ac{MUX} implementations.
	Figures~\subref{fig:mux_gate} and~\subref{fig:mux_trans} depict the gate-level circuit and the transmission gate implementation of a standard \ac{MUX}.
	The circuits in Figures~\subref{fig:fixedmux} and~\subref{fig:mux_timed} mask a metastable select bit $s$ in the case of $a = b$ employing additional gates~\subref{fig:fixedmux} and a masking register~\subref{fig:mux_timed}, respectively.%
}
\label{fig:mux_all}
\end{figure*}
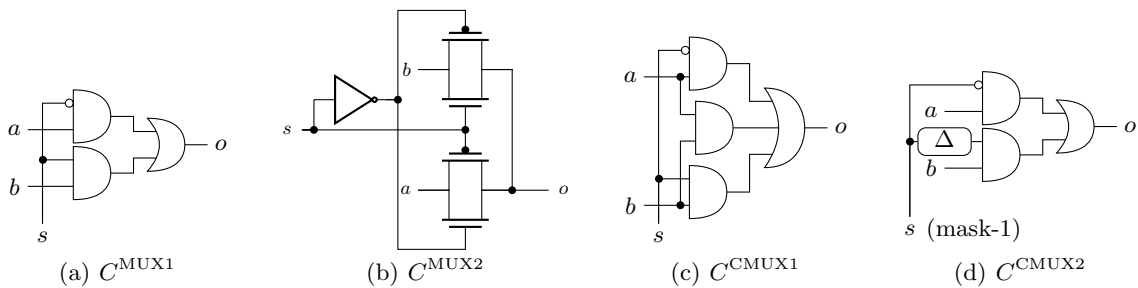

Prior to discussing improved variants, let us examine a standard \ac{MUX}.
A \emph{($k$-bit) \acf{MUX}} is a circuit $C$ with $2k+1$ inputs, such that $C$ implements
\begin{align}
	f_{\text{\acs{MUX}}}\colon &\BM^k \times \BM^k \times \BM \to \BM^k \\
	f_{\text{\acs{MUX}}}(a,b,s) &= \begin{cases}
		\ResM(a) & \text{if $s = 0$,} \\
		\ResM(b) & \text{if $s = 1$, and} \\
		\BM      & \text{if $s = \meta$,}
	\end{cases}\label{eq:mux}
\end{align}
where we use $k = 1$ for the sake of presentation.
In the case of a stable select bit~$s$, it determines whether to output (some stabilization of) $a$ or~$b$.
If $s$ is metastable, an arbitrary output may be produced.
Figures~\ref{fig:mux_gate} and~\ref{fig:mux_trans} show typical implementations in terms of combinational logic and transmission gates, respectively.

\begin{figure}{\small 
\hfill\subfigure[$C^{\text{\acs{MUX}1}}$]{
	\begin{tikzpicture}[scale=1,transform shape,cktbaselength=0.5pt]
		\fill [fill=black!10] (-1.825, -1.05) rectangle (0.7, 0.9);
		\node[draw=none,fill=black!10] at (-0.78, 1.25) {metastability};

		\draw (0,0) node[or2,draw,rotate=90] (o1) {};
		\draw (o1.a)
			-- ++(-0.3,0)
			-- ++(0,-0.2)
			-- ++(-0.3,0) node[and2,draw,rotate=90,anchor=south] (a1) {};
		\draw (o1.b)
			-- ++(-0.3,0)
			-- ++(0,+0.2)
			-- ++(-0.3,0) node[and2ni,draw,rotate=90,anchor=south] (a2) {};

		\draw (a1.a) -- ++(-0.6,0) node[anchor=east, left] {$b=1$};
		\draw (a2.a) -- ++(-0.6,0) node[anchor=east, left] {$a=1$};

		\draw (a1.a) node[anchor=east, below left, xshift=1pt] {$1$};
		\draw (a1.b) node[anchor=east, below left, xshift=1pt] {$\meta$};
		\draw (a1.z) node[below right] {$\meta$};
		\draw (a2.a) node[anchor=east, above left, xshift=1pt] {$1$};
		\draw (a2.b) node[anchor=east, above left, xshift=4pt] {$\meta$};
		\draw (a2.z) node[above right] {$\meta$};

		\draw (a2.b)
			-- ++(-0.3,0)
			-- ++(0,-1.6) node[below] (s) {$s=\meta$~~~~~~~~};
		\draw (a1.b) -| (s);
		\path (a1.b) -- ++(-0.4,0) node[circle,fill=black,inner sep=0,minimum size=3pt] {};

		\draw (o1.z) -- ++(0.3,0) node[below, xshift=3pt] {$o=\meta$};
	\end{tikzpicture}
\label{fig:mux-metapropagation-1}
}\hfill
\subfigure[$C^{\text{\acs{CMUX}1}}$]{
	\begin{tikzpicture}[scale=1,transform shape,cktbaselength=0.5pt]
		\fill [fill=black!10] (-1.875,-1.4) rectangle (-0.35,-0.4);
		\fill [fill=black!10] (-1.875, 1.4) rectangle (-0.35, 0.4);
		\fill [fill=black!10] (-1.875,-1.4) rectangle (-1.6, 1.325);

		\draw (0,0) node[or3,draw,rotate=90] (o1) {};
		\draw (o1.a)
			-- ++(-0.3,0)
			-- ++(0,-0.5)
			-- ++(-0.3,0) node[and2,draw,rotate=90,anchor=south] (a1) {};
		\draw (o1.b) -- ++(-0.6,0) node[and2,draw,rotate=90,anchor=south] (am) {};
		\draw (o1.c)
			-- ++(-0.3,0)
			-- ++(0,+0.5)
			-- ++(-0.3,0) node[and2ni,draw,rotate=90,anchor=south] (a2) {};

		\draw (a1.a) -- ++(-0.6,0) node[anchor=east, left] {$b=1$};
		\draw (a2.a) -- ++(-0.6,0) node[anchor=east, left] {$a=1$};

		\draw (a2.b)
			-- ++(-0.3,0)
			-- ++(0,-2.3) node[below] (s) {$s=\meta$~~~~~~~~};
		\draw (a1.b) -| (s);
		\path (a1.b) -- ++(-0.4,0) node[circle,fill=black,inner sep=0,minimum size=3pt] {};

		\draw (am.b) to ++(-0.2,0) to ++(0,0.5)  node[circle,fill=black,inner sep=0,minimum size=3pt] {};
		\draw (am.a) to ++(-0.2,0) to ++(0,-0.85)  node[circle,fill=black,inner sep=0,minimum size=3pt] {};

		\draw (a1.a) node[anchor=east, below left] {$1$};
		\draw (a1.b) node[anchor=east, below left] {$\meta$};
		\draw (a1.z) node[below right] {$\meta$};

		\draw (a2.a) node[anchor=east, above left] {$1$};
		\draw (a2.b) node[anchor=east, above left, xshift=3pt] {$\meta$};
		\draw (a2.z) node[above right] {$\meta$};

		\draw (am.a) node[anchor=east, left, xshift=-3pt] {$1$};
		\draw (am.b) node[anchor=east, left, xshift=-3pt] {$1$};
		\draw (am.z) node[above right] {$1$};

		\draw (o1.z) -- ++(0.3,0) node[below, xshift=1pt] {$o=1$};
	\end{tikzpicture}
\label{fig:mux-metapropagation-2}
}\hfill
}
\caption{%
	\ac{MUX} behavior for $a = b = 1$, in which case the output should be~$1$, regardless of the select bit~$s$.
	For $s = \meta$, however, the standard \ac{MUX}~\subref{fig:mux-metapropagation-1} can become metastable, but the \ac{CMUX}~\subref{fig:mux-metapropagation-2} outputs~$1$.%
}
\label{fig:mux-metapropagation}
\end{figure}
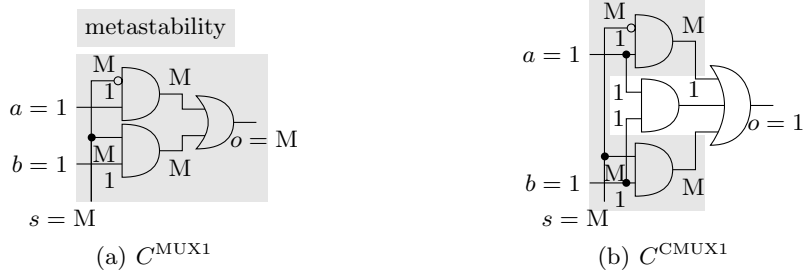

A desirable property of \iac{MUX} is that if $a = b$, the output is~$a$, regardless of~$s$.
Being uncertain whether to select~$a$ or~$b$ should be insubstantial in this case.
If, however, $s = \meta$ and $a = b = 1$, a standard implementation like $C^{\text{\acs{MUX}1}}$, compare Figure~\ref{fig:mux-metapropagation-1}, yields
\begin{equation}
	(\lnot s \land a) \lor (s \land b)
		= (\lnot \meta \land 1) \lor (\meta \land 1)
		= \meta \lor \meta
		= \meta.
\end{equation}
Hence, we ask for an improved circuit that implements
\begin{align}
	f_{\text{\acs{CMUX}}}\colon & \BM^k \times \BM^k \times \BM \to \BM^k \\
	f_{\text{\acs{CMUX}}}(a,b,s) &= \begin{cases}
			\ResM(a) & \text{if $s = 0$ or $a = b$,} \\
			\ResM(b) & \text{if $s = 1$, and} \\
			\BM      & \text{if $a \neq b \land s = \meta$.}
	\end{cases}\label{eq:mux.M}
\end{align}
We call such a circuit \emph{($k$-bit) \acf{CMUX}}.
Circuit~$C^{\text{\acs{CMUX}1}}$ in Figure~\ref{fig:fixedmux} implements~\eqref{eq:mux.M}:
The problematic case of $s = \meta$ and $a = b = 1$ is handled by the third \gand-gate which becomes~$1$, providing the \gor-gate with a stable $1$ as input, see Figure~\ref{fig:mux-metapropagation-2}.

\begin{lemma}
	$C^{\text{\acs{CMUX}1}}_1 \subseteq f_{\text{\acs{CMUX}}}$ from Equation~\eqref{eq:mux.M}.
\end{lemma}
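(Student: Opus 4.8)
The plan is to reduce the claim to a finite case analysis on the combinational function realized by $C^{\text{\acs{CMUX}1}}$, exploiting that one round of a circuit built solely from simple registers collapses to a single deterministic evaluation. Concretely, $C^{\text{\acs{CMUX}1}}$ has the three input registers $a$, $b$, $s$, no local registers, and the single output register $o$, all of them simple. By Observation~\ref{obs:simple-read} the read phase is deterministic and returns $(a,b,s)$; the evaluation phase then deterministically produces $\bar{x}_1 = f^G(a,b,s)$; and the write phase may pick an arbitrary partial resolution of this value. Hence $C^{\text{\acs{CMUX}1}}_1(a,b,s) = \ResM(f^G(a,b,s))$, and it suffices to show $\ResM(f^G(a,b,s)) \subseteq f_{\text{\acs{CMUX}}}(a,b,s)$ for every $(a,b,s) \in \BM^3$.

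Next I would read the gate-level structure of Figure~\ref{fig:fixedmux} off the \ac{DAG}: the \gand-gate with an inverted select input contributes $a \land \lnot s$, the second \gand-gate contributes $s \land b$, the additional (masking) \gand-gate contributes $a \land b$, and the \gor-gate combines the three, so that $f^G(a,b,s) = (a \land \lnot s) \lor (s \land b) \lor (a \land b)$, where $\land$, $\lor$, $\lnot$ are the Kleene-extended gate operations of Table~\ref{tab:gate} and the expression is evaluated gate-by-gate. The proof is then a case distinction mirroring the three branches of $f_{\text{\acs{CMUX}}}$ in Equation~\eqref{eq:mux.M}. For stable $s = 0$ the second and third terms vanish or are absorbed and $f^G$ reduces to $a$ (including $f^G = \meta$ when $a = \meta$), so $\ResM(f^G) = \ResM(a)$; for $s = 1$ it symmetrically reduces to $b$. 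The two metastable branches split on whether $a = b$: if $a \neq b$ and $s = \meta$ then $f_{\text{\acs{CMUX}}} = \BM$ and containment is automatic, and likewise $a = b = \meta$ falls under the $\ResM(a) = \BM$ branch; the only branch requiring real work is $a = b \in \B$ with $s = \meta$.

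The main obstacle\dash---and the entire point of the construction\dash---is precisely this last case, together with the care needed to evaluate \emph{compositionally} rather than via the Kleene closure of the overall Boolean function. Because $a$, $b$, and $s$ each fan out to two gates, the reconvergent gate-by-gate evaluation may be strictly more metastable than the Kleene extension of the single function $(a \land \lnot s) \lor (s \land b) \lor (a \land b)$, so I cannot shortcut the argument through that closed form; each gate must be tracked individually. Doing so, for $a = b = 1$ and $s = \meta$ the two outer \gand-gates each output $\meta$, but the masking gate outputs the stable value $a \land b = 1$, which the \gor-gate absorbs to yield $f^G = 1 \lor \meta \lor \meta = 1 \in \ResM(a)$ (this is exactly Figure~\ref{fig:mux-metapropagation-2}); for $a = b = 0$ all three \gand-gates output $0$ and $f^G = 0 \in \ResM(a)$. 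In every case $\ResM(f^G(a,b,s)) \subseteq f_{\text{\acs{CMUX}}}(a,b,s)$, which by the reduction of the first step proves $C^{\text{\acs{CMUX}1}}_1 \subseteq f_{\text{\acs{CMUX}}}$. I expect the write-up to be short; the only genuine subtlety is resisting the temptation to reason via the global Kleene extension instead of the gatewise evaluation.
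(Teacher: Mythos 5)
Your proposal is correct and takes essentially the same approach as the paper's proof: reduce the single round to a deterministic evaluation of $o = (\lnot s \land a) \lor (s \land b) \lor (a \land b)$ followed by arbitrary resolution in the write phase, then verify the cases $s \neq \meta$ and $s = \meta$ with $a = b \in \B$, the latter being settled by the stable $a \land b$ clause masking the \gor-gate. Your explicit warning that one must evaluate gate-by-gate in Kleene logic rather than via the metastable closure of the overall Boolean function (which can differ under reconvergent fan-out) is a valid subtlety the paper leaves implicit, and your gatewise computations are all correct.
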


\begin{proof}
	$C^{\text{\acs{CMUX}1}}$ has no internal registers and its combinational logic \ac{DAG} implements
	\begin{equation}\label{eq:mmux1proof}
		o = (\lnot s \land a) \lor (s \land b) \lor (a \land b).
	\end{equation}
	It is easy to check that for $s \neq \meta$, \eqref{eq:mmux1proof} implements the first two cases of~\eqref{eq:mux.M}.
	If $s = \meta$, and $a \neq b$ or $a = b = \meta$, $C^{\text{\acs{CMUX}1}}$ may output anything, so consider $s = \meta$ and distinguish two cases:
	\begin{enumerate*}
	\item
		If $a = b = 0$, all clauses in~\eqref{eq:mmux1proof} are~$0$, hence $o = 0$, and

	\item
		if $a = b = 1$, $a \land b = 1$ and $o = 1$, regardless of the other clauses.
	\end{enumerate*}
\end{proof}

The price for this improvement is an additional \gand-gate and a ternary \gor-gate, which can be costly if $a$ and $b$ are of large bit width.
We reduce the gate number using a masking register to implement~\eqref{eq:mux.M} in two steps.
First, we show how to implement~\eqref{eq:mux.M} using two rounds in our model, and then derive from it an efficient unclocked physical implementation with fewer gates (this unclocked implementation is not covered by our model, see below).
Algorithm~\ref{alg:mmux} specifies the clocked circuit by assignments of logic expressions to registers.
The trick is to sequentially read $s$ from a mask-$1$ register, ensuring that at most one copy of $s$ can be metastable, compare Figure~\ref{fig:register-states-mask1}.
This guarantees that in the case of $s = \meta$ and $a = b = 1$, one of the \gand-clauses is stable~$1$.

\begin{algorithm}
	\begin{algorithmic}
		\algblockdefx{Round}{End}{{\bf each round:}}{{\bf end}}
		\State {\bf input:} $a$ and $b$ (simple), $s$ (mask-$1$)
		\State {\bf local:} $s'$ (simple)
		\State {\bf output:} $o$ (simple)
		\Statex
		\Round
			\State $s' \gets s$
			\State $o \gets (\lnot s \land a) \lor (s' \land b)$
		\End
	\end{algorithmic}
	\caption{\acl{CMUX}.}
	\label{alg:mmux}
\end{algorithm}


\begin{lemma}
	Two rounds of Algorithm~\ref{alg:mmux} implement~\eqref{eq:mux.M}.
\end{lemma}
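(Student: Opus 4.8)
The plan is to track the two reads of the mask-$1$ register $s$ across the two rounds and exploit that they can never both be metastable. Write $s_1$ and $s_2$ for the values read from $s$ in round~$1$ and round~$2$, respectively. First I would dispose of the easy registers: $a$ and $b$ sit in simple input registers that are never overwritten, so (by the reasoning behind Observation~\ref{obs:simple-read}) every read returns the input value and I may treat $a,b$ as fixed. The local register $s'$ is simple, so after the assignment $s' \gets s$ in round~$1$ its state lies in $\ResM(s_1)$, and since it is read deterministically in round~$2$ it is read as some $s'_2 \in \ResM(s_1)$; its initialization and round~$1$'s value of $o$ are irrelevant because output registers are never read. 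Hence the round-$2$ evaluation of the output is the single value $\bar o = (\lnot s_2 \land a) \lor (s'_2 \land b)$, the final output satisfies $o \in \ResM(\bar o)$, and it suffices to prove $\ResM(\bar o) \subseteq f_{\text{\acs{CMUX}}}(a,b,s)$ of Equation~\eqref{eq:mux.M} over all admissible reads and write resolutions.

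The structural heart of the argument is to enumerate the admissible pairs $(s_1,s_2)$ from the mask-$1$ state machine in Figure~\ref{fig:register-states-mask1}. For stable $s \in \B$ the register is read deterministically, giving $(0,0)$ or $(1,1)$. For $s = \meta$ the only non-dashed transitions are ``read~$1$ and stay in state~$\meta$'' or ``read~$\meta$ and move to state~$0$''; iterating over two rounds leaves exactly $(1,1)$, $(1,\meta)$, and $(\meta,0)$. The decisive point I would isolate as a small invariant is that in every case at most one of $s_1,s_2$ equals $\meta$, and whenever one equals $\meta$ the other read — and, through $s'_2 \in \ResM(s_1)$, the second clause — is forced to a stable value. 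This is precisely the guarantee the mask-$1$ register is meant to buy, mirroring the informal remark preceding Algorithm~\ref{alg:mmux} that ``at most one copy of $s$ can be metastable.''

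It then remains to close the argument by a finite check. For $s=0$ one gets $s_2=0$, $s'_2=0$, so $\bar o = (1\land a)\lor(0\land b)=a$, matching $\ResM(a)$; for $s=1$, symmetrically $\bar o=b$, matching $\ResM(b)$. For $s=\meta$ with $a\neq b$ the specification permits $\BM$, so nothing is to show, and for $a=b=\meta$ the target set is again $\BM$ and the claim is trivial. The only substantive case is $s=\meta$ with $a=b\in\B$: here I would substitute each pair $(s_1,s_2)\in\{(1,1),(1,\meta),(\meta,0)\}$ into $\bar o$ and observe that the clause governed by the \emph{stable} read of $s$ evaluates to the common value $a$, while the clause carrying the single metastable copy is neutralized by the Kleene identities $1\lor x = 1$ and $x\land 0 = 0$, forcing $\bar o = a$ and hence $\ResM(\bar o)=\{a\}\subseteq\ResM(a)$.

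The hard part will not be any single computation but the two-round bookkeeping inside the model of Section~\ref{sec:model-executions}: correctly arguing that $s'_2$ inherits $s_1$ through the write phase of round~$1$ and the deterministic read of round~$2$, and that the evaluation phase of round~$2$ is indeed the deterministic $\bar o$ above. Once the invariant ``at most one of the two copies of $s$ is metastable, and the stable copy determines the active $\gor$-clause'' is pinned down against Figure~\ref{fig:register-states-mask1}, the remainder collapses to the short Kleene-logic table sketched above.
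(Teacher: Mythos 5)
Your proof is correct and takes essentially the same route as the paper: a case analysis over the non-dashed transitions of the mask-$1$ register across the two rounds, followed by a short Kleene-logic evaluation showing that when $a=b$ the clause fed by the stable copy of $s$ forces the output. Your explicit enumeration of read pairs $(s_1,s_2)\in\{(0,0),(1,1),(1,\meta),(\meta,0)\}$ is merely a finer-grained presentation of the paper's two cases for $s=\meta$, $a=b=1$ (namely $s$ read as $\meta$ in round~$1$ and thus as $0$ in round~$2$, versus $s'$ set to the masked value~$1$).
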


\begin{proof}
	If $s \neq \meta$, we have $s = s'$ after round~$1$ and the first two cases of~\eqref{eq:mux.M} are easily verified.
	In case $s = \meta$, and $a \neq b$ or $a = b = \meta$, the output is not restricted.
	Hence, consider $s = \meta$ and $a = b$.
	If $s = \meta$ and $a = b = 0$, $o = (\lnot s \land 0) \lor (s' \land 0) = 0$.
	If $s = \meta$ and $a = b = 1$, the read and write phases of round $1$ have two possible outcomes (compare Figure~\ref{fig:register-states-mask1}):
	\begin{enumerate*}
	\item
		$s$ is read as~\meta, so its copy in $s'$ may become metastable, but $s$ is guaranteed to be read as $0$ in round~$2$ because $s$ is a mask-$1$ register.
		Then we have $o = (\lnot 0 \land 1) \lor (s' \land 1) = 1 \lor s' = 1$.

	\item
		$s' = 1$ due to $s$ masking state~\meta, in which case we obtain $o = (\lnot s \land 1) \lor (1 \land 1) = \lnot s \lor 1 = 1$.
	\end{enumerate*}
\end{proof}

One may argue that a direct realization of Algorithm~\ref{alg:mmux} in hardware as a clocked state machine may be too large for practical applications.
In fact, however, the algorithm has an optimized unclocked realization, that cannot directly be expressed in our synchronous circuit model:
The serialization of assignments in Algorithm~\ref{alg:mmux} ensured by the two clock cycles can also be enforced by local delay constraints instead of clock cycles,
see Figure~\ref{fig:mux_timed}.
With a propagation delay from $s$ to the \gand-gate with non-negated $s$ input being larger than the gate delay from $s$ to the \gand-gate with negated
  input~$\lnot s$, the circuit exhibits the specified behavior.
Note that this yields an efficient transformation of \iac{MUX} into \iac{CMUX}:
Take a standard \ac{MUX}, read the select bit from a masking register, and add the delay line.
Observe that this construction scales well with increasing bit widths of $a$ and~$b$, since only the select bit needs to be stored in a masking register.

\section{Basic Properties}
\label{sec:basics}

We establish basic properties regarding computability in the model from Section~\ref{sec:model}.
Regarding the implementability of functions by circuits, we focus on two resources:
the number $r \in \N$ of rounds and the register types available to it.
In order to capture this, let $\Fun_S^r$ be the class of functions implementable with $r$ rounds of circuits comprising only simple registers.
Analogously, $\Fun_M^r$ denotes the class of functions implementable with $r$ rounds that may use masking and simple registers.

First consider the combinational logic.
Provided with a partially metastable input~$x$, some gates\dash---those where the collective metastable input ports have an impact on the output\dash---evaluate to~\meta.
So when stabilizing $x$ bit by bit, no new metastability is introduced at the gates.
Furthermore, once a gate stabilized, its output is fixed;
stabilizing the input leads to stabilizing the output.

\begin{lemma}\label{lem:dag}
	Let $G$ be a combinational logic \ac{DAG} with $m$ input nodes.
	Then for all $x \in \BM^m$,
	\begin{equation}
		x' \in \ResM(x) \Rightarrow f^G(x') \in \ResM\left( f^G(x) \right).
	\end{equation}
\end{lemma}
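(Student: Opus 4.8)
The plan is to prove the statement by structural induction on the combinational logic DAG $G$, exploiting the monotonicity of Kleene's 3-valued logic that is built into the definition of $f_\meta$ in Section~\ref{sec:model-gates}. The key observation is that the claim $x' \in \ResM(x) \Rightarrow f^G(x') \in \ResM(f^G(x))$ reduces to a single-gate property: stabilizing any metastable input bit of a gate can only stabilize (never newly metastabilize) its output. I would first establish this local property for an individual gate, then propagate it through the DAG by induction on the topological order of the nodes.

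\textbf{Local gate property.} First I would show that for any gate with Boolean function $f\colon \B^k \to \B$ and its extension $f_\meta$, whenever $y' \in \ResM(y)$ we have $f_\meta(y') \in \ResM(f_\meta(y))$. Recall from the definition that $f_\meta(y) = b \in \B$ exactly when $\{f(z) \mid z \in \Res(y)\} = \{b\}$, and $f_\meta(y) = \meta$ otherwise. The crucial inclusion is $\Res(y') \subseteq \Res(y)$, which holds because $y' \in \ResM(y)$ means $y'$ fixes some of the metastable bits of $y$, so every full stabilization of $y'$ is also a full stabilization of $y$. Consequently $\{f(z) \mid z \in \Res(y')\} \subseteq \{f(z) \mid z \in \Res(y)\}$. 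Now if $f_\meta(y) = b \in \B$, the larger set is the singleton $\{b\}$, so the smaller nonempty set is also $\{b\}$, giving $f_\meta(y') = b = f_\meta(y)$; and if $f_\meta(y) = \meta$, then $f_\meta(y') \in \BM = \ResM(\meta)$ trivially. Either way $f_\meta(y') \in \ResM(f_\meta(y))$.

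\textbf{Induction over the DAG.} Fix $x' \in \ResM(x)$ and process the nodes of $G$ in topological order, maintaining the invariant that the value of every evaluated node $v$ under input $x'$ lies in $\ResM$ of its value under input $x$. For an input node this is immediate from $x'_v \in \ResM(x_v)$. For a gate node $v$ with in-neighbors $w_1,\dots,w_{k_v}$, the induction hypothesis gives that the tuple of in-neighbor values under $x'$ lies in $\ResM$ (componentwise) of the tuple under $x$; applying the local gate property to $f_v$ then yields that the value of $v$ under $x'$ lies in $\ResM$ of its value under $x$. Output nodes simply copy their unique predecessor, so the invariant carries over unchanged. Collecting the invariant at the output nodes gives exactly $f^G(x') \in \ResM(f^G(x))$, componentwise and hence as a tuple.

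\textbf{The main obstacle} is getting the local gate property clean, i.e.\ making the componentwise-versus-tuple bookkeeping of $\ResM$ precise: the induction hypothesis delivers, for each in-neighbor separately, a $\ResM$ relation, and I need to assemble these into a single $\ResM$ relation on the input tuple of the gate so that the local property applies. This is routine since $\ResM$ of a tuple is exactly the product of the per-coordinate resolution sets (each coordinate independently either matches or was $\meta$), but it must be stated carefully. Everything else is a direct consequence of the inclusion $\Res(y')\subseteq\Res(y)$, so no genuinely hard computation is involved.
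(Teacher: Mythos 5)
Your proposal is correct and takes essentially the same route as the paper: the paper also proceeds by induction over the \ac{DAG} (removing a sink, which is equivalent to your topological-order induction) and handles input nodes, constant gates, output nodes, and gates of non-zero indegree exactly as you do. The only difference is one of explicitness\dash---your standalone ``local gate property,'' proved via $\Res(y') \subseteq \Res(y)$, is precisely the step the paper asserts directly from the gate definition ($f_v(\bar{x}') = b$ because $\bar{x}' \in \ResM(\bar{x})$ when $f_v(\bar{x}) = b \neq \meta$), so you have merely spelled out a detail the paper leaves implicit.
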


\begin{proof}
	We show the statement by induction on~$|V|$.
	For the sake of the proof we extend $f^G$ to all nodes of $G = (V,A)$, i.e., write $f^G(x)_v$ for the evaluation of $v \in V$ w.r.t.\ input~$x$, regardless of whether $v$ is an output node.
	The claim is trivial for $|V| = 0$.
	Hence, suppose the claim holds for \acp{DAG} with up to $i \in \N_0$ vertices, and consider \iac{DAG} $G = (V,A)$ with $|V| = i + 1$.
	As $G$ is non-empty, it contains a sink $v \in V$.
	Removing $v$ allows applying the induction hypothesis to the remaining graph, proving that $f^G(x')_w \in \ResM(f^G(x)_w)$ for all nodes $w \neq v$.

	Concerning~$v$, the claim is immediate if $v$ is a source, because $f(x)_v = x_v$ if $v$ is an input node and $f(x)_v = b$ for a constant $b \in \BM$ if $v$ is a gate of indegree~$0$.
	If $v$ is an output node, it evaluates to the same value as the unique node $w$ with $(w,v) \in A$, which behaves as claimed by the induction hypothesis.
	Otherwise $v$ is a gate of non-zero indegree;
	consider the nodes $w \in V$ with $(w,v) \in A$.
	For input~$x$, $v$~is fed with the input string $\bar{x} \in \BM^{k_v}$, whose components are given by $f^G(x)_w$; define $\bar{x}'$ analogously w.r.t.\ input $x'$.
	Note that $\bar{x}' \in \ResM(\bar{x})$, since we already established that $f^G(x')_w \in \ResM(f^G(x))_w$ for all $w \neq v$.
	If $f_v(\bar{x}) = \meta$, the claim holds because $\ResM(\meta) = \BM$.
	On the other hand, for the case that $f_v(\bar{x}) = b \neq \meta$, our gate definition entails that $f_v(\bar{x}') = b$, because $\bar{x}' \in \ResM(\bar{x})$.
\end{proof}

Stabilizing the input of the combinational logic stabilizes its output.
The same holds for the evaluation phase:
If one result of the read phase is $x$ and another is $x' \in \ResM(x)$, the combinational logic stabilizes its output to $f^G(x') \in \ResM(f^G(x))$.
Recall Observations~\ref{obs:simple-read} and~\ref{obs:read}:
In state $x$, simple registers are deterministically read as~$x$, and masking registers as some $x' \in \ResM(x)$.
Hence, the use of masking registers might partially stabilize the input to the combinational logic and, by Lemma~\ref{lem:dag}, its output.
The same stabilization, however, can also occur in the write phase.
This implies that $\Write^C$ is not influenced by the register types.

\begin{lemma}\label{lem:1-round-writes}
	Consider a circuit~$C$ in state~$s$.
	Let $C_S$ be a copy of $C$ that only uses simple registers, and $x = \In(s) \circ \Loc(s)$ the projection of $s$ to the non-output registers.
	Then
	\begin{equation}
		\Write^C(s) = \Write^{C_S}(s) = \ResM\left( f^G(x) \right).
	\end{equation}
\end{lemma}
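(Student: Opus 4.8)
The plan is to unfold the definitions of $\Write^C$ and $\Eval^C$ so that the statement reduces to reasoning about which read values are possible and how the combinational logic transforms them. Writing $x = \In(s) \circ \Loc(s)$ (recall that $\Write^C$ ignores the output part of the state), the definitions of $\Eval^C$ and $\Write^C$ give
\[
	\Write^C(x) = \bigcup_{o \in \Read^C(x)} \ResM\bigl(f^G(o)\bigr).
\]
For the simple-register copy $C_S$, Observation~\ref{obs:simple-read} yields $\Read^{C_S}(x) = \{x\}$, so this immediately collapses to $\Write^{C_S}(x) = \ResM(f^G(x))$. It therefore remains to prove $\Write^C(x) = \ResM(f^G(x))$, which I would establish by two inclusions.

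The inclusion $\ResM(f^G(x)) \subseteq \Write^C(x)$ is the easy direction: by~\eqref{eq:read-id} we have $x \in \Read^C(x)$, hence $f^G(x) \in \Eval^C(x)$ and thus $\ResM(f^G(x)) \subseteq \Write^C(x)$ directly from the definition of $\Write^C$.

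For the reverse inclusion I would fix an arbitrary $o \in \Read^C(x)$ and show $\ResM(f^G(o)) \subseteq \ResM(f^G(x))$. By~\eqref{eq:read-resolve} we have $o \in \ResM(x)$, so Lemma~\ref{lem:dag} gives $f^G(o) \in \ResM(f^G(x))$. The last step needs the idempotence (transitivity) of partial resolution: if $y \in \ResM(z)$, then $\ResM(y) \subseteq \ResM(z)$. This follows bitwise from~\eqref{eq:resm}: for any $w \in \ResM(y)$ and any index $i$, either $z_i = \meta$, in which case nothing is required, or $z_i \neq \meta$, in which case $y_i = z_i$ forces $w_i = y_i = z_i$; hence $w \in \ResM(z)$. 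Applying this with $z = f^G(x)$ and $y = f^G(o)$ gives $\ResM(f^G(o)) \subseteq \ResM(f^G(x))$, and taking the union over all $o \in \Read^C(x)$ yields $\Write^C(x) \subseteq \ResM(f^G(x))$. The two inclusions together prove the claim.

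I expect no genuine obstacle here: the real content is already packaged in Lemma~\ref{lem:dag} (stabilizing a \ac{DAG}'s input stabilizes its output). The only thing one must not overlook is the small transitivity property of $\ResM$, which is precisely where the ``register types are irrelevant to $\Write^C$'' phenomenon crystallizes\dash---masking registers can only pre-stabilize the read value to some $o \in \ResM(x)$, and by Lemma~\ref{lem:dag} together with transitivity any such pre-stabilization is already absorbed into the union $\bigcup_{\bar x \in \ResM(f^G(x))} \dots$ that also arises in the simple case, so no new outputs are produced and none are lost.
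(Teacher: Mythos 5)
Your proof is correct and takes essentially the same route as the paper's: both collapse $\Write^{C_S}$ via Observation~\ref{obs:simple-read}, get $\ResM(f^G(x)) \subseteq \Write^C(s)$ from $x \in \Read^C(x)$ (Observation~\ref{obs:read}), and get the reverse inclusion from $\Read^C(x) \subseteq \ResM(x)$ combined with Lemma~\ref{lem:dag}. The only difference is cosmetic\dash---you spell out the transitivity of $\ResM$ bitwise, a step the paper compresses into its closing ``It follows that.''
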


\begin{proof}
	In $C_S$, we have $\Read^{C_S}(s) = \{ x \}$ by Observation~\ref{obs:simple-read}.
	So $\Eval^{C_S}(s) = \{ f^G(x) \}$, and $\Write^{C_S}(s) = \ResM(f^G(x))$ by definition.

	In~$C$, $x \in \Read^C(s)$ by Observation~\ref{obs:read}, so $\ResM(f^G(x)) \subseteq \Write^C(s)$.
	All other reads $x' \in \Read^C(s)$ have $x' \in \ResM(x)$ by Observation~\ref{obs:read}, and $f^G(x') \in \ResM(f^G(x))$ by Lemma~\ref{lem:dag}.
	It follows that $\Write^C(s) = \ResM(f^G(x))$.
\end{proof}

Carefully note that the write phase only affects non-input registers; input registers are never written.
Hence, Lemma~\ref{lem:1-round-writes} does not generalize to multiple rounds:
State transitions of input registers in the read phase affect future read phases.

In $1$-round executions, however, masking and simple registers are equally powerful, because their state transitions only affect rounds $r \geq 2$ (we show in Section~\ref{sec:hierarchy-arbitrary} that these state changes lead to differences for $r \geq 2$ rounds).
\begin{corollary}\label{cor:1-round-equivalence}
	$\Fun_S^1 = \Fun_M^1$.
\end{corollary}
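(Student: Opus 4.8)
The plan is to prove the two inclusions separately, with the nontrivial direction resting entirely on Lemma~\ref{lem:1-round-writes}. The inclusion $\Fun_S^1 \subseteq \Fun_M^1$ is immediate: a circuit using only simple registers is in particular a circuit that may use masking and simple registers, so any function implementable in one round by a simple-only circuit is trivially implementable in one round by a circuit allowed to use masking registers. The real work lies in the reverse inclusion $\Fun_M^1 \subseteq \Fun_S^1$.

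For that direction I would take an arbitrary circuit $C$ (possibly using masking registers) that implements some $f$ in one round, and form the simple-register copy $C_S$ exactly as in Lemma~\ref{lem:1-round-writes}: keep the combinational logic \ac{DAG} $G$ and the initialization $x_0$, but replace every register's type by simple. Fixing an input $\iota \in \BM^m$ yields a common initial state $s_0 = \iota \circ x_0$ for both circuits, and the heart of the argument is to show $C_1(\iota) = (C_S)_1(\iota)$.

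To establish this, I would unwind the definition of a one-round execution. A successor state of $s_0$ has the form $s_1 = \iota_1 \circ x_1$ with $x_1 \in \ResM(\bar{x}_1)$ for some $\bar{x}_1 \in \Eval^C(s_0)$, so as the execution ranges over all choices, $x_1$ ranges over precisely $\Write^C(s_0)$. The output $\Out(s_1)$ is just the projection of $x_1 \in \BM^{k+n}$ to its final $n$ coordinates, so $C_1(\iota)$ is the image of $\Write^C(s_0)$ under $\Out$. By Lemma~\ref{lem:1-round-writes}, $\Write^C(s_0) = \Write^{C_S}(s_0) = \ResM(f^G(x))$ with $x = \In(s_0) \circ \Loc(s_0)$, so this image is identical for $C$ and $C_S$, giving $C_1(\iota) = (C_S)_1(\iota)$. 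As this holds for every $\iota$, the circuits $C$ and $C_S$ implement the same functions in one round; in particular $C_S$ implements $f$, and since $C_S$ uses only simple registers, $f \in \Fun_S^1$.

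The main obstacle to watch is the worry that masking registers genuinely change the output set, since a masking register in state $\meta$ can partially stabilize the value delivered during the read phase, whereas a simple register cannot (Observations~\ref{obs:simple-read} and~\ref{obs:read}). Lemma~\ref{lem:1-round-writes} is exactly what dispels this: any stabilization a masking register could supply in the read phase is equally attainable in the write phase, so the set of write-phase results coincides regardless of register types. The second point to handle carefully is that the differing read-phase state transitions of masking \emph{input} registers do alter the follow-up input state $\iota_1$, but over a single round this never influences $\Out(s_1)$, which depends only on $x_1$; such input-register state changes matter only for rounds $r \ge 2$, which is precisely why the equivalence does not extend beyond one round.
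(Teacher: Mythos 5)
Your proof is correct and takes essentially the same route as the paper, which derives the corollary directly from Lemma~\ref{lem:1-round-writes} together with the observation that read-phase state transitions of (input) registers only influence rounds $r \geq 2$. Your write-up merely makes explicit the bookkeeping the paper leaves implicit, namely that $C_1(\iota)$ is the projection of $\Write^C(\iota \circ x_0)$ onto the output coordinates and hence coincides for $C$ and its simple-register copy~$C_S$.
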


In contrast, simple and masking registers used as non-input registers behave identically, regardless of the number of rounds:
A circuit $C$ in state $s_r$ overwrites them regardless of their state.
Since $\Write^C(s_r)$ is oblivious to register types by Lemma~\ref{lem:1-round-writes}, so is $\Loc(s_{r+1}) \circ \Out(s_{r+1})$ for a successor state $s_{r+1}$ of~$s_r$.
\begin{corollary}\label{cor:register-equivalence}
	Simple and masking registers are interchangeable when used as non-input registers.
\end{corollary}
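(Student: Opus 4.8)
The plan is to formalize ``interchangeable'' precisely: if $C'$ arises from a circuit $C$ by arbitrarily reassigning the types (simple, mask-$0$, mask-$1$) of its non-input registers while leaving the input register types untouched, then $C$ and $C'$ have identical reachable-state sets and hence implement exactly the same functions. Since Lemma~\ref{lem:1-round-writes} controls only a single write phase, whereas the implemented functions depend on full $r$-round executions, I would prove $S^C_r(s_0) = S^{C'}_r(s_0)$ for every $r \in \N_0$ and every initial state $s_0$ by induction on $r$; the desired conclusion $C_r = C'_r$ then follows by applying $\Out$, cf.\ Equation~\eqref{eq:C}.

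The base case $r = 0$ is immediate, as $S^C_0(s_0) = \{s_0\} = S^{C'}_0(s_0)$. For the inductive step I fix a common state $s_r \in S^C_r(s_0) = S^{C'}_r(s_0)$ and compare its successor sets. A successor decomposes as $s_{r+1} = \iota_{r+1} \circ x_{r+1}$, where $\iota_{r+1}$ is the post-read state of the input registers and $x_{r+1} \in \ResM(f^G(o))$ for the read $o \in \Read^C(s_r)$. The input-register reads depend only on the input types and $\In(s_r)$, which agree in $C$ and $C'$, so the set of achievable pairs $(\iota_{r+1}, o_{\mathrm{in}})$, pairing the input next-state with the input part of the read, is the same in both circuits. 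It then remains to show that, for each fixed such $o_{\mathrm{in}}$, the set of achievable non-input writes $x_{r+1}$ is independent of the non-input register types.

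For this I would reuse the sandwich underlying Lemma~\ref{lem:1-round-writes}. Writing $\lambda = \Loc(s_r)$, every local read lies in $\ResM(\lambda)$ by Observation~\ref{obs:read}, while the identity read $\lambda$ is always available (also Observation~\ref{obs:read}); output registers are never read, so their type is vacuous. By Lemma~\ref{lem:dag}, $f^G(o_{\mathrm{in}} \circ o_{\mathrm{loc}}) \in \ResM(f^G(o_{\mathrm{in}} \circ \lambda))$ for every $o_{\mathrm{loc}} \in \ResM(\lambda)$, so the union of $\ResM(f^G(o_{\mathrm{in}} \circ o_{\mathrm{loc}}))$ over all reachable local reads equals exactly $\ResM(f^G(o_{\mathrm{in}} \circ \lambda))$, with the identity read already supplying the whole set. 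This is blind to the non-input register types, and the non-input next states produced in the read phase are irrelevant because the write phase overwrites them. Hence the achievable $(\iota_{r+1}, x_{r+1})$ coincide in $C$ and $C'$, closing the induction.

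The main obstacle I anticipate is the coupling between $\iota_{r+1}$ and $x_{r+1}$: a single read $o$ simultaneously fixes the input registers' next state \emph{and} feeds the combinational logic, so one must reason at the level of joint pairs rather than marginals. This is precisely why I keep $o_{\mathrm{in}}$ fixed while varying the local reads; the argument goes through because only non-input types are altered, so the input part of the read behaves identically in $C$ and $C'$, and the remaining per-$o_{\mathrm{in}}$ write set is type-blind by the Lemma~\ref{lem:dag}/Observation~\ref{obs:read} sandwich. A secondary point worth stating explicitly is that output registers are write-only, so their type is entirely immaterial and need not be tracked at all.
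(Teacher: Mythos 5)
Your proof is correct and takes essentially the same route as the paper, which obtains the corollary directly from Lemma~\ref{lem:1-round-writes}: non-input registers are overwritten in every round, and the set of possible writes $\ResM\left(f^G(x)\right)$ is oblivious to their types. Your explicit induction over rounds and the conditioning on the input read $o_{\mathrm{in}}$ merely spell out what the paper leaves implicit\dash---in particular the coupling between the input registers' post-read states and the values fed to the logic, which only matters when input registers are masking and which your per-$o_{\mathrm{in}}$ version of the Lemma~\ref{lem:dag}/Observation~\ref{obs:read} sandwich handles correctly.
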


Consider a circuit $C$ in state~$s$, and suppose $x \in \Read^C(s)$ is read.
Since the evaluation phase is deterministic, the evaluation $y = f^G(x) \in \Eval^C(s)$ is uniquely determined by $x$ and~$C$.
Recall that we may resolve metastability to $\ResM(y) \subseteq \Write^C(s)$ in the write phase:
The state of an output register $R$ becomes $0$ if $y_R = 0$, $1$~if~$y_R = 1$, and some $b \in \BM$ if $y_R = \meta$.
Consequently, output registers resolve independently:
\begin{corollary}\label{cor:bitwise-closed}
	For any circuit~$C$, $C_1 = g_0 \times \dots \times g_{n-1}$, where $g_i\colon \BM^m \to \{ \{0\}, \{1\}, \BM \}$.
\end{corollary}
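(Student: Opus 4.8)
The plan is to show that the single-round behaviour $C_1$ of any circuit factors as a product of $n$ functions, one per output register, each of which maps every input to one of the three ``atomic'' output sets $\{0\}$, $\{1\}$, or $\BM$. The key leverage is Lemma~\ref{lem:1-round-writes}, which already tells us that for any state $s$ with $x = \In(s) \circ \Loc(s)$ we have $\Write^C(s) = \ResM(f^G(x))$, and crucially that this is \emph{independent of register types}. So the entire one-round output behaviour is governed by the deterministic evaluation $y := f^G(x) \in \BM^{k+n}$ followed by the partial-resolution operator $\ResM$.

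First I would fix an input $\iota \in \BM^m$, set $s_0 = \iota \circ x_0$, and observe that since this is a one-round execution, $x = \In(s_0) \circ \Loc(s_0) = \iota \circ \Loc(x_0)$ is a single fixed string (the read phase plays no role for simple registers, and by Lemma~\ref{lem:1-round-writes} masking registers do not enlarge the write set). Hence $y = f^G(x)$ is a fixed element of $\BM^{k+n}$, and $C_1(\iota) = \Out(\Write^C(s_0)) = \Out(\ResM(y))$. The essential structural fact is that $\ResM$ acts \emph{componentwise}: by the definition in Equation~\eqref{eq:resm}, a tuple $z$ lies in $\ResM(y)$ iff $z_i \in \ResM(y_i)$ for every coordinate $i$ independently. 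Therefore $\ResM(y)$ is literally the Cartesian product $\prod_i \ResM(y_i)$, and projecting to the output registers preserves this product structure, giving $C_1(\iota) = \prod_{i} \ResM(y_{R_i})$ over the $n$ output registers $R_0, \dots, R_{n-1}$.

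Next I would identify the per-coordinate factors with the claimed functions $g_i$. For each output register index $i$, define $g_i(\iota) := \ResM\big(f^G(\iota \circ \Loc(x_0))_{R_i}\big)$. Since $f^G(\iota \circ \Loc(x_0))_{R_i}$ is a single element of $\BM = \{0,1,\meta\}$, and $\ResM(0) = \{0\}$, $\ResM(1) = \{1\}$, $\ResM(\meta) = \BM$, the range of $g_i$ is exactly $\{\{0\}, \{1\}, \BM\}$ as required. Putting the pieces together yields $C_1(\iota) = g_0(\iota) \times \dots \times g_{n-1}(\iota)$ for every $\iota$, which is the assertion $C_1 = g_0 \times \dots \times g_{n-1}$.

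I expect no serious obstacle here; the statement is essentially a repackaging of Lemma~\ref{lem:1-round-writes} together with the componentwise nature of $\ResM$. The only point demanding a little care is making precise that ``$C_1$ equals a product'' is to be read as a product of set-valued functions, i.e.\ $C_1(\iota)$ is the Cartesian product of the $g_i(\iota)$ for each input $\iota$; once this reading is adopted, everything reduces to the observation that $\ResM$ distributes over coordinates and that the write set is type-oblivious, both of which are already available. The mild subtlety worth flagging explicitly is that the output registers' initialization $\Out(x_0)$ is irrelevant (as noted after Definition~\ref{def:circuit}, output registers are never read), so the fixed string $x$ depends on $\iota$ and $\Loc(x_0)$ only, legitimately making $y$ a function of $\iota$ alone.
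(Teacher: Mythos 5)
Your proposal is correct and takes essentially the same route as the paper's proof: both invoke Lemma~\ref{lem:1-round-writes} to reduce $C_1(\iota)$ to $\Out\bigl(\ResM(f^G(x))\bigr)$ for the fixed string $x = \iota \circ \Loc(x_0)$, then exploit that $\ResM$ is a coordinatewise product and set $g_i(\iota) := \ResM\bigl(f^G(x)\bigr)_i$. The paper's proof is simply a condensed version of your argument, including the same handling of the nondeterministic read phase via the type-obliviousness of the write set.
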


\begin{proof}
	Let $s = \iota \circ x_0$ be the initial state of $C$ w.r.t.\ input~$\iota$, and $x = \In(s) \circ \Loc(s)$.
	By Lemma~\ref{lem:1-round-writes}, $\Write^C(s)=\ResM(f^G(x))$, i.e., $C_1(\iota) = \{ \Out(s') \mid s'\in \ResM(f^G(x)) \}$.
	By definition, $\ResM(f^G(x)) = \prod_{i \in [n]} \ResM(f^G(x))_i$.
	Hence, the claim follows with $g_i(\iota) := \ResM(f^G(x))_i$ for all $\iota \in \BM^m$ and $i \in [n]$.
\end{proof}

We show in Section~\ref{sec:simple} that Corollary~\ref{cor:bitwise-closed} generalizes to multiple rounds of circuits with only simple registers.
This is, however, not the case in the presence of masking registers, as demonstrated in Section~\ref{sec:hierarchy}.

Lemmas~\ref{lem:dag} and~\ref{lem:1-round-writes} apply to the input of circuits:
Partially stabilizing an input partially stabilizes the possible inputs of the combinational logic, and hence its evaluation and the circuit's output after one round.

\begin{observation}\label{obs:specific}
	For a circuit $C$ and input $\iota \in \BM^m$,
	\begin{equation}
		\iota' \in \ResM(\iota) \Rightarrow C_1(\iota') \subseteq C_1(\iota).
	\end{equation}
\end{observation}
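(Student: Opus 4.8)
The plan is to reduce the statement entirely to the single-round behavior of the combinational logic, so that Lemmas~\ref{lem:dag} and~\ref{lem:1-round-writes} can be combined; this is precisely the informal argument sketched just above the observation. First I would unwind the definition of $C_1$. For the two initial states $\iota \circ x_0$ and $\iota' \circ x_0$, the non-output part (the argument relevant to Lemma~\ref{lem:1-round-writes}) is $x := \iota \circ \Loc(x_0)$ and $x' := \iota' \circ \Loc(x_0)$ respectively. A one-round successor state has the form $\iota_1 \circ x_1$ with $x_1$ ranging over $\Write^C(\cdot)$, and its $\Out$-projection is just the output-register part of $x_1$ (the fresh input-register state $\iota_1$ is irrelevant to the output). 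Hence $C_1(\iota)$ is the image under the output-register projection of $\Write^C(\iota \circ x_0)$, and similarly for $\iota'$. By Lemma~\ref{lem:1-round-writes} these write sets are $\ResM(f^G(x))$ and $\ResM(f^G(x'))$, so the goal reduces to showing that the output projection of $\ResM(f^G(x'))$ is contained in that of $\ResM(f^G(x))$; this factorization is also exactly what Corollary~\ref{cor:bitwise-closed} records.

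Next I would feed the two combinational inputs into Lemma~\ref{lem:dag}. Since $\iota' \in \ResM(\iota)$ by hypothesis and the local part $\Loc(x_0)$ is identical in $x$ and $x'$, componentwise inspection of the definition of $\ResM$ gives $x' \in \ResM(x)$ in $\BM^{m+k}$. As the \ac{DAG}~$G$ of $C$ has exactly $m+k$ input nodes, Lemma~\ref{lem:dag} applies verbatim and yields $f^G(x') \in \ResM(f^G(x))$.

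Finally I would invoke transitivity of partial resolution: from $f^G(x') \in \ResM(f^G(x))$ it follows that $\ResM(f^G(x')) \subseteq \ResM(f^G(x))$, because any coordinate free to resolve in the smaller set is already free in the larger one (formally, if $a \in \ResM(b)$ then $\ResM(a) \subseteq \ResM(b)$, a one-line check on each coordinate against the definition of $\ResM$). Restricting both sides to the output-register coordinates preserves the inclusion, because $\ResM$ acts coordinatewise, and this gives $C_1(\iota') \subseteq C_1(\iota)$. The only parts needing explicit (if entirely routine) justification are this transitivity of $\ResM$ and the bookkeeping that the input registers, although part of the state, do not enter the output after one round; everything substantive is a direct appeal to the two lemmas.
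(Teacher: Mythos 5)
Your proof is correct and follows essentially the same route as the paper's: both reduce $C_1$ to $\Write^C$ via Lemma~\ref{lem:1-round-writes}, apply Lemma~\ref{lem:dag} to get $f^G(x') \in \ResM(f^G(x))$, and conclude with the transitivity property $a \in \ResM(b) \Rightarrow \ResM(a) \subseteq \ResM(b)$. You merely spell out the bookkeeping (the output projection and the irrelevance of the post-read input-register state) that the paper's one-line proof leaves implicit.
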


\begin{proof}
	Let $x_0$ be the initialization of~$C$, $s = \iota \circ x_0$ its initial state w.r.t.\ input~$\iota$, and $x = \In(s) \circ \Loc(s)$ the state of the non-output registers;
	define $s'$ and $x'$ equivalently w.r.t.\ input $\iota' \in \ResM(\iota)$.
	Using Lemmas~\ref{lem:dag} and~\ref{lem:1-round-writes}, and that $\ResM(x') \subseteq \ResM(x)$ for $x' \in \ResM(x)$, we obtain that $\Write^C(s') = \ResM(f^G(x')) \subseteq \ResM(f^G(x)) = \Write^C(s)$.
\end{proof}

Finally, note that adding rounds of computation cannot decrease computational power;
a circuit determining $x$ in $r$ rounds can be transformed into one using $r + 1$ rounds by buffering $x$ for one round.
Furthermore, allowing masking registers does not decrease computational power.
\begin{observation}\label{obs:subseteq}
	For all $r \in \N_0$ we have
	\begin{align}
		\Fun_S^r &\subseteq \Fun_S^{r+1}\text{,} \\
		\Fun_M^r &\subseteq \Fun_M^{r+1}\text{, and} \\
		\Fun_S^r &\subseteq \Fun_M^r.
	\end{align}
\end{observation}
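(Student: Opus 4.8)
The plan is to dispatch the third inclusion at the level of circuit classes and to obtain the first two from a single construction that delays the computed output by one round. For $\Fun_S^r \subseteq \Fun_M^r$ there is essentially nothing to do: by Definition~\ref{def:circuit} a circuit using only simple registers is a perfectly valid circuit in the class that is also permitted to use masking registers (it just uses none), so any $C$ witnessing $f \in \Fun_S^r$ through $C_r(\iota) \subseteq f(\iota)$ already witnesses $f \in \Fun_M^r$. The content lies in $\Fun_S^r \subseteq \Fun_S^{r+1}$ and $\Fun_M^r \subseteq \Fun_M^{r+1}$: given a circuit $C$ whose $r$ rounds implement $f$, I must build a circuit $C'$ whose $r+1$ rounds implement $f$. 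Simply running $C$ for one more round fails, since $\Out(s_{r+1})$ need not lie in $f(\iota)$; the computed value must instead be \emph{held} through the extra round.

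Concretely, I would form $C'$ from $C$ by keeping all of $C$'s registers and its combinational \ac{DAG}, adding $n$ fresh simple local registers $\hat O_1, \dots, \hat O_n$ (one per output register) initialized to $\Out(x_0)$, and rewiring so that, for each $i$, $C$'s formula for $O_i$ now writes $\hat O_i$, while $O_i$ is written by a fresh identity gate \gid reading $\hat O_i$. Because $C$ never reads its output registers, no read is rerouted, so the inputs, the local registers inherited from $C$, and the $\hat O_i$ (whose write formulas read only inputs and inherited locals) all evolve in $C'$ exactly as the corresponding registers of $C$, under the same read/write non-determinism. Hence for every $\rho \ge 1$ the reachable values of the tuple $\hat O$ after round $\rho$ coincide with those of $\Out(s_\rho)$ in $C$, while the identity copy makes the output of $C'$ lag one round: after round $\rho$ it equals the value $\hat O$ held after round $\rho-1$. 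Taking $\rho = r+1$ then ties $C'_{r+1}(\iota)$ to the round-$r$ output set of $C$, up to the extra resolution discussed next.

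The step needing care\dash---and the main obstacle\dash---is that the identity copy subjects each output bit to a \emph{second} write phase, so a bit that is $\meta$ in $\hat O_i$ after round $r$ may resolve further when written into $O_i$; I must rule out this producing outputs outside $f(\iota)$. This is exactly where idempotence of partial resolution enters: by Lemma~\ref{lem:1-round-writes} the last write phase of $C$ already lets each output bit range over the full $\ResM$ of the corresponding evaluation, and since $y'' \in \ResM(y') \subseteq \ResM(y)$ whenever $y' \in \ResM(y)$, one has $\bigcup_{y' \in \ResM(y)} \ResM(y') = \ResM(y)$. Thus the extra resolution introduces nothing $C$ could not already produce after $r$ rounds, giving $C'_{r+1}(\iota) \subseteq C_r(\iota) \subseteq f(\iota)$; as $C'$ uses only simple registers whenever $C$ does (buffer registers may be taken simple, cf.\ Corollary~\ref{cor:register-equivalence}), both remaining inclusions follow for $r \ge 1$. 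The base case $r=0$, where $C_0(\iota) = \{\Out(x_0)\}$ is the constant initial output, is covered by the same construction reading the buffers' initialization, and is clean under the standing assumption of a stable initialization $x_0 \in \B^{k+n}$, for which $C'_1(\iota) = \{\Out(x_0)\} = C_0(\iota) \subseteq f(\iota)$.
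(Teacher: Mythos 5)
Your proposal is correct and follows essentially the same route as the paper, which justifies the observation exactly by noting that simple-register circuits are trivially circuits with (possibly) masking registers and that an $r$-round circuit can be turned into an $(r+1)$-round one by buffering the output for one round. You merely flesh out what the paper leaves implicit\dash---the idempotence $\ResM(\ResM(y)) = \ResM(y)$ handling the extra write-phase resolution via Lemma~\ref{lem:1-round-writes}, and the $r=0$ corner case under stable initialization\dash---which is a sound and, if anything, more careful rendering of the paper's one-line argument.
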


\section{Reality Check}
\label{sec:realitycheck}

Section~\ref{sec:mux} demonstrates that our model permits the design of metastability-containing circuits.
Given the elusive nature of metastability and Marino's impossibility result~\cite{m-gtmo-81}, non-trivial positive results of this kind are surprising, and raise the question whether the proposed model is ``too optimistic'' to derive meaningful statements about the physical world.
Put frankly, a reality check is in order!

In particular, Marino established that no digital circuit can reliably
\begin{enumerate*}
\item\label{enum:marino-begin}
	avoid,

\item
	resolve, or

\item\label{enum:marino-end}
	detect
\end{enumerate*}
metastability~\cite{m-gtmo-81}.
It is imperative that these impossibility results are maintained by any model comprising metastability.
We show in Theorem~\ref{thm:pivotal} and Corollaries~\ref{cor:no-meta-detection}--\ref{cor:no-meta-resolve} that \ref{enum:marino-begin}--\ref{enum:marino-end} are impossible in the model proposed in Section~\ref{sec:model} as well.
We stress that this is about putting the model to the test rather than reproducing a known result.

We first verify that avoiding metastability is impossible in non-trivial circuits.
Consider a circuit $C$ that produces different outputs for inputs $\iota \neq \iota'$.
The idea is to observe how the output of $C$ behaves while transforming $\iota$ to $\iota'$ bit by bit, always involving intermediate metastability, i.e., switching the differing bits from~$0$ to~$\meta$ to~$1$ or vice versa.
This can be seen as a discrete version of Marino's argument for signals that map continuous time to continuous voltage~\cite{m-gtmo-81}.
Furthermore, the bit-wise transformation of $\iota$ to~$\iota'$, enforcing a change in the output in between, has parallels to the classical impossibility of consensus proof of Fischer et~al.~\cite{flp-idcofp-85};
our techniques, however, are quite different.
The following definition formalizes the step-wise manipulation of bits.

\begin{definition}[Pivotal Sequence]\label{def:pivotal}
	Let $k \in \N_0$ and $\ell \in \N$ be integers, and $x,x' \in \BM^k$.
	Then
	\begin{equation}
		\left(x^{(i)}\right)_{i \in [\ell+1]}, \quad x^{(i)} \in \BM^k,
	\end{equation}
	is a \emph{pivotal sequence (from $x$ to~$x'$ over $\BM^k$)} if and only if it satisfies
	\begin{enumerate}
	\item
		$x^{(0)} = x$ and $x^{(\ell)} = x'$,

	\item
		for all $i \in [\ell]$, $x^{(i)}$ and $x^{(i+1)}$ differ in exactly one bit, and

	\item
		this bit is metastable in either $x^{(i)}$ or~$x^{(i+1)}$.
	\end{enumerate}
	For $i \in [\ell]$, we call the differing bit the \emph{pivot from $i$ to $i + 1$} and $P_i$ its corresponding \emph{pivotal register.}
\end{definition}

Carefully note that we do not use pivotal sequences as temporal sequences of successor states;
$(x^{(i)})_{i \in [5]}$ and $(y^{(i)})_{i \in [7]}$ in Figure~\ref{fig:pivotal} do not describe successive computations, they all refer to single-round executions and the respective results.
The bit-wise manipulation does not happen over time, instead, we aim at examining closely related circuit states.

We begin with Lemma~\ref{lem:pivotal} which applies to a single round of computation.
It states that feeding a circuit $C$ with a pivotal sequence $x$ of states results in a pivotal sequence of possible successor states~$y$.
Hence, if $C$ is guaranteed to output different results for $x^{(0)}$ and~$x^{(\ell)}$, some intermediate element of $y$ must contain a metastable output bit, i.e., there is an execution in which an output register of $C$ becomes metastable.
We argue about successor states rather than just the output because we inductively apply Lemma~\ref{lem:pivotal} in Corollary~\ref{cor:pivotal}.
A sample circuit with pivotal sequences is depicted in Figure~\ref{fig:pivotal}.

Let $x$ be a pivotal sequence of non-output register states, i.e., over $\BM^{m+n}$, and suppose a pivotal register changes from stable to \meta from $x^{(i)}$ to $x^{(i+1)}$.
By Observation~\ref{obs:read}, we may construct executions with $x^{(i)} \in \Read^C(x^{(i)})$ and $x^{(i+1)} \in \Read^C(x^{(i+1)})$.
The key insight is that due to $x^{(i)} \in \ResM(x^{(i+1)})$, we have $f^G(x^{(i)}) \in \ResM(f^G(x^{(i+1)}))$ by Lemma~\ref{lem:dag}.
Hence, $\Write^C(x^{(i)}) \subseteq \Write^C(x^{(i+1)})$ as $\Write^C(x^{(i)}) = \ResM(f^G(x^{(i)})$ and $\Write^C(x^{(i+1)}) = \ResM(f^G(x^{(i+1)})$ by Lemma~\ref{lem:1-round-writes}.
This destabilizes the bits that are affected by the destabilized input bit in the successor states;
we leave all other bits unchanged.
Leveraging this, we obtain a pivotal sequence of successor states, changing the affected output bits from stable to \meta one by one, each result of a one-round execution of~$C$.
A reversed version of this argument applies when a non-output register changes from \meta to stable.

\begin{figure}
\hfill\subfigure[Circuit]{
	\begin{tikzpicture}[scale=1,transform shape,cktbaselength=0.5pt]
		\draw (0,0.5) node[or2,draw,rotate=90] (or) {};
		\draw (1.5,0) node[and2,draw,rotate=90] (and) {};

		\draw (or.b) -- ++(-0.6,0) node[anchor=east, left] (i1) {$I_1$};
		\draw (or.a) -- (i1.east |- or.a) node[anchor=east, left] (i2) {$I_2$};
		\draw (and.a) -- (i2.east |- and.a) node[anchor=east, left] (l1in) {$L_1$};
		\draw (or.z) -- ++(2,0) node[anchor=west, right] (l1out) {$L_1$};

		\coordinate (split) at ($ (or.z) + (0.4,0) $);
		\draw (split) node[circle, fill=black, inner sep=0, minimum size=3pt] {}
			-- (split |- and.b)
			-- (and.b);
		\draw (and.z) -- (l1out.west |- and.z)
			node[anchor=west, right] (o1) {$O_1$};
	\end{tikzpicture}
\label{fig:pivotal-circuit}
}\hfill\subfigure[Pivotal sequences of register states]{
	\begin{tabular}{rccc|cc|ccccl}
		& \multicolumn{3}{c}{$x^{(i)}$} & \multicolumn{2}{|c|}{$f^G(x^{(i)})$} & \multicolumn{4}{c}{$y^{(j)}$} & \\
		        & $I_1$ &   $I_2$ &   $L_1$ &   $L_1$ &   $O_1$ & $I_1$ &   $I_2$ &   $L_1$ &   $O_1$ & \\
		\hline
		$x^{(0)}$ & $0$ &     $0$ &     $0$ &     $0$ &     $0$ &   $0$ &     $0$ &     $0$ &     $0$ & $y^{(0)}$ \\
		$x^{(1)}$ & $0$ &     $0$ & $\meta$ &     $0$ &     $0$ &   $0$ &     $0$ &     $0$ &     $0$ & \\
		$x^{(2)}$ & $0$ &     $0$ &     $1$ &     $0$ &     $0$ &   $0$ &     $0$ &     $0$ &     $0$ & \\
		$x^{(3)}$ & $0$ & $\meta$ &     $1$ & $\meta$ & $\meta$ &   $0$ & $\meta$ &     $0$ &     $0$ & $y^{(1)}$ \\
		          & $0$ & $\meta$ &     $1$ & $\meta$ & $\meta$ &   $0$ & $\meta$ & $\meta$ &     $0$ & $y^{(2)}$ \\
		          & $0$ & $\meta$ &     $1$ & $\meta$ & $\meta$ &   $0$ & $\meta$ & $\meta$ & $\meta$ & $y^{(3)}$ \\
		          & $0$ & $\meta$ &     $1$ & $\meta$ & $\meta$ &   $0$ & $\meta$ &     $1$ & $\meta$ & $y^{(4)}$ \\
		          & $0$ & $\meta$ &     $1$ & $\meta$ & $\meta$ &   $0$ & $\meta$ &     $1$ &     $1$ & $y^{(5)}$ \\
		$x^{(4)}$ & $0$ &     $1$ &     $1$ &     $1$ &     $1$ &   $0$ &     $1$ &     $1$ &     $1$ & $y^{(6)}$ \\
	\end{tabular}
\label{fig:pivotal-table}
}\hfill{}
\caption{%
	A circuit with input ($I_1$~and~$I_2$), local ($L_1$), and output ($O_1$) registers~\subref{fig:pivotal-circuit}, and a pivotal sequence of non-output register states $x^{(0)}, \dots, x^{(4)}$ with the resulting pivotal sequence of successor states $y^{(0)}, \dots, y^{(6)}$~\subref{fig:pivotal-table}.
	Each change in $x$ is reflected in a re-evaluation of the combinational logic $f^G(x^{(i)})$, which may affect several registers of the successor state.
	In order to be pivotal, the output sequence $y$ accounts for the changes bit by bit.%
}
\label{fig:pivotal}
\end{figure}

\begin{lemma}\label{lem:pivotal}
	Let $C$ be a circuit, and
	\begin{equation}
		\left(x^{(i)}\right)_{i \in [\ell+1]}, \quad x^{(i)} \in \BM^{m+k+n},
	\end{equation}
	a pivotal sequence of states of~$C$.
	Then there is a pivotal sequence
	\begin{equation}
		\left(y^{(j)}\right)_{j \in [\ell'+1]}, \quad y^{(j)} \in \BM^{m+k+n},
	\end{equation}
	where each $y^{(j)}$ is a successor state of some $x^{(i)}$, satisfying that $y^{(0)}$ and $y^{(\ell')}$ are successor states of~$x^{(0)}$ and~$x^{(\ell)}$, respectively.
\end{lemma}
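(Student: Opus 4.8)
The plan is to reduce the claim to single pivot steps and then concatenate. Fix $i \in [\ell]$; the states $x^{(i)}$ and $x^{(i+1)}$ differ in exactly the pivotal register $P_i$, which is stable in one of them and $\meta$ in the other. By the reversal symmetry of Definition~\ref{def:pivotal} (a reversed pivotal sequence is again pivotal), I may assume the \emph{destabilizing} direction, where $P_i$ is stable in $x^{(i)}$ and $\meta$ in $x^{(i+1)}$, so that $x^{(i)} \in \ResM(x^{(i+1)})$; the stabilizing case is obtained by constructing the corresponding segment and reversing it. For each step I build a pivotal \emph{segment} of successor states from a successor $c^{(i)}$ of $x^{(i)}$ to a successor $c^{(i+1)}$ of $x^{(i+1)}$, every intermediate state being a successor of $x^{(i)}$ or $x^{(i+1)}$. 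Concatenating the segments over $i$, bridging at the shared states $x^{(i+1)}$ where the two adjacent constructions disagree, yields $y$, whose first element is a successor of $x^{(0)}$ and whose last is a successor of $x^{(\ell)}$.

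The combinational and write side of a segment is governed directly by the earlier results. Reading both states by the identity read, available since $x^{(i)} \in \Read^C(x^{(i)})$ and $x^{(i+1)} \in \Read^C(x^{(i+1)})$ by Observation~\ref{obs:read}, and using $x^{(i)} \in \ResM(x^{(i+1)})$, Lemma~\ref{lem:dag} yields $f^G(x^{(i)}) \in \ResM(f^G(x^{(i+1)}))$, while Lemma~\ref{lem:1-round-writes} identifies the write sets as $\ResM(f^G(x^{(i)})) \subseteq \ResM(f^G(x^{(i+1)}))$, independently of register types. The local-and-output parts $f^G(x^{(i)})$ and $f^G(x^{(i+1)})$ thus differ only in finitely many positions, each stable in the former and $\meta$ in the latter. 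Holding the input part fixed at the follow-up $\iota^{(i+1)}$ of $x^{(i+1)}$, I flip these positions from stable to $\meta$ one at a time; by the write-phase freedom in Lemma~\ref{lem:1-round-writes} every intermediate local-and-output vector still lies in $\ResM(f^G(x^{(i+1)}))$ and is therefore a legal write for $x^{(i+1)}$, so each intermediate state is a successor of $x^{(i+1)}$. This realizes the bulk of the segment as a pivotal sequence on the non-input coordinates, from $\iota^{(i+1)} \circ f^G(x^{(i)})$ to $c^{(i+1)} = \iota^{(i+1)} \circ f^G(x^{(i+1)})$.

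It remains to connect the canonical successor $c^{(i)} = \iota^{(i)} \circ f^G(x^{(i)})$ of $x^{(i)}$ to the start $\iota^{(i+1)} \circ f^G(x^{(i)})$ of that bulk, i.e.\ to account for the input follow-up, which is set by the read since input registers are never overwritten. If $P_i$ is an output register, it is neither read nor relevant to the evaluation, so $x^{(i)}$ and $x^{(i+1)}$ have identical successor sets and the segment collapses to a single shared state. If $P_i$ is a local register, its successor value stems from the write phase, so $\iota^{(i)} = \iota^{(i+1)}$ and nothing is needed. If $P_i$ is a simple input register, its identity-read follow-up equals its state, so $\iota^{(i)}$ and $\iota^{(i+1)}$ differ precisely at the pivot (stable versus $\meta$), supplying the one missing pivotal step. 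The delicate case is a masking input register, where the read couples output value and follow-up state: a mask-$b$ register in state $\meta$ either reads $b$ and stays $\meta$, or reads $\meta$ and moves to $1-b$. To keep the read values resolution-ordered (so the write argument above applies) while making the follow-ups differ in a single metastable bit, I select the masked read when the pivot's stable value in $x^{(i)}$ is $b$ and the identity read when it is $1-b$.

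I expect this input-register bookkeeping to be the main obstacle, and it also complicates the concatenation: a masking input register serving as the pivot $P_i$ in step $i$ becomes a bystander in step $i+1$ and may be read differently there, so the two constructions of the shared state $x^{(i+1)}$ need not agree. I reconcile them by inserting a short bridging pivotal sub-sequence among successors of $x^{(i+1)}$, exploiting that the two admissible reads of a masking register in state $\meta$ give resolution-ordered read values---hence, via Lemma~\ref{lem:dag}, resolution-ordered writes---and input follow-ups $\meta$ and $1-b$ that differ in exactly one metastable bit. Every remaining ingredient, namely monotonicity through the combinational logic and the type-independence and freedom of the write phase, follows cleanly from Lemmas~\ref{lem:dag} and~\ref{lem:1-round-writes}, so the proof reduces to a careful case analysis on the location and type of each pivotal register together with the reversal symmetry for stabilizing pivots.
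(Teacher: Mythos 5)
Your proposal is correct and follows essentially the same route as the paper's proof: induct over the pivot steps, use the identity read together with Lemma~\ref{lem:dag} and Lemma~\ref{lem:1-round-writes} to get resolution-ordered evaluations, destabilize the affected non-input bits one at a time via the write-phase product structure, and handle stabilizing pivots by reversing the constructed subsequence. The only substantive difference is that you treat the coupling between read value and follow-up state of a \emph{masking} input register explicitly (choosing masked versus identity reads and bridging at shared states), a case the paper's proof passes over by uniformly appending a single item that changes $y_{P_i}$ to \meta\dash---so your extra bookkeeping is not a detour but a more careful rendering of the same argument.
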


\begin{proof}
	See Figure~\ref{fig:pivotal} for an illustration of our arguments.
	Starting from~$x^{(0)}$, we inductively proceed to~$x^{(\ell)}$, extending the sequence $y$ by a suitable subsequence for each step from $x^{(i)}$ to~$x^{(i+1)}$, $i \in [\ell]$.
	We maintain the invariant that the state $y^{(j)}$ corresponding to $x^{(i)}$ fulfills
	\begin{equation}
		\Loc\left( y^{(j)} \right) \circ \Out\left( y^{(j)} \right)
			= f^G \left( \In\left( x^{(i)} \right) \circ \Loc\left( x^{(i)}\right) \right).
	\end{equation}
	Let $\iota = \In(x^{(0)})$ be the state of the input registers.
	By Lemma~\ref{lem:1-round-writes}, $f^G(\iota \circ \Loc(x^{(0)})) \in \Write^C(x^{(0)})$.
	Define $y^{(0)} = \iota' \circ f^G(\iota \circ \Loc(x^{(0)}) \in S_1^C(x^{(0)})$, where $\iota'$ is the uniquely determined state of the input registers after reading~$\iota$.
	By construction, $x^{(0)}$ and $y^{(0)}$ fulfill the invariant.

	We perform the step from $x^{(i)}$ to $x^{(i+1)}$, $i \in [\ell]$.
	Let $P_i$ be the pivotal register from $x^{(i)}$ to $x^{(i+1)}$;
	if $P_i$ is an output register, $y$ does not change, so assume that $P_i$ is an input or local register.
	From the previous step (or the definition of~$y^{(0)}$) we have an execution resulting in state $y^{(j)}$ for some index~$j$, such that $x^{(i)}$ is the result of the read phase.
	For the next step, we keep the result of the read phase for all registers except $P_i$ fixed.
	Regarding all registers that do not depend on~$P_i$, i.e., may attain the same states regardless of what is read from~$P_i$, we rule that they attain the same states as in~$y^{(j)}$, the state associated with~$x^{(i)}$.

	Suppose first that $x_{P_i}^{(i)} = b \neq \meta$ and $x_{P_i}^{(i+1)} = \meta$ (e.g.\ the step from $x^{(2)}$ to $x^{(3)}$ in Figure~\ref{fig:pivotal}).
	Consider the set of non-input registers $\mathcal{R}$ that depend on~$P_i$, i.e., $\mathcal{R} := \{ R \mid f^G(x^{(i)})_R \neq f^G(x^{(i+1)})_R \}$ ($\mathcal{R} = \{L_1, O_1\}$ in our example).
	Since $x^{(i)} \in \ResM(x^{(i+1)})$, by Lemma~\ref{lem:dag} $f^G(x^{(i)}) \in \ResM(f^G(x^{(i+1)}))$.
	Hence, $f^G(x^{(i+1)})_R = \meta \neq f^G(x^{(i)})_R$ for all $R \in \mathcal{R}$.

	If $P_i$ is an input register, we first extend $y$ by one item that only changes $y_{P_i}$ to~\meta, increase $j$ by one if that is the case (e.g.\ the step from $y^{(0)}$ to $y^{(1)}$ in Figure~\ref{fig:pivotal}).
	Then we extend $y$ by $y^{(j+1)}, \dots, y^{(j+|\mathcal{R}|)}$ such that in each step, for one $R \in \mathcal{R}$, we change $y_R$ from $b_R \neq \meta$ to~\meta;
	this is feasible by Corollary~\ref{cor:bitwise-closed}, as the product structure of $C_1$ implies that we can flip any written bit without affecting the others (e.g.\ steps $y^{(2)}$ and $y^{(3)}$ in our example).
	By construction, in state~$y^{(j+|\mathcal{R}|)}$ the state of the non-input registers is $f^G(\iota \circ x^{(i+1)})$, i.e., our invariant is satisfied.

	To cover the case that $x_{P_i}^{(i)} = \meta$ and $x_{P_i}^{(i+1)} = b \neq \meta$, observe that we can apply the same reasoning by reversing the order of the constructed attached subsequence.
	As $y$ is pivotal by construction, this completes the proof.
\end{proof}

Given a pivotal sequence of inputs, there are executions producing a pivotal sequence of attainable successor states.
Using these states for another round, Lemma~\ref{lem:pivotal} can be applied inductively.

\begin{corollary}\label{cor:pivotal}
	Let $C$ be a circuit, $x_0$~its initialization, and
	\begin{equation}
		\left( \iota^{(i)} \right)_{i \in [\ell+1]}, \quad \iota^{(i)} \in \BM^m,
	\end{equation}
	be a pivotal sequence of inputs of~$C$.
	Then there is a pivotal sequence of states
	\begin{equation}
		\left( y^{(j)} \right)_{j \in [\ell'+1]}, \quad y^{(j)} \in \BM^{m+k+n},
	\end{equation}
	that $C$ can attain after $r \in \N$ rounds satisfying $y^{(0)} \in S^C_r(\iota^{(0)} \circ x_0)$ and $y^{(\ell')} \in S^C_r(\iota^{(\ell)} \circ x_0)$.
\end{corollary}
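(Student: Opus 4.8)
The plan is to prove the statement by induction on $r$, using Lemma~\ref{lem:pivotal} as the engine that advances the whole pivotal sequence by one round at a time. First I would lift the given pivotal sequence of \emph{inputs} to a pivotal sequence of \emph{initial states}: set $s^{(i)} := \iota^{(i)} \circ x_0$ for $i \in [\ell+1]$. All of these share the same non-input part $x_0$, and consecutive inputs $\iota^{(i)},\iota^{(i+1)}$ differ in exactly one bit which is metastable in one of them; hence $s^{(i)}$ and $s^{(i+1)}$ differ in precisely that bit inside the input-register block. Thus $(s^{(i)})_{i \in [\ell+1]}$ is a pivotal sequence of states of $C$, and trivially $s^{(i)} \in S^C_0(\iota^{(i)} \circ x_0)$.

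The invariant I would carry through the induction is the following: after $r$ rounds there is a pivotal sequence $(y^{(j)})_{j \in [\ell'+1]}$ of states such that each $y^{(j)}$ lies in $S^C_r(\iota^{(i(j))} \circ x_0)$ for some index $i(j) \in [\ell+1]$, with the two distinguished endpoints anchored as $y^{(0)} \in S^C_r(\iota^{(0)} \circ x_0)$ and $y^{(\ell')} \in S^C_r(\iota^{(\ell)} \circ x_0)$. The base case $r = 0$ is exactly the sequence $(s^{(i)})_{i}$ constructed above, since $S^C_0(\iota^{(i)} \circ x_0) = \{\iota^{(i)} \circ x_0\}$; this already covers more than the claimed range $r \in \N$.

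For the inductive step I would apply Lemma~\ref{lem:pivotal} to the pivotal sequence $(y^{(j)})_{j}$, obtaining a pivotal sequence $(z^{(j')})_{j'}$ of successor states in which each $z^{(j')}$ is a successor state of some $y^{(j)}$, and in which the first and last elements are successor states of $y^{(0)}$ and $y^{(\ell')}$, respectively. Since $y^{(j)} \in S^C_r(\iota^{(i(j))} \circ x_0)$, every successor state of $y^{(j)}$ lies in $S^C_{r+1}(\iota^{(i(j))} \circ x_0)$ by the definition of $S^C_{r+1}$; in particular the first element of $(z^{(j')})_{j'}$ lies in $S^C_{r+1}(\iota^{(0)} \circ x_0)$ and the last in $S^C_{r+1}(\iota^{(\ell)} \circ x_0)$. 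Hence $(z^{(j')})_{j'}$ re-establishes the invariant for $r + 1$, closing the induction.

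I expect the work here to be bookkeeping rather than genuine difficulty, and the two points deserving care are exactly these: verifying that appending the constant block $x_0$ preserves the pivotal property of the input sequence, and observing that the endpoint guarantee of Lemma~\ref{lem:pivotal}---that the first and last output states are successors of the first and last input states---is precisely what keeps the two distinguished endpoints tied to the executions started from $\iota^{(0)}$ and $\iota^{(\ell)}$ as $r$ grows. No new metastability phenomenon appears; the corollary is a clean inductive lift of the single-round Lemma~\ref{lem:pivotal} to $r$ rounds.
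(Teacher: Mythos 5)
Your proposal is correct and takes essentially the same route as the paper, whose entire proof reads ``Inductive application of Lemma~\ref{lem:pivotal} to $C$ and states $\left( \iota^{(i)} \circ x_0 \right)_{i \in [\ell+1]}$.'' Your explicit invariant\dash---anchoring the endpoints $y^{(0)} \in S^C_r(\iota^{(0)} \circ x_0)$ and $y^{(\ell')} \in S^C_r(\iota^{(\ell)} \circ x_0)$ while tracking that every intermediate state is attainable from some $\iota^{(i)}$\dash---simply spells out the bookkeeping the paper leaves implicit.
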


\begin{proof}
	Inductive application of Lemma~\ref{lem:pivotal} to $C$ and states $\left( \iota^{(i)} \circ x_0 \right)_{i \in [\ell+1]}$.
\end{proof}

We wrap up our results in a compact theorem.
It states that a circuit which has to output different results for different inputs can produce metastable outputs.
\begin{theorem}\label{thm:pivotal}
	Let $C$ be a circuit with $C_r(\iota) \cap C_r(\iota') = \emptyset$ for some $\iota, \iota' \in \BM^m$.
	Then $C$ has an $r$-round execution in which an output register becomes metastable.
\end{theorem}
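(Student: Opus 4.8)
The plan is to reduce the statement to Corollary~\ref{cor:pivotal}: I would connect $\iota$ and $\iota'$ by a pivotal sequence of inputs, push it through the circuit to obtain a pivotal sequence of reachable states, and then use the disjointness $C_r(\iota) \cap C_r(\iota') = \emptyset$ to force a metastable output bit somewhere along the way. First I would build a pivotal sequence of inputs $\left(\iota^{(i)}\right)_{i \in [\ell+1]}$ over $\BM^m$ with $\iota^{(0)} = \iota$ and $\iota^{(\ell)} = \iota'$. Such a sequence always exists: process the coordinates in which $\iota$ and $\iota'$ disagree one at a time, routing through~$\meta$. To change a coordinate between two stable values $b$ and $1-b$, use the two steps $b \to \meta \to 1-b$; to change it from or to $\meta$, use a single step. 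In each such step exactly one bit changes and that bit is $\meta$ in one of the two consecutive tuples, so all three conditions of Definition~\ref{def:pivotal} hold.

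Applying Corollary~\ref{cor:pivotal} to this input sequence yields a pivotal sequence of states $\left(y^{(j)}\right)_{j \in [\ell'+1]}$, each attainable by $C$ after $r$ rounds, with $y^{(0)} \in S^C_r(\iota \circ x_0)$ and $y^{(\ell')} \in S^C_r(\iota' \circ x_0)$. By the definition of $C_r$ in Equation~\eqref{eq:C}, this gives $\Out(y^{(0)}) \in C_r(\iota)$ and $\Out(y^{(\ell')}) \in C_r(\iota')$. Since $C_r(\iota) \cap C_r(\iota') = \emptyset$ by assumption, the two outputs must differ, so the output projection changes somewhere along the sequence.

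The final step is the pivotal ``a value cannot jump'' argument. Because $\Out(y^{(0)}) \neq \Out(y^{(\ell')})$ while consecutive states $y^{(j)}, y^{(j+1)}$ differ in exactly one bit, there is an index $j$ at which the differing bit belongs to an output register (otherwise the output projection would be constant across the whole sequence). By the third condition of Definition~\ref{def:pivotal}, this differing output bit equals $\meta$ in either $y^{(j)}$ or $y^{(j+1)}$. As both states are attainable after $r$ rounds, the corresponding $r$-round execution drives an output register metastable, which is exactly the claim.

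I expect no serious obstacle here, since Lemma~\ref{lem:pivotal} and Corollary~\ref{cor:pivotal} already carry the technical weight of propagating pivotal structure through the combinational logic over multiple rounds. The only points that need care are verifying that an input pivotal sequence always exists (handled by the $b \to \meta \to 1-b$ routing above) and observing that a single-bit change which is $\meta$ in one endpoint forbids a direct $0 \to 1$ transition of an output register\dash---this is precisely what rules out the output changing without passing through~$\meta$, and hence turns the mere \emph{change} of output into the existence of a metastable output.
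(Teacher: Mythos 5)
Your proof is correct and takes essentially the same route as the paper's: apply Corollary~\ref{cor:pivotal} to a pivotal sequence from $\iota$ to $\iota'$, use disjointness to get $\Out(y^{(0)}) \neq \Out(y^{(\ell')})$, and conclude that some $\Out(y^{(j)})$ must contain a metastable bit. You merely make explicit two steps the paper leaves implicit\dash---the $b \to \meta \to 1-b$ construction of the input pivotal sequence and the single-bit-change argument forcing an output bit through~\meta\dash---both of which are correct.
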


\begin{proof}
	Apply Corollary~\ref{cor:pivotal} to a pivotal sequence from $\iota$ to $\iota'$ and~$C$, yielding a pivotal sequence $y$ of states that $C$ can attain after $r$-round executions.
	Since $C_r(\iota) \ni \Out(y^{(0)}) \neq \Out(y^{(\ell')}) \in C_r(\iota')$, some $\Out(y^{(j)})$ contains a metastable bit.
\end{proof}

Marino proved that no digital circuit, synchronous or not, can reliably
\begin{enumerate*}
\item\label{enum:marino-avoid}
	compute a non-constant function and guarantee non-metastable output,

\item\label{enum:marino-detect}
	detect whether a register is metastable, or

\item\label{enum:marino-resolve}
	resolve metastability of the input while faithfully propagating stable input~\cite{m-gtmo-81}.
\end{enumerate*}
Theorem~\ref{thm:pivotal} captures~\ref{enum:marino-avoid}, and Corollaries~\ref{cor:no-meta-detection} and~\ref{cor:no-meta-resolve} settle~\ref{enum:marino-detect} and~\ref{enum:marino-resolve}, respectively.
The key is to observe that a circuit detecting or resolving metastability is non-constant, and hence, by Theorem~\ref{thm:pivotal}, can become metastable\dash---defeating the purpose of detecting or resolving metastability in the first place.

\begin{corollary}\label{cor:no-meta-detection}
	There exists no circuit that implements $f\colon \BM \to \Pow(\BM)$ with
	\begin{equation}
		f(x) = \begin{cases}
				\{1\} & \text{if $x = \meta$, and} \\
				\{0\} & \text{otherwise.}
			\end{cases}
	\end{equation}
\end{corollary}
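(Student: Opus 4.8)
The plan is to reduce the claim to Theorem~\ref{thm:pivotal}, following the intuition stated right before the corollary that a metastability detector is a non-constant function and must therefore itself be able to become metastable. I would argue by contradiction: assume some circuit $C$ implements $f$, so that there is an $r \in \N$ with $C_r(\iota) \subseteq f(\iota)$ for all $\iota \in \BM$. Since $f$ has a single output bit, every output lies in $\BM$.

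First I would set up the disjointness hypothesis of Theorem~\ref{thm:pivotal} for the inputs $\iota = 0$ and $\iota' = \meta$. From the assumption I get $C_r(0) \subseteq f(0) = \{0\}$ and $C_r(\meta) \subseteq f(\meta) = \{1\}$, whence $C_r(0) \cap C_r(\meta) \subseteq \{0\} \cap \{1\} = \emptyset$.

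Then I would apply Theorem~\ref{thm:pivotal} to $C$ with these two inputs. It guarantees an $r$-round execution of $C$ in which the output register becomes metastable, i.e., a reachable state $s_r$ with $\Out(s_r) = \meta$ (here $n = 1$, so the single output bit is metastable). Because the input is one bit wide, the unique pivotal sequence from $0$ to $\meta$ is $(0, \meta)$, containing no intermediate inputs; hence this execution starts from $0 \circ x_0$ or from $\meta \circ x_0$, and $\Out(s_r)$ belongs to $C_r(0)$ or to $C_r(\meta)$.

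This contradicts $C_r(0) \subseteq \{0\}$ and $C_r(\meta) \subseteq \{1\}$, as $\meta$ lies in neither singleton, which finishes the proof. I expect the only point requiring care to be the observation that the metastable output is attributable to one of the two chosen inputs rather than to some intermediate element of the pivotal sequence. This is immediate here, however, because a single-bit input admits no intermediate pivotal states between $0$ and $\meta$, so $0$ and $\meta$ are the only inputs involved and the metastable output must arise under one of them. (Note that nonemptiness of $C_r(0)$ and $C_r(\meta)$ need not be argued separately, since the disjointness hypothesis holds regardless and the executions witnessing metastability are supplied by Theorem~\ref{thm:pivotal} itself.)
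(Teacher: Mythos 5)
Your proposal is correct and follows essentially the same route as the paper: assume a circuit $C$ implements $f$ in $r$ rounds, observe $C_r(0) \cap C_r(\meta) \subseteq \{0\} \cap \{1\} = \emptyset$, and apply Theorem~\ref{thm:pivotal} with $\iota = 0$, $\iota' = \meta$ to obtain an $r$-round execution with metastable output, contradicting the assumption. Your final worry about attributing the metastable output to one of the two chosen inputs is in fact unnecessary (though your resolution of it is fine): since $f(x) \in \{\{0\},\{1\}\}$ for \emph{every} $x \in \BM$, a metastable output in any $r$-round execution from an initial state already contradicts $C_r(\iota'') \subseteq f(\iota'')$, regardless of which input it arises under.
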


\begin{proof}
	Assume such a circuit $C$ exists and implements $f$ in $r$ rounds.
	$C_r(0) \cap C_r(\meta) = \emptyset$, so applying Theorem~\ref{thm:pivotal} to $\iota = 0$ and $\iota' = \meta$ yields that $C$ has an $r$-round execution with metastable output, contradicting the assumption.
\end{proof}

\begin{corollary}\label{cor:no-meta-resolve}
	There exists no circuit that implements $f\colon \BM \to \mathcal{P}(\BM)$ with
	\begin{equation}
		f(x) = \begin{cases}
				\{0,1\} & \text{if $x = \meta$, and} \\
				\{x\}   & \text{otherwise.}
			\end{cases}
	\end{equation}
\end{corollary}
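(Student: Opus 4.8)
The plan is to reuse Theorem~\ref{thm:pivotal} in precisely the manner of Corollary~\ref{cor:no-meta-detection}, exploiting the fact that the specified $f$ prescribes only metastability-free output sets on \emph{every} input. Accordingly, I would argue by contradiction: assume some circuit $C$ implements $f$ in $r \in \N$ rounds, so that $C_r(\iota) \subseteq f(\iota)$ for all $\iota \in \BM$. The goal is then to exhibit an $r$-round execution whose output register becomes metastable, which immediately violates implementability since no value of $f$ contains \meta.

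First I would identify two inputs with disjoint output sets to feed into Theorem~\ref{thm:pivotal}. The resolving clause $f(\meta) = \{0,1\}$ is not helpful for disjointness, so instead I would use the faithful-propagation clauses: since $f(0) = \{0\}$ and $f(1) = \{1\}$, the assumption gives $C_r(0) \subseteq \{0\}$ and $C_r(1) \subseteq \{1\}$, whence $C_r(0) \cap C_r(1) \subseteq \{0\} \cap \{1\} = \emptyset$. Applying Theorem~\ref{thm:pivotal} to $\iota = 0$ and $\iota' = 1$ then yields an $r$-round execution of $C$ in which some output register becomes metastable.

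The only delicate point — and the conceptual heart of why \emph{resolving} is impossible — concerns \emph{where} this forced metastability is allowed to surface. Unlike the detection corollary, here I seed the argument with two \emph{stable} inputs, so the metastable output produced by Theorem~\ref{thm:pivotal} may well belong to the intermediate input $\meta$ appearing in a pivotal sequence from $0$ to $1$, precisely the input a resolver is meant to stabilize. The observation that closes the gap is that the specification forbids metastable output there too: $f(0) = \{0\}$, $f(\meta) = \{0,1\}$, and $f(1) = \{1\}$ are all subsets of $\B$, containing no metastable value, and a single-bit input ranges over exactly $\BM = \{0,\meta,1\}$. Hence, whichever input the offending execution belongs to, its metastable output lies outside $f$ of that input, contradicting $C_r(\iota) \subseteq f(\iota)$ and completing the proof. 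I expect no substantive obstacle beyond making this last step explicit, as all the heavy lifting is already carried by Theorem~\ref{thm:pivotal}.
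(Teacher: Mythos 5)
Your proposal is correct and matches the paper's proof, which is exactly the one-liner ``as in Corollary~\ref{cor:no-meta-detection} with $\iota = 0$ and $\iota' = 1$.'' Your explicit treatment of the delicate point\dash---that the metastable output produced by Theorem~\ref{thm:pivotal} may arise on the intermediate input $\meta$ of the pivotal sequence, yet still contradicts implementability because $f(\meta) = \{0,1\}$ contains no metastable value\dash---is precisely the detail the paper leaves implicit, so you have simply spelled out the same argument.
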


\begin{proof}
	As in Corollary~\ref{cor:no-meta-detection} with $\iota = 0$ and $\iota' = 1$.
\end{proof}

In summary, our circuit model (Section~\ref{sec:model}) is consistent with physical models of metastability, yet admits the computation of non-trivial functions (Section~\ref{sec:mux}) that are crucial in constructing complex metastability-containing circuits~\cite{blm-nomcsn-17,fklp-mametdc-17,lm-emcgc2s-16}.
This gives rise to further questions:
\begin{enumerate*}
\item
	Is there a fundamental difference between simple and masking registers?
\item
	Which functions can be implemented?
\end{enumerate*}
We study these questions in Sections~\ref{sec:hierarchy} and~\ref{sec:simple}, respectively.

\section{Computational Hierarchy}
\label{sec:hierarchy}

In this section, we determine the impact of the number of rounds $r \in \N$ and the available register types on the computational power of a circuit.
Recall that $\Fun_S^r$ denotes the functions implementable using $r$ rounds and simple registers only, and $\Fun_M^r$ those implementable using $r$ rounds and arbitrary registers.
The main results are the following.
\begin{enumerate}
\item
	Even in the presence of metastability, circuits restricted to simple registers can be unrolled (Section~\ref{sec:hierarchy-simple}): $\Fun_S^r = \Fun_S^{r+1}$.

\item
	With masking registers, however, more functions become implementable with each additional round (Section~\ref{sec:hierarchy-arbitrary}): $\Fun_M^r \subsetneq \Fun_M^{r+1}$.
\end{enumerate}
Together with Corollary~\ref{cor:1-round-equivalence}, we obtain the following hierarchy:
\begin{equation}
	\cdots = \Fun_S^2 = \Fun_S^1 = \Fun_M^1 \subsetneq \Fun_M^2 \subsetneq \cdots.
\end{equation}
We believe this to make a strong case for further pursuing masking registers in research regarding metastability-containing circuits.

\subsection{Simple Registers}
\label{sec:hierarchy-simple}

It is folklore that binary-valued synchronous circuits can be unrolled such that the output after $r \in \N$ clock cycles of the original circuit is equal to the output after a single clock cycle of the unrolled circuit.
Theorem~\ref{thm:unrolling} states that this result also holds in presence of potentially metastable simple registers.
Note that\dash---defying intuition\dash---masking registers do not permit this, see Theorem~\ref{thm:diff}.

\begin{figure}
	\begin{center}
		\def\svgwidth{.7\linewidth}
		{\small 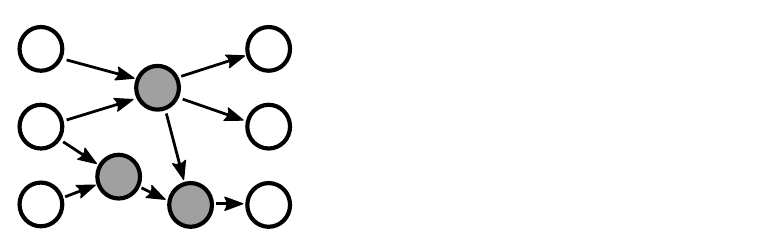}
	\end{center}
	\caption{%
		Unrolling three rounds of the circuit in Figure~\ref{fig:dag} with three gates (gray), and four registers (white).
		Local registers become fan-out buffers, and early output is ignored.}
	\label{fig:unrolling}
\end{figure}

\begin{theorem}\label{thm:unrolling}
	Given a circuit $C$ with only simple registers such that $r \in \N$ rounds of $C$ implement~$f$, one can construct a circuit $C'$ such that one round of $C'$ implements~$f$.
\end{theorem}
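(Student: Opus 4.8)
The plan is to realize the standard \emph{unrolling} construction and then argue that the write-phase nondeterminism accumulated over the $r$ rounds can be collapsed into a single write phase at the end. Concretely, I would build $C'$ from $r$ disjoint copies $G_1,\dots,G_r$ of the combinational DAG $G$ of $C$ (see Figure~\ref{fig:unrolling}): the $k$ local-register output nodes of $G_i$ are fed, by direct arcs, into the corresponding $k$ local-register input nodes of $G_{i+1}$; the $m$ input-register source nodes are fanned out to all copies; the local-register inputs of the first copy $G_1$ are driven by constant gates realizing the local part $\Loc(x_0)$ of the initialization; and the $n$ output-register nodes of the \emph{last} copy $G_r$ are taken as the outputs of $G'$, while all intermediate output-register nodes are discarded. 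The resulting $C'$ has the $m$ input registers and $n$ output registers of $C$ (all simple), no local registers, and computes in one round. Since $G'$ is again a DAG, $f^{G'}\colon\BM^m\to\BM^n$ is well defined and deterministic, and by Lemma~\ref{lem:1-round-writes} we have $C'_1(\iota)=\ResM(f^{G'}(\iota))$ for every input $\iota$.

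Next I would pin down what $f^{G'}(\iota)$ computes. Because $C$ has only simple registers, Observation~\ref{obs:simple-read} makes every read deterministic and, in particular, the input registers keep their value $\iota$ in every round; hence fanning $\iota$ out to all copies is faithful. Define the ``no-stabilization'' trajectory by $y^{(0)}:=\Loc(x_0)$ and $y^{(i+1)}:=f^G_L(\iota\circ y^{(i)})$, where $f^G_L$ and $f^G_O$ denote the local and output parts of $f^G$. By construction the chained copies evaluate exactly this trajectory, so $f^{G'}(\iota)=f^G_O(\iota\circ y^{(r-1)})$.

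The theorem needs only $C'_1(\iota)\subseteq f(\iota)$, and since $r$ rounds of $C$ implement $f$, i.e.\ $C_r(\iota)\subseteq f(\iota)$, it suffices to prove $C'_1(\iota)=\ResM(f^{G'}(\iota))\subseteq C_r(\iota)$. For this, take $z\in\ResM(f^{G'}(\iota))$ and exhibit an $r$-round execution of $C$ with output $z$: in rounds $1,\dots,r-1$ let the write phase resolve \emph{nothing}, i.e.\ pick $x_i=\bar{x}_i=f^G(\iota\circ\Loc(s_{i-1}))$ (a legal choice, as $\bar{x}_i\in\ResM(\bar{x}_i)$), so that $\Loc(s_i)=y^{(i)}$ by induction; in round $r$ the evaluation yields a state whose output part equals $f^{G'}(\iota)$, and since $\ResM$ acts bitwise, the write phase can stabilize this to any $z\in\ResM(f^{G'}(\iota))$ (formally via the product structure underlying Corollary~\ref{cor:bitwise-closed}). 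Thus $z\in C_r(\iota)$, completing the argument.

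For completeness I would also establish the reverse inclusion $C_r(\iota)\subseteq\ResM(f^{G'}(\iota))$, which shows the unrolling is \emph{exact}. Here the one real obstacle appears: intermediate write phases may partially stabilize the local state, and one must verify this never produces an output outside $\ResM(f^{G'}(\iota))$. This is exactly where Lemma~\ref{lem:dag} (monotonicity of $f^G$ under partial resolution) together with transitivity of $\ResM$ does the work: by induction on the round, any reachable $\Loc(s_i)$ lies in $\ResM(y^{(i)})$, whence $\Out(s_r)\in\ResM(f^G_O(\iota\circ y^{(r-1)}))=\ResM(f^{G'}(\iota))$. The essential use of the simple-register hypothesis is the constancy and determinism of the input registers across rounds; this is precisely what fails for masking registers, consistent with the strict hierarchy claimed in Theorem~\ref{thm:diff}.
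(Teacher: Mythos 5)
Your proposal is correct and follows essentially the same route as the paper: the standard unrolling construction, with the forward inclusion witnessed by executions of $C$ whose write phases resolve nothing, and the reverse inclusion obtained from Lemma~\ref{lem:dag} together with the fact that the single write phase of $C'$ permits arbitrary stabilization. The only differences are organizational\dash---you unroll all $r$ copies at once with an explicit no-stabilization trajectory, replace the initial local registers by constant gates, and observe that only the inclusion $C'_1(\iota)\subseteq C_r(\iota)$ is needed, whereas the paper unrolls two rounds at a time, inducts on~$r$, and proves the exact equality $S_2^C = S_1^{C'}$ (which your ``completeness'' direction reproduces).
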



\begin{proof}
	We construct a circuit $C'$ with $C'_1(\iota) = C_2(\iota)$; the claim for $r > 2$ then follows by induction.

	Given~$C$, we construct $C'$ as follows, compare Figure~\ref{fig:unrolling}.
	Let $G$ be the combinational logic \ac{DAG} of~$C$, make two copies $G_1 = (V_1, A_1)$ and $G_2 = (V_2, A_2)$ of~$G$, and let $G' = (V_1 \cup V_2, A_1 \cup A_2)$ be the combinational logic \ac{DAG} of~$C'$, up to the following modifications.
	Every input register $I$ of $C$ corresponds to input nodes $v_1^I \in V_1$ and $v_2^I \in V_2$.
	Contract $\{v_1^I, v_2^I\}$ to a single input node in~$G'$ (compare $I_1$ in Figure~\ref{fig:unrolling}), and associate it with a new input register in~$C'$;
	repeat this for all input registers.
	In order to ignore ``early'' output, replace each output node in $G_1$ corresponding to an output register in $C$ with a gate that has one input and whose output is ignored (like the first two copies of $O_1$ in Figure~\ref{fig:unrolling}).
	The remaining input and output nodes are associated with local registers.
	Each local register $L$ of $C$ corresponds to exactly one output node $v_1^L \in V_1$ and one input node $v_2^L \in V_2$.
	Contract $\{v_1^L, v_2^L\}$ to a fan-out buffer gate that simply forwards its input in $G'$ (the center copies of $L_1$ and $L_2$ in Figure~\ref{fig:unrolling}).
	Associate the $k$ remaining input nodes of $G_1$ and output nodes of $G_2$ with local registers.
	Observe that $G'$ has $n$ input, $m$~output, and $k$ local registers.
	Define the initial state of $C'$ as that of~$C$.

	To check that one round of $C'$ is equivalent to two rounds of~$C$, let $\iota \in \BM^m$ be an input, $s_0$ the initial state of both $C$ and $C'$ w.r.t.\ input~$\iota$, and $x_0 = \In(s_0) \circ \Loc(s_0)$.
	First recall that by definition (Figure~\ref{fig:register-states-simple}), simple registers never change their state when read.
	Hence by construction of~$G'$, we have $\Eval^{C'}(s_0) = \{ f^{G'}(x_0) \} = \{ f^G(\iota \circ \Loc(f^G(x_0))) \}$.

	In~$C$, we have $\Write^C(s_0) = \ResM(f^G(x_0))$ by Lemma~\ref{lem:1-round-writes}.
	Thus, in the second round of $C$, for any $s_1 \in S_1^C$ we have that $\Read^C(s_1) = \{ \iota \circ \Loc(s_1) \} \subseteq \ResM(\iota \circ \Loc(f^G(x_0)))$, and by Lemma~\ref{lem:dag} $\Eval^C(s_1) \subseteq \ResM(f^G(\iota \circ \Loc(f^G(x_0))))$.
	This means that the second evaluation phase of $C$ yields a stabilization of the first evaluation phase of~$C'$, i.e., $S_2^C \subseteq S_1^{C'}$, because the write phase allows for arbitrary stabilization.

	On the other hand, the unstabilized $\iota \circ \Loc(f^G(x_0)) \in \Eval^C(s_1)$, so $S_1^{C'} \subseteq S_2^C$.
	Together, we have $S_2^C = S_1^{C'}$ and $C'_1(\iota) = C_2(\iota)$ follows.
\end{proof}

Naturally, the unrolled circuit can be significantly larger than the original one.
However, the point is that adding rounds does not affect the computational power of circuits with simple registers only.
\begin{corollary}\label{cor:funs}
	For all $r \in \N$, $\Fun_S^r = \Fun_S^1 =: \Fun_S$.
\end{corollary}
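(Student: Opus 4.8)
The plan is to read this off Theorem~\ref{thm:unrolling} together with Observation~\ref{obs:subseteq}, which sandwich $\Fun_S^r$ between $\Fun_S^1$ and itself. One inclusion is free: Observation~\ref{obs:subseteq} gives the chain $\Fun_S^1 \subseteq \Fun_S^2 \subseteq \cdots$, hence $\Fun_S^1 \subseteq \Fun_S^r$ for every $r \in \N$. So the only thing left to prove is $\Fun_S^r \subseteq \Fun_S^1$.

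For that, I would fix $r \in \N$ and an arbitrary $f \in \Fun_S^r$. By definition there is a circuit $C$ using only simple registers such that $r$ rounds of $C$ implement $f$. Theorem~\ref{thm:unrolling} then produces a circuit $C'$ such that a single round of $C'$ implements $f$; since $C'$ already witnesses $f$ with one round, it only remains to check that $C'$ may be taken to use simple registers exclusively. This is immediate from the construction in the proof of Theorem~\ref{thm:unrolling}: the only gates added are fan-out buffers and one-input ``output-suppressing'' gates, and every register of $C'$ is a freshly introduced input or local register, none of which is forced to be masking, so one may declare them all simple. (For the local registers one could alternatively invoke Corollary~\ref{cor:register-equivalence}, while the newly contracted input registers are simply chosen simple.) Hence $f \in \Fun_S^1$, giving $\Fun_S^r \subseteq \Fun_S^1$; combining the two inclusions yields $\Fun_S^r = \Fun_S^1$ for all $r$, and we set $\Fun_S := \Fun_S^1$.

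I do not expect a genuine obstacle here: all the substance lives in Theorem~\ref{thm:unrolling}, whose statement already absorbs the induction on $r$ (``the claim for $r > 2$ then follows by induction''). The single point deserving a sentence of justification is that unrolling does not smuggle in a masking register; everything else is a definitional unwinding of what $\Fun_S^r$ means.
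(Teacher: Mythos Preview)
Your proposal is correct and matches the paper's approach: the paper states this corollary without proof, treating it as immediate from Theorem~\ref{thm:unrolling} (for $\Fun_S^r \subseteq \Fun_S^1$) together with Observation~\ref{obs:subseteq} (for the reverse inclusion). Your extra sentence about $C'$ having only simple registers is a fair point to make explicit; alternatively one could sidestep it entirely via Corollary~\ref{cor:1-round-equivalence}, since $\Fun_M^1 = \Fun_S^1$ already forces any one-round implementation into $\Fun_S^1$.
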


\subsection{Arbitrary Registers}
\label{sec:hierarchy-arbitrary}

For simple registers, additional rounds make no difference in terms of computability\dash---the corresponding hierarchy collapses into $\Fun_S$.
In the following, we demonstrate that this is not the case in the presence of masking registers:
$\Fun_M^r \subsetneq \Fun_M^{r+1}$ for all $r \in \N$.
We demonstrate this using a metastability-containing fan-out buffer specified by Equation~\eqref{eq:masking-fanout}.
It creates $r$ copies of its input bit, at most one of which is permitted to become metastable:
\begin{equation}\label{eq:masking-fanout}
	f(x) = \begin{cases}
		\{ x^r \}
			& \text{if $x \neq \meta$,} \\
		\bigcup_{i \in [r]} \ResM(0^i \meta 1^{r-i-1})
			& \text{otherwise.}
	\end{cases}
\end{equation}

\begin{figure}
	\begin{center}
		\def\svgwidth{.55\linewidth}
		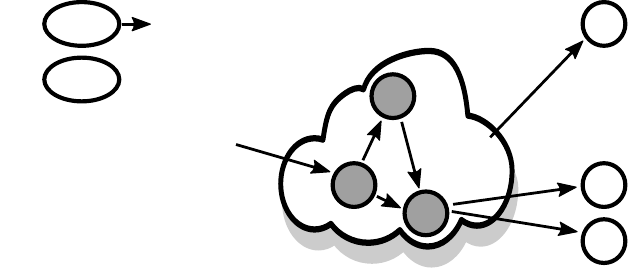
	\end{center}
	\caption{Simulating a masking register with a selector.}
	\label{fig:diff}
\end{figure}

\begin{theorem}\label{thm:diff}
	$\Fun_M^r \subsetneq \Fun_M^{r+1}$ for all $r \in \N$.
\end{theorem}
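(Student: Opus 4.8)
The plan is to exhibit, for each $r \in \N$, a single function witnessing $\Fun_M^{r+1} \setminus \Fun_M^r \neq \emptyset$; together with $\Fun_M^r \subseteq \Fun_M^{r+1}$ from Observation~\ref{obs:subseteq} this yields the strict inclusion. As witness I take the fan-out buffer $f$ of Equation~\eqref{eq:masking-fanout} instantiated with $r+1$ copies, i.e.\ $f(x) = \{x^{r+1}\}$ for $x \neq \meta$ and $f(\meta) = \bigcup_{i \in [r+1]} \ResM(0^i \meta 1^{r-i})$. Note that $f(\meta)$ consists exactly of the \emph{sorted} strings carrying \emph{at most one} metastable bit, and that $0^{r+1}, 1^{r+1} \in f(\meta)$. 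I must show $f \in \Fun_M^{r+1}$ and $f \notin \Fun_M^r$.

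For membership I would build a shift register driven by a single mask-$0$ input register $X$ holding $x$. In round $t$ the circuit reads $X$ and shifts the previously read values through $r$ simple local registers; after $r+1$ rounds all $r+1$ reads of $X$ are available and are emitted to the outputs in temporal order. By the state machine of Figure~\ref{fig:register-states-mask0}, a mask-$0$ register in state $\meta$ produces a read sequence of the form $0^a \meta 1^b$ (at most one $\meta$, all $0$s before all $1$s), so the emitted vector is always a sorted string with at most one metastable bit, hence in $f(\meta)$; for stable $x$ every read equals $x$, giving $x^{r+1}$. This establishes $f \in \Fun_M^{r+1}$.

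The lower bound $f \notin \Fun_M^r$ is the substantial part, and I would derive it from two facts. First, a \emph{pivotal sweep}: any $C$ implementing $f$ satisfies $C_r(0) = \{0^{r+1}\}$ and $C_r(1) = \{1^{r+1}\}$, so applying Corollary~\ref{cor:pivotal} to the input pivotal sequence $0, \meta, 1$ yields a pivotal sequence of $r$-round-reachable states whose outputs run from $0^{r+1}$ to $1^{r+1}$. Since inputs $0$ and $1$ force stable outputs, every state in this sequence carrying a metastable output bit is reachable from input $\meta$, so its output lies in $f(\meta)$, i.e.\ is sorted with a single metastable bit. The output projection is therefore a walk that changes one (metastability-involving) bit at a time and stays inside the sorted-single-$\meta$ patterns; as these patterns form a path whose two ends are $0^{r+1}$ and $1^{r+1}$, the walk must visit a state with the metastable bit in \emph{each} of the $r+1$ positions. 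Second, a \emph{counting invariant}: under the constraint (guaranteed by implementing $f$) that no $\meta$-input execution ever produces two metastable output bits, the number of output positions that can carry the sole metastable bit after $t$ rounds is at most $t$. The base case $t = 1$ follows from Corollary~\ref{cor:bitwise-closed}: since $C_1$ is a product of single-bit functions and two positions evaluating to $\BM$ on input $\meta$ would create a two-$\meta$ output, at most one position is metastable-capable. The inductive step uses that metastability is created only at register reads, while deterministic evaluation and partial-resolution writes never manufacture new metastability, and that a masking register emits a metastable read at most once before stabilizing; hence one further round enables at most one additional position. Combining the two facts gives $r + 1 \le r$, a contradiction.

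The main obstacle is the inductive step of the counting invariant. Its difficulty is that intermediate rounds are \emph{not} constrained to be two-$\meta$-free, and that the non-deterministic write phase can relocate a metastable bit among several positions, so a single round of fresh metastability must be shown to enlarge the set of reachable sole-metastable positions by at most one; this requires tracking how the unique metastable read of the input register propagates and arguing that any attempt to realize two genuinely new positions out of one injection round forces a two-$\meta$ output in some execution. Handling arbitrary input-register types (simple and mask-$1$ as well as mask-$0$) within this argument is the remaining bookkeeping.
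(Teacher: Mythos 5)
Your membership direction and your ``pivotal sweep'' are both fine: the upper bound is exactly the paper's shift-register construction with a mask-$0$ input register, and the sweep is a correct (and, incidentally, novel relative to the paper, which does not use Corollary~\ref{cor:pivotal} here) argument that all $r+1$ sole-metastable output patterns $0^i\meta 1^{r-i}$ must be reachable, since reachable outputs are confined to the path $0^{r+1},\,0^r\meta,\,0^r1,\,\dots,\,\meta 1^r,\,1^{r+1}$ and a one-bit-at-a-time walk from one end to the other visits every vertex. The proof, however, stands or falls with your counting invariant, and there is a genuine gap: the inductive step is not proved (you flag it yourself as the ``main obstacle''), and the base case is also unsound for $r>1$. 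The constraint you invoke\dash---``no $\meta$-input execution ever produces two metastable output bits''\dash---is only guaranteed at round $r$, because $C_r(\meta)\subseteq f(\meta)$ says nothing about intermediate rounds: output registers are overwritten every round, so a two-$\meta$ output after round $1$ is perfectly legal, and Corollary~\ref{cor:bitwise-closed} applied to $C_1$ therefore yields no bound when $r>1$. More fundamentally, ``one round enables at most one new position'' is the wrong granularity: rounds are not what injects metastability, reads of the metastable input register are, and relating the two is precisely the unproven content.

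The paper closes this by a different route: by Corollary~\ref{cor:register-equivalence} all non-input registers may be assumed simple; a simple input register is ruled out via Theorem~\ref{thm:unrolling} and Lemma~\ref{lem:1-round-writes} (the output $\meta^r$ becomes possible); and for a masking input register the $r-1$ possible reads are externalized into $r-1$ stable input registers feeding an $(r-1)$-round selector, the resulting circuit is unrolled to one round, and a pigeonhole on flipping the simulated inputs one by one from $0^{r-1}$ to $1^{r-1}$ produces one input bit on which two output bits depend, whence Lemma~\ref{lem:1-round-writes} makes both metastable in a single execution. If you want to salvage your two-fact structure, replace the per-round induction by counting \emph{per injection round}: after the register-equivalence reduction, fix the round $j\in\{1,\dots,r\}$ in which the mask-$0$ input register emits its unique $\meta$ (read sequence $0^{j-1}\meta 1^{r-j}$, cf.\ Figure~\ref{fig:register-states-mask0}); then all reads are determined, the stabilization-free execution is unique and, by Lemma~\ref{lem:dag}-style monotonicity applied round by round, every other execution with the same $j$ outputs an element of $\ResM$ of its output; since that execution is itself valid, its output lies in $f(\meta)$ and carries at most one $\meta$, so injection round $j$ contributes at most one sole-$\meta$ position. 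With at most $r$ injection rounds you get at most $r$ positions, contradicting the $r+1$ demanded by your sweep\dash---but as submitted, the proposal does not contain this (or any) proof of Fact~2.
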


%

\begin{proof}
	Fix $2 \leq r \in \N$ and consider $f$ from~\eqref{eq:masking-fanout}.
	We first show $f \in \Fun_M^r$, and then that $f \notin \Fun_M^{r-1}$.

	$f$~is implemented by $r$ rounds of the circuit $C$ which uses a mask-$0$ input register $R_{r-1}$ and a chain of local registers $R_{r-2}, \dots, R_0$.
	In each round, the value read from register~$R_{i+1}$, $i \in [r-1]$, is copied to~$R_i$, and output register~$O_i$, $i \in [r]$, gets the value read from~$R_i$.
	Observe that the specification of a mask-$0$ register is such that, given an initial state, $r$~reads (and possibly stabilization in the write phase) may return exactly the sequences specified in~\eqref{eq:masking-fanout}.
	Since $C$ faithfully copies these values, it follows that $f = C_r \in \Fun_M^r$.

	We claim that for $r \geq 2$, $f \notin \Fun_M^{r-1}$.
	Assume for contradiction that there is a circuit $C$ such that $C_{r-1} \subseteq f$.
	We derive a contradiction by simulating the behavior of $C$ in a circuit $C'$ with $r - 1$ simple input registers, which may initially hold any possible sequence of values read from the input register of $C$ in $r - 1$ rounds.

	To specify this circuit, we first observe that the following subcircuits are straightforward to implement:
	\begin{description}
	\item [$r$-round counters]
		take no input and have $r$ outputs, such that the $i$-th output is $1$ in round $1 \leq i \leq r$ and $0$ else.
		This is implemented by a linear chain of local registers~$R_i$, $i \in [r]$ (i.e., $R_i$ is copied to $R_{i+1}$ for $i \in [r-1]$), where $R_0$ is initialized to $1$ and all others to~$0$, output~$O_{i+1}$, $i \in [r]$, is fed the \gxor of $R_i$ and~$R_{i+1}$, and $R_{r-1}$ is copied to~$O_r$.

	\item [$r$-round selectors]
		take $r$ inputs~$x_i$, $i \in [r]$, and have one output~$O$, such that the state of~$O$ in round $1 \leq i \leq r$ is in $\ResM(x_{i-1})$ (i.e., holds a copy of~$x_{i-1}$).
		This is achieved by using an $r$-round counter and feeding the \gand of $x_i$ and $c_i$ (the $i$-th counter output) into an $r$-ary \gor-gate whose output is written into~$O$.
	\end{description}

	By Corollary~\ref{cor:register-equivalence}, we may assume w.l.o.g.\ that all non-input registers of $C$ are simple.
	If the input register is also simple, $\meta^r \in C_{r-1}(\meta) \notin f(\meta)$ by Theorem~\ref{thm:unrolling} and Lemma~\ref{lem:1-round-writes}.

	Consider the case that the input register is a mask-$0$ register and compare Figure~\ref{fig:diff}.
	Define $C'$ as a copy of~$C$, except that $r - 1$ simple input registers serve as input to an $(r-1)$-round selector.
	This compound represents the only input register $R$ of~$C$:
	Every gate or output node driven by $R$ in $C$ is instead wired to the selector's output in~$C'$.

	A surjective mapping of executions of $C'$ with inputs restricted to $\{ 0^i \meta 1^{r-i-1} \mid i \in [r-1] \}$, i.e., all possible reads from $R$ in state~\meta, to executions of $C$ is defined as follows.
	We interpret the selector's output in round $r$ as the value read from $R$ in round $r$ and ``copy'' the remaining execution of $C'$ (without inputs and the selector) to obtain a complete execution of~$C$.
	Due to our restriction on the inputs, the result always is a feasible execution of $C$ with input~\meta.

	By Theorem~\ref{thm:unrolling}, we may w.l.o.g.\ assume that a single round of $C'$ implements~$f$.
	Consider the sequence of $C'$-inputs from $0^{r-1}$ to $1^{r-1}$ in which we flip the bits one by one from right to left, from $0$ to~$1$.
	By the pigeon hole principle, there must be some $1 \leq \bar{r} \leq r-1$ so that two output bits of $C'$ change compared to $\bar{r} - 1$.
	Since, when fixing the other input bits, two outputs $\ell \neq \ell'$ depend on the $\bar{r}$-th input bit and $C'$ only uses simple registers, we have by Lemma~\ref{lem:1-round-writes} that $\meta\meta \in \Write^{C'}(0^{r-1-\bar{r}} \meta 1^{\bar{r}-1})_{\ell,\ell'}$.
	Hence, $\ell$ and $\ell'$ can become metastable in the same execution of~$C'$.
	We map this execution to an execution of~$C$, in which the corresponding output registers attain the same state (i.e., two of them are~\meta) after $r - 1$ rounds.
	This covers the case that the input register is a mask-$0$ register;
	a mask-$1$ register is handled analogously.

	We arrive at the contradiction that $C_{r-1} \not\subseteq f$, implying that $f \notin \Fun_M^{r-1}$.
	Overall, $\Fun_M^{r-1} \neq \Fun_M^r$.
	As $r \geq 2$ was arbitrary and, by Observation~\ref{obs:subseteq}, $\Fun_M^{r-1} \subseteq \Fun_M^r$, this concludes the proof.
\end{proof}

\section{The Power of Simple Registers}
\label{sec:simple}

The design of metastability-containing circuits requires a quick and easy check which metastability-containing components are implementable, and which are not.
In this section, we present such a test for circuits without masking registers.

First, we present sufficient and necessary conditions for a function to be implementable with simple registers only (Section~\ref{sec:simple-natural}).
Using this classification, we demonstrate how to take an arbitrary Boolean function $f\colon \B^m \to \B^n$ and extend it to the most restrictive specification $[f]_\meta\colon \BM^m \to \Pow(\BM^n)$, the metastable closure of~$f$, that is implementable. This is an easy process\dash---one simply applies Definition~\ref{def:closure} to $f$ (Section~\ref{sec:simple-closure}).

The way to make use of this is to start with a function $f$ required as component, ``lift'' it to~$[f]_\meta$, and check whether $[f]_\meta$ is restrictive enough for the application at hand.
If it is, one can work on an efficient implementation of~$[f]_\meta$, otherwise a new strategy, possibly involving masking registers, must be devised;
in either case, no time is wasted searching for a circuit that does not exist.
Sections~\ref{sec:simple-closure-possible}--\ref{sec:simple-closure-impossible} summarize our findings.

Since we discuss functions implementable with simple registers only, recall that the corresponding circuits can be unrolled by Theorem~\ref{thm:unrolling}, i.e., it suffices to understand~$C_1$, a single round of a (possibly unrolled) circuit.

\subsection{Natural Subfunctions}
\label{sec:simple-natural}

From Corollary~\ref{cor:bitwise-closed} and Observation~\ref{obs:specific}, we know that~$C_1$, the set of possible circuit outputs after a single round, has three properties:
\begin{enumerate*}
\item
	its output can be specified bit-wise,

\item
	each output bit is either~$0$, $1$, or completely unspecified, and

\item
	stabilizing a partially metastable input restricts the set of possible outputs.
\end{enumerate*}
Hence $C_1$\dash---and by Corollary~\ref{cor:funs} all circuits using only simple registers\dash---can be represented in terms of bit-wise \ac{KV} diagrams with values ``$0$,\,$1$,\,$\BM$'' instead of ``$0$,\,$1$,\,D'' (D~for ``don't care'').
We call such functions \emph{natural} and show below that $f \in \Fun_S$ if and only if $f$ has a natural subfunction.

\begin{definition}[Natural and Subfunctions]\label{def:natural}
	The function $f\colon \BM^m \to \Pow(\BM^n)$ is \emph{natural} if and only if it is \emph{bit-wise,} \emph{closed,} and \emph{specific:}
	\begin{description}
	\item [Bit-wise]
		The components $f_1, \dots, f_n$ of $f$ are independent:
		\begin{equation}\label{eq:bit-wise}
			f(x) = f_1(x) \times \dots \times f_n(x).
		\end{equation}

	\item [Closed]
		Each component of~$f$ is specified as either~$0$, as~$1$, or completely unspecified:
		\begin{equation}\label{eq:closed}
			\forall x \in \BM^m\colon\quad
				f(x) \in \{ \{0\}, \{1\}, \BM \}^n.
		\end{equation}

	\item [Specific]
		When stabilizing a partially metastable input, the output of $f$ remains at least as restricted:
		\begin{equation}\label{eq:specific}
			\forall x \in \BM^m\colon \quad
				x' \in \Res(x) \Rightarrow f(x') \subseteq f(x).
		\end{equation}
	\end{description}
	For functions $f,g\colon \BM^m \to \Pow(\BM^n)$, $g$~is a \emph{subfunction of~$f$} (we write $g \subseteq f$), if and only if $g(x) \subseteq f(x)$ for all $x \in \BM^m$.
\end{definition}

Suppose we ask whether a function $f$ is implementable with simple registers only, i.e., if $f \in \Fun_S$.
Since any (unrolled) circuit $C$ implementing $f$ must have $C_1 \subseteq f$, Corollary~\ref{cor:bitwise-closed} and Observation~\ref{obs:specific} state a necessary condition for $f \in \Fun_S$:
$f$~must have a natural subfunction.
Theorem~\ref{thm:simple-comp} establishes that this condition is sufficient, too.

\begin{theorem}\label{thm:simple-comp}
	Let $g\colon \BM^m \to \Pow(\BM^n)$ be a function.
	Then $g \in \Fun_S$ if and only if $g$ has a natural subfunction.
\end{theorem}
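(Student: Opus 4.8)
The plan is to prove both implications separately, assembling the forward direction from results already in hand and reserving the real work for a construction in the backward direction.

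For necessity, suppose $g \in \Fun_S$. Then some circuit $C$ using only simple registers implements $g$, and by Corollary~\ref{cor:funs} (via the unrolling of Theorem~\ref{thm:unrolling}) I may assume it does so in a single round, so that $C_1 \subseteq g$. I would then show that $C_1$ is itself a natural subfunction of $g$ in the sense of Definition~\ref{def:natural}: it is \emph{bit-wise} and \emph{closed} by Corollary~\ref{cor:bitwise-closed}, which writes $C_1 = g_0 \times \dots \times g_{n-1}$ with each coordinate valued in $\{\{0\},\{1\},\BM\}$; and it is \emph{specific} by Observation~\ref{obs:specific}, using $\Res(x) \subseteq \ResM(x)$ so that $x' \in \Res(x) \Rightarrow C_1(x') \subseteq C_1(x)$ is a special case. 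This direction is essentially bookkeeping over the basic properties of Section~\ref{sec:basics}.

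For sufficiency, let $h \subseteq g$ be a natural subfunction. It suffices to build a circuit $C$ with only simple registers and $C_1 \subseteq h$, since then $C_1(\iota) \subseteq h(\iota) \subseteq g(\iota)$ for all $\iota$, giving $g \in \Fun_S$. I realize $h$ in a single round purely combinationally: take $m$ simple input registers, $n$ simple output registers, no local registers, and a combinational \ac{DAG} $G$; by Lemma~\ref{lem:1-round-writes} this yields $C_1(\iota) = \ResM(f^G(\iota))$. Because $h$ is bit-wise, it then suffices to construct, independently for each output coordinate, a single-output \ac{DAG} $G_i$ with $\ResM(f^{G_i}(\iota)) \subseteq h_i(\iota)$, and wire the $G_i$ to a shared set of input nodes. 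Writing $p := h_i$ and $U_b := \{\iota \in \BM^m \mid p(\iota) = \{b\}\}$ for $b \in \B$, the requirement reduces to producing an $f^{G_i}$ with $f^{G_i}(\iota) = 0$ on $U_0$ and $f^{G_i}(\iota) = 1$ on $U_1$; on all remaining inputs (where $p(\iota) = \BM$) any output is admissible.

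The core of the proof, and the step I expect to be the main obstacle, is building such a $G_i$ in the face of worst-case metastability propagation, which can fabricate spurious $\meta$ outputs (as in the $\lnot s \lor s$ example). My construction is a disjunctive normal form tailored to tolerate metastable inputs: for each $\iota \in U_1$ form the product term $t_\iota := \bigwedge_{j:\, \iota_j \in \B} \ell_j$ over exactly the literals fixed by the \emph{stable} bits of $\iota$ (using $\gnot$-gates for $0$-bits), and set $f^{G_i} := \bigvee_{\iota \in U_1} t_\iota$ (a constant-$0$ gate if $U_1 = \emptyset$). By Kleene semantics $t_\iota$ is stably $1$ on $\iota$, so the disjunction is $1$ throughout $U_1$. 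The crux is the claim that every $t_\iota$ evaluates to $0$ on all of $U_0$, forcing the disjunction to $0$ there. I would prove this by contradiction: if $t_\iota(\iota') \neq 0$ for some $\iota \in U_1$ and $\iota' \in U_0$, then no literal of $t_\iota$ is $0$ on $\iota'$, which means $\iota'_j \in \{\iota_j, \meta\}$ on every stable bit $j$ of $\iota$; this lets me select, bit by bit, a common stabilization $x \in \Res(\iota) \cap \Res(\iota')$ (matching $\iota$ on its stable bits and $\iota'$ elsewhere). Specificity of $p$ then yields $p(x) \subseteq p(\iota) \cap p(\iota') = \{1\} \cap \{0\} = \emptyset$, impossible since $p(x) \neq \emptyset$. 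Hence each $t_\iota$ vanishes on $U_0$, the single-output requirement is met, and combining the $n$ coordinates gives $C_1 \subseteq h \subseteq g$, completing the proof.
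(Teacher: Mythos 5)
Your proof is correct, and its skeleton matches the paper's: your necessity direction is exactly the paper's bookkeeping (unroll via Theorem~\ref{thm:unrolling} and Corollary~\ref{cor:funs}, then cite Corollary~\ref{cor:bitwise-closed} and Observation~\ref{obs:specific}), and your sufficiency direction shares the paper's architecture---a one-round, two-level \gand/\gor realization, reduced to single output bits via bit-wiseness---but the DNF itself is built differently. The paper first collapses the natural subfunction to a Boolean function $f_\B$ on stable inputs (sending $\BM$-outputs to $0$) and places one \gand-gate per \emph{prime implicant} of $f_\B$; correctness on metastable inputs then needs two separate arguments: for $f(x) = \{0\}$, every \gand-gate must see a stable $0$ (else some resolution would witness $f_\B(x') = \{1\}$), and for $f(x) = \{1\}$, some prime implicant must be supported entirely on the stable bits of~$x$. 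You instead index terms by all (possibly partially metastable) inputs $\iota \in U_1$, keeping only the stable literals of each, which makes the $\{1\}$-case hold by construction, and you discharge the $\{0\}$-case by a single common-stabilization argument: the condition $\iota'_j \in \{\iota_j, \meta\}$ on the stable bits of $\iota$ is precisely what permits choosing $x \in \Res(\iota) \cap \Res(\iota')$, whence specificity plus closedness (nonemptiness) force $p(x) \subseteq \{1\} \cap \{0\} = \emptyset$, a contradiction. This is sound, and it also handles the corner cases gracefully ($U_1 = \emptyset$ via a \gzero-gate, and the empty conjunction when $\iota = \meta^m$, in which case the same argument shows $U_0 = \emptyset$, so the constant-$1$ term is harmless). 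The trade-off between the two routes: the paper's prime-implicant cover gives a smaller circuit aligned with classical two-level synthesis, whereas your construction may use up to $3^m$ terms but avoids prime implicants entirely and packages both correctness cases into one self-contained consistency argument---an acceptable trade here, since the theorem claims no size bound (as the paper itself notes after the proof).
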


\begin{proof}
	For the only-if-direction, suppose that $C$ is a circuit with only simple registers such that $C_1 \subseteq g$;
	by Theorem~\ref{thm:unrolling}, such a circuit exists.
	$C_1$ is bit-wise and closed by Corollary~\ref{cor:bitwise-closed}, and specific by Observation~\ref{obs:specific}.
	Hence, choosing $f := C_1$ yields a natural subfunction of~$g$.

	We proceed with the if-direction.
	Let $f \subseteq g$ be a natural subfunction of~$g$, and construct a circuit $C$ that implements~$f$.
	As $f$ is bit-wise, we may w.l.o.g.\ assume that $n = 1$.
	If $f(\cdot) = \{ 0 \}$ or $f(\cdot) = \BM$, let $C$ be the circuit whose output register is driven by a \gzero-gate;
	if $f(\cdot) = \{ 1 \}$, use a \gone-gate.
	Otherwise, we construct $C$ as follows.
	Consider $f_{\B}\colon \B^m \to \{ \{ 0 \}, \{ 1 \} \}$ given by
	\begin{equation}
		f_{\B}(x) = \begin{cases}
			\{ 0 \} & \text{if $f(x) = \{ 0 \}$ or $f(x) = \BM$, and} \\
			\{ 1 \} & \text{if $f(x) = \{ 1 \}$.}
		\end{cases}
	\end{equation}
	We call a partial variable assignment $A$ that implies $f_\B(x) = \{ 1 \}$ for all $x$ obeying $A$ an \emph{implicant of~$f_\B$;}
	if the number of variables fixed by $A$ is minimal w.r.t.\ $A$ being an implicant, we call $A$ a \emph{prime implicant of~$f_\B$.}
	Construct $C$ from \gand-gates, one for each prime implicant of~$f_{\B}$, with inputs connected to the respective, possibly negated, input registers present in the prime implicant.
	All \gand-gate outputs are fed into a single \gor-gate driving the circuit's only output register.

	By construction, $C_1(x) = f_{\B}(x) \subseteq f(x)$ for all $x \in \B^m$.
	To see $C_1 \subseteq f$, consider $x \in \BM^m \setminus \B^m$ and make a case distinction.
	\begin{enumerate}
	\item
		If $f(x) = \BM$, then trivially $C_1(x) \subseteq f(x)$.

	\item
		If $f(x) = \{ 0 \}$, we have for all $x' \in \Res(x)$ that $f(x') = f_{\B}(x') = \{ 0 \}$ by~\eqref{eq:specific}.
		Thus, for each such~$x'$, all \gand-gate outputs are~$0$.
		Furthermore, under input~$x$ and for each \gand-gate, there must be at least one input that is stable~$0$:
		Otherwise, there would be some $x' \in \Res(x)$ making one \gand-gate output~$1$, resulting in $f_{\B}(x') = \{1\}$.
		By our definition of gate behavior, this entails that all \gand-gates output~$0$ for all $x' \in \ResM(x)$ as well, and hence $C_1(x) = \{ 0 \} = f(x)$.

	\item
		If $f(x) = \{ 1 \}$, all $x' \in \Res(x)$ have $f(x') = f_{\B}(x') = \{1\}$ by~\eqref{eq:specific}.
		Thus, $f_\B$~outputs $\{ 1 \}$ independently from the metastable bits in~$x$, and there is a prime implicant of $f_\B$ which relies only on stable bits in~$x$.
		By construction, some \gand-gate in $C$ implements that prime implicant.
		This \gand-gate receives only stable inputs from~$x$, and hence outputs a stable~$1$.
		The \gor-gate receives that $1$ as input and, by definition of gate behavior, outputs stable~$1$.
		Hence, $C_1(x) = \{ 1 \} = f(x)$.
	\end{enumerate}
	As $f$ is closed, this case distinction is exhaustive.
	The claim follows as one round of $C$ implements~$f$.
\end{proof}

Theorem~\ref{thm:simple-comp} is useful for checking if a circuit implementing some function \emph{exists;} its proof is constructive.
However, we obtain no non-trivial bound on the \emph{size} of the respective circuit\dash---covering all prime implicants can be costly.
While efficient metastability-containing implementations exist~\cite{blm-nomcsn-17,lm-emcgc2s-16}, it is an open question
\begin{enumerate*}
\item
	which functions can be implemented efficiently \emph{in general,} and

\item
	what the overhead for metastability-containment w.r.t.\ an implementation oblivious to metastability is.
\end{enumerate*}

\subsection{Metastable Closure}
\label{sec:simple-closure}

We propose a generic method of identifying and creating functions implementable with simple registers.
Consider a classical Boolean function $f\colon \B^m \to \B^n$ defined for stable in- and outputs only.
Lift the definition of $f$ to $[f]_\meta$ dealing with (partly) metastable inputs analogously to gate behavior in Section~\ref{sec:model-gates}:
Whenever all metastable input bits together can influence the output, specify the output as ``anything in~$\BM$.''
We call $[f]_\meta$ the \emph{metastable closure of~$f$,} and argue below that $[f]_\meta \in \Fun_S$.
For $f\colon \BM^m \to \Pow(\BM^n)$, i.e., for more flexible specifications, $[f]_\meta$~is defined analogously.

\begin{definition}[Metastable Closure]\label{def:closure}
	For a function $f\colon \BM^m \to \Pow(\BM^n)$, we define its \emph{metastable closure} $[f]_\meta \colon \BM^m \to \Pow(\BM^n)$ component-wise for $i \in [n]$ by
	\begin{equation}\label{eq:closure}
		[f]_\meta(x)_i := \begin{cases}
			\{ 0 \} & \text{if $\forall x' \in \ResM(x)\colon\; f(x')_i = \{ 0 \}$,} \\
			\{ 1 \} & \text{if $\forall x' \in \ResM(x)\colon\; f(x')_i = \{ 1 \}$,} \\
			\BM     & \text{otherwise.}
		\end{cases}
	\end{equation}
	We generalize~\eqref{eq:closure} to Boolean functions.
	For $f\colon \B^m \to \B^n$, we define $[f]_\meta\colon \BM^m \to \Pow(\BM^n)$ as
	\begin{equation}\label{eq:closure-bool}
		[f]_\meta(x)_i := \begin{cases}
			\{ 0 \} & \text{if $\forall x' \in \Res(x)\colon\; f(x')_i = 0$,} \\
			\{ 1 \} & \text{if $\forall x' \in \Res(x)\colon\; f(x')_i = 1$,} \\
			\BM     & \text{otherwise.}
		\end{cases}
	\end{equation}
\end{definition}

By construction, $[f]_\meta$ is bit-wise, closed, specific, and hence natural.
\begin{observation}\label{obs:closure-simple}
	$[f]_\meta \in \Fun_S$ for all $f\colon \B^m \to \B^n$ and for all $f\colon \BM^m \to \Pow(\BM^n)$.
\end{observation}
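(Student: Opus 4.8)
The statement follows from Theorem~\ref{thm:simple-comp}, which says $g \in \Fun_S$ exactly when $g$ has a natural subfunction. Hence it suffices to show that $[f]_\meta$ is itself natural — bit-wise, closed, and specific in the sense of Definition~\ref{def:natural} — since then $[f]_\meta$ is trivially a natural subfunction of itself and Theorem~\ref{thm:simple-comp} applies. I would verify the three defining properties of ``natural'' one by one directly from Definition~\ref{def:closure}, handling the case $f\colon \BM^m \to \Pow(\BM^n)$ first and then observing the Boolean case is the same argument.

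\textbf{Bit-wise and closed.} Both are immediate from the shape of~\eqref{eq:closure}: that equation defines $[f]_\meta(x)$ coordinate by coordinate, so $[f]_\meta(x) = [f]_\meta(x)_1 \times \dots \times [f]_\meta(x)_n$, which is~\eqref{eq:bit-wise}; and the three branches of~\eqref{eq:closure} place each component $[f]_\meta(x)_i$ into $\{\{0\},\{1\},\BM\}$, which is~\eqref{eq:closed}. \textbf{Specific.} Fix $x \in \BM^m$, a stable resolution $x' \in \Res(x)$, and a coordinate $i$; I want $[f]_\meta(x')_i \subseteq [f]_\meta(x)_i$. If $[f]_\meta(x)_i = \BM$ this is trivial. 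If $[f]_\meta(x)_i = \{0\}$, then by~\eqref{eq:closure} we have $f(x'')_i = \{0\}$ for every $x'' \in \ResM(x)$; since $x' \in \Res(x) \subseteq \ResM(x)$, in particular $f(x')_i = \{0\}$. As $x'$ is stable, $\ResM(x') = \{x'\}$, so~\eqref{eq:closure} yields $[f]_\meta(x')_i = \{0\} = [f]_\meta(x)_i$. The case $[f]_\meta(x)_i = \{1\}$ is symmetric. This establishes~\eqref{eq:specific}, so $[f]_\meta$ is natural and Theorem~\ref{thm:simple-comp} gives $[f]_\meta \in \Fun_S$.

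\textbf{Boolean case and the (minor) obstacle.} For $f\colon \B^m \to \B^n$ the argument is verbatim the same, now reading ``$f(x'')_i = 0$'' and using $\Res$ in place of $\ResM$ throughout, exactly as prescribed by~\eqref{eq:closure-bool}; again for stable $x'$ the set $\Res(x')$ collapses to $\{x'\}$, which is all that the specificity check needs. There is no real difficulty here — the proof is entirely mechanical. The only points that require a moment's attention are getting the direction of the universal quantifier over $\ResM(x)$ (resp.\ $\Res(x)$) correct in the inclusion argument, and noticing that specificity only has to be tested against \emph{stable} $x'$, for which the resolution set is a singleton and $[f]_\meta(x')$ is pinned down by a single value of $f$.
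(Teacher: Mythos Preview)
Your proof is correct and follows exactly the paper's approach: the paper simply states ``By construction, $[f]_\meta$ is bit-wise, closed, specific, and hence natural,'' leaving the verification implicit, and then Theorem~\ref{thm:simple-comp} does the rest. You have supplied precisely the routine details the paper omits.
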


\subsubsection{Showing what is Possible}
\label{sec:simple-closure-possible}

An immediate consequence of Observation~\ref{obs:closure-simple} for the construction of circuits is that, given an arbitrary Boolean function $f\colon \B^m \to \B^n$, there is a circuit without masking registers that implements~$[f]_\meta$.

For $f\colon \B^m \to \B^n$, Theorem~\ref{thm:simple-comp} shows that $[f]_\meta$ is the minimum extension of $f$ implementable with simple registers: by \eqref{eq:specific} any natural extension $g$ of $f$ must satisfy
\begin{equation}
	\forall x \in \BM^m, \forall i \in [n]\colon \quad
		\bigcup_{x' \in \Res(x)} f(x')_i \subseteq g(x)_i,
\end{equation}
and thus $\exists x',x'' \in \Res(x)\colon f(x')_i \neq f(x'')_i \Rightarrow g(x)_i = \BM$ by~\eqref{eq:closed}.

\subsubsection{Showing what is Impossible}
\label{sec:simple-closure-impossible}

In order to show that a function is not implementable with simple registers only, it suffices to show that it violates the preconditions of Theorem~\ref{thm:simple-comp}, i.e., that it has no natural subfunction.
\begin{example}
	Consider $f\colon \BM^2 \to \Pow(\BM^2)$ with
	\begin{equation}\label{eq:mm-example}
		f(x) := \ResM(x) \setminus \{ \meta\meta \}.
	\end{equation}
	This function specifies to copy a $2$-bit input, allowing metastability to resolve to anything except~$\meta\meta$.
	No circuit without masking registers implements~$f$: $f \notin \Fun_S$.
\end{example}

The recipe to prove such a claim is:
\begin{enumerate}
\item
	For contradiction, assume $f \in \Fun_S$, i.e., that $f$ has some natural subfunction $g \subseteq f$ by Theorem~\ref{thm:simple-comp}.

\item
	By specification of~$f$, the individual output bits of $g$ can become metastable for input~$\meta\meta$.

\item
	Since $g$ is bit-wise, it follows that $\meta\meta \in g(\meta\meta)$.

\item
	This contradicts the assumption that $g \subseteq f$.
\end{enumerate}

\begin{proof}
	Assume for contradiction $f \in \Fun_S$, i.e., that $f$ has a natural subfunction $g \subseteq f$ by Theorem~\ref{thm:simple-comp}.

	As specified in~\eqref{eq:mm-example}, $f(00)_1 = \{0\}$ and $f(11)_1 = \{1\}$.
	Since $g \subseteq f$ and $g(x) \neq \emptyset$ because $g$ is closed, we have $g(00)_1 = \{0\}$ and $g(11)_1 = \{1\}$.
	The fact that $g$ is specific implies that $g(00)_1 \cup g(11)_1 \subseteq g(\meta\meta)_1$, i.e., $\{0,1\} \in g(\meta\meta)_1$.
	This in turn means that $g(\meta\meta)_1 = \BM$, because $g$ is closed.
	Furthermore, we know that $g$ is bit-wise, so $g = g_1 \times g_2$ with $g_1(00) = \{0\}$, $g_1(11) = \{1\}$, and $g_1(\meta\meta) = \BM$.
	Analogously, $g_2(00) = \{0\}$, $g_2(11) = \{1\}$, and $g_2(\meta\meta) = \BM$.

	Since $g$ is bit-wise, $g(\meta\meta) = g_1(\meta\meta) \times g_2(\meta\meta) \ni \meta\meta$, but $\meta\meta \notin f(\meta\meta)$, contradicting the assumption $g \subseteq f$.
	As we did not make any restrictions regarding~$g$, this holds for all natural subfunctions of~$f$.
	It follows that $f \notin \Fun_S$.
\end{proof}

\section{Components for Clock Synchronization}
\label{sec:arithmetic}

This section demonstrates the power of our techniques:
We establish that a variety of metastability-containing components are a reality.
Due to the machinery established in the previous sections, this is possible with simple checks (usually using Observation~\ref{obs:closure-simple}).
The list of components is by no means complete, but already allows implementing a highly non-trivial application.

We are the first to demonstrate the physical implementability of the fault-tolerant clock synchronization algorithm by Lundelius Welch and Lynch~\cite{ll-ftacs-88} with deterministic correctness guarantee, despite the unavoidable presence of metastable upsets.
The algorithm of Lundelius Welch and Lynch is widely applied, e.g., applied in the \ac{TTP}~\cite{kb-tta-03} and in FlexRay~\cite{bbb+-fcp-03}.
While the software--hardware based implementations of \ac{TTP} and FlexRay achieve a precision in the order of microseconds, higher operating frequencies ultimately require a pure hardware implementation.
Recently, an implementation of the algorithm of Lundelius Welch and Lynch based on \iac{FPGA} has been presented by Kinali et al.~\cite{hkl-ftcshp-16}.
All known implementations, however, synchronize potentially metastable inputs \emph{before} computations\dash---a technique that becomes less reliable with increasing operating frequencies, since less time is available for metastability resolution.

Moreover, classical bounds for the \ac{MTBF} for metastable upsets assume a uniform distribution of input transitions;
this is not guaranteed to be the case in clock synchronization, since the goal is to align clock ticks.
Either way, synchronizers do not deterministically guarantee stabilization, and errors are bound to happen eventually when $n$ clocks take $n(n-1)$ samples at, e.g., $1$\,GHz.
The combination of ever-increasing operating frequencies and the inevitability~\cite{m-gtmo-81} of metastable upsets when measuring relative timing deviations leads us to a fundamental question:
Does the unavoidable presence of metastable upsets pose a principal limit on the operating frequency?
We show that this is not the case.

\subsection{Algorithm and Required Components}
\label{sec:arithmetic-application}

Our core strategy is the \emph{separation of concerns} outlined in Section~\ref{sec:introduction} and Figure~\ref{fig:clocksync}.
The key is that the digital part of the circuit can become metastable, but that metastability is \emph{contained} and ultimately translated into \emph{bounded} fluctuations in the analog world, not contradicting Marino.

We propose an implementation for $n$ clock-synchronization nodes with at most $f < n/3$ faulty nodes, in which each node does the following.

\subsubsection{Step~1: Analog to Digital}
\label{sec:arithmetic-application-tdc}

First, we step from the analog into the digital world:
Delays between $n-1$ remote pulses and the local pulse are measured with \acp{TDC}.
The measurement can be realized such that at most one of the output bits, accounting for the difference between $x$ and $x+1$ ticks, becomes metastable;
we say such numbers have \emph{precision-$1$} and formally define them in Section~\ref{sec:arithmetic-encoding}.

\begin{figure}
	\begin{center}
		{\small \begin{tikzpicture}[scale=1,transform shape,cktbaselength=0.5pt]
	\tikzstyle{gate} = [draw,anchor=center]
	\tikzstyle{ann} = [below,font=\small\tt]
	\tikzstyle{latch} = [draw,anchor=center,rectangle,text width=5mm,text height=8mm]

	\draw (-0.4,0) node[label={[xshift=-0.3cm,yshift=-0.3cm] remote start}] (i1) {};

	\draw (1,0) node[buffer,gate,rotate=90] (b1) {};
	\draw (2,0) node[buffer,gate,rotate=90] (b2) {};
	\draw (3,0) node[buffer,gate,rotate=90] (b3) {};
	\draw (4,0) node[buffer,gate,rotate=90] (b4) {};

	\draw (b1.a) -- ++(-0.5,0);
	\draw (b1.z) -- (b2.a);
	\draw (b2.z) -- (b3.a);
	\draw (b3.z) -- (b4.a);
	\draw (b4.z) -- ++(0.5,0); 

	\draw (1.0,-1) node[latch] (l1) {};
	\draw (l1) node[xshift=-2mm,yshift=2mm] {\footnotesize $D$};
	\draw (l1) node[xshift=2mm,yshift=1.6mm] {\footnotesize $Q$};
	\draw (l1) node[xshift=-2mm,yshift=-2.1mm] {\footnotesize $\bar E$};
	\draw (0.5,0) -- ++(0,-0.8) -- ++(0.12,0);
	\draw[fill] (0.5,0) circle (1.5pt);

	\draw (2.0,-1) node[latch] (l1) {};
	\draw (l1) node[xshift=-2mm,yshift=2mm] {\footnotesize $D$};
	\draw (l1) node[xshift=2mm,yshift=1.6mm] {\footnotesize $Q$};
	\draw (l1) node[xshift=-2mm,yshift=-2.1mm] {\footnotesize $\bar E$};
	\draw (1.5,0) -- ++(0,-0.8) -- ++(0.12,0);
	\draw[fill] (1.5,0) circle (1.5pt);

	\draw (3.0,-1) node[latch] (l1) {};
	\draw (l1) node[xshift=-2mm,yshift=2mm] {\footnotesize $D$};
	\draw (l1) node[xshift=2mm,yshift=1.6mm] {\footnotesize $Q$};
	\draw (l1) node[xshift=-2mm,yshift=-2.1mm] {\footnotesize $\bar E$};
	\draw (2.5,0) -- ++(0,-0.8) -- ++(0.12,0);
	\draw[fill] (2.5,0) circle (1.5pt);

	\draw (4.0,-1) node[latch] (l1) {};
	\draw (l1) node[xshift=-2mm,yshift=2mm] {\footnotesize $D$};
	\draw (l1) node[xshift=2mm,yshift=1.6mm] {\footnotesize $Q$};
	\draw (l1) node[xshift=-2mm,yshift=-2.1mm] {\footnotesize $\bar E$};
	\draw (3.5,0) -- ++(0,-0.8) -- ++(0.12,0);
	\draw[fill] (3.5,0) circle (1.5pt);

	\draw (5.0,-1) node[latch] (l1) {};
	\draw (l1) node[xshift=-2mm,yshift=2mm] {\footnotesize $D$};
	\draw (l1) node[xshift=2mm,yshift=1.6mm] {\footnotesize $Q$};
	\draw (l1) node[xshift=-2mm,yshift=-2.1mm] {\footnotesize $\bar E$};
	\draw (4.5,0) -- ++(0,-0.8) -- ++(0.12,0);
	\draw[fill] (4.5,0) circle (1.5pt);

	\draw (-0.4,-2) node[label={[xshift=-0.2cm,yshift=-0.3cm] local stop}] (i2) {};
	\path (i2) -- ++(0.5,0) node (ii) {};
	\draw (ii) -- ++(4.5,0);

	\draw (0.5,-2) -- ++(0,0.8) -- ++(0.12,0);
	\draw[fill] (0.5,-2) circle (1.5pt);

	\draw (1.5,-2) -- ++(0,0.8) -- ++(0.12,0);
	\draw[fill] (1.5,-2) circle (1.5pt);

	\draw (2.5,-2) -- ++(0,0.8) -- ++(0.12,0);
	\draw[fill] (2.5,-2) circle (1.5pt);

	\draw (3.5,-2) -- ++(0,0.8) -- ++(0.12,0);
	\draw[fill] (3.5,-2) circle (1.5pt);

	\draw (4.5,-2) -- ++(0,0.8) -- ++(0.12,0);
	\draw[fill] (4.5,-2) circle (1.5pt);
\end{tikzpicture}} 
	\end{center}
	\caption{%
		Tapped delay line \acs{TDC}.
		It is read as either $1^k 0^{n-k}$ or $1^k \meta 0^{n-k-1}$, i.e., produces at most one metastable bit and hence has precision-$1$.%
	}
	\label{fig:tdc}
\end{figure}
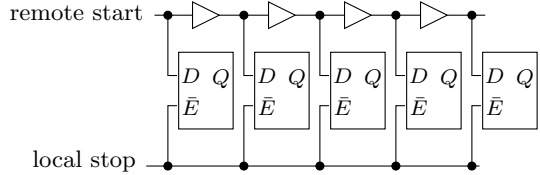

\acp{TDC} can be implemented using tapped delay lines or Vernier delay line \acp{TDC}~\cite{glnhc-stcibr-94,rk-scdldsti-93,ra-bitdcdttc-10}; see Figure~\ref{fig:tdc}:
A line of delay elements is tapped in between each two consecutive elements, driving the data input port of initially enabled latches initialized to~$0$.
The rising transition of the remote clock signal fed into the delay line input then passes through the line, and sequentially sets the latches to~$1$;
the rising transition of the local clock signal is used to disable all latches at once.
After that, the delay line's latches contain the time difference as unary \ac{TC}.
Choosing the propagation delays between the latches larger than their setup/hold times, we ensure that at most one bit is metastable, i.e., their status is of the form $1^*0^*$ or $1^* \meta 0^*$.
The output is hence a precision-$1$ \acs{TC}-encoded time difference.

A traditional implementation would use synchronizers on the \ac{TDC} outputs.
This delays the computation and encourages stabilization, but does not enforce it.
However, clock synchronization cannot afford to wait.
Furthermore, we prefer guaranteed correctness over a probabilistic statement:
Four nodes, each sampling at $1$\,GHz, sample $1.2 \cdot 10^{10}$ incoming clock pulses per second;
synchronizers cannot provide sufficiently small error probabilities when allocating $1$\,ns or less for metastability resolution~\cite{bgpdk-ds-10}.
Hence, we consider the use of metastability-containing arithmetic instead of synchronizers mandatory.

\subsubsection{Step~2: Encoding}

We translate the time differences into \ac{BRGC}, making storage and subsequent components much more efficient.
The results are \acs{BRGC}-encoded time differences with at most one metastable bit of precision-$1$.

In this step, metastability-containing \ac{TC} to \ac{BRGC} conversion is needed and we discuss it in Section~\ref{sec:arithmetic-converter}.
A more efficient way is to use a metastability-containing \ac{TDC} which directly produces \ac{BRGC} of precision-$1$;
such a component is presented in~\cite{fklp-mametdc-17}.

\subsubsection{Step~3: Sorting Network}

A sorting network selects the $(f+1)$-th and $(n-f)$-th largest remote-to-local clock differences (tolerating $f$ faults requires to discard the smallest and largest $f$ values).

This requires $2$-sort building blocks that pick the minimum and maximum of two precision-$1$ \acs{BRGC}-encoded inputs preserving precision-$1$.
We discuss this in Section~\ref{sec:arithmetic-sorting};
efficient implementations are given in~\cite{blm-nomcsn-17,lm-emcgc2s-16} and improved in~\cite{fk-emcm-17}.

\subsubsection{Step~4: Decoding and Digital to Analog}

The \acs{BRGC}-encoded $(f+1)$-th and $(n-f)$-th largest remote-to-local clock differences are translated back to \acs{TC}-encoded numbers.
As discussed in Section~\ref{sec:arithmetic-converter2}, this can be done preserving precision-$1$, i.e., such that the results are of the form $1^* 0^*$ or $1^* \meta 0^*$.

Finally, we step back into the analog world, again without losing precision:
The two values are used to control the local clock frequency via \iac{DCO}.
However, the \ac{DCO} design must be chosen with care.
Designs that switch between inverter chains of different length to modify the frequency of a ring oscillator cannot be used, as metastable switches may occur exactly when a pulse passes.
Instead, we use a ring oscillator whose frequency is controlled by analog effects such as changes in inverter load or bias current, see e.g.~\cite{dvrc-fdcoir-07,on-dcpllsoca-04,zk-dcolpc-08}.
While the at most two metastable control bits may dynamically change the load of two inverters, this has a limited effect on the overall frequency change and does not lead to glitches within the ring oscillator.

Carefully note that this gives a \emph{guaranteed end-to-end uncertainty of a single bit} through all digital computations.

\subsection{Encoding and Precision}
\label{sec:arithmetic-encoding}

An appropriate encoding is key to designing metastability-containing arithmetic components.
If, for example, a control bit $u$ indicating whether to increase $x = 7$ by $1$ is metastable, and $x$ is encoded in binary, the result must be a metastable superposition of $00111$ and~$01000$, i.e., anything in $\Res(0\meta\meta\meta\meta)$ and thus an encoding of any number $x' \in [16]$\dash---even after resolving metastability!
The original uncertainty between $7$ and $8$ is massively amplified;
a good encoding should \emph{contain} the uncertainty imposed by $u = \meta$.

Formally, a \emph{code} is an injective function $\gamma\colon [n] \to \B^k$ mapping a natural number $x \in [n]$ to its encoded representation.
For $y = \gamma(x)$, we define $\gamma^{-1}(y) := x$, and for sets~$X$, $\gamma(X) := \{ \gamma(x) \mid x \in X \}$ and $\gamma^{-1}(X) := \{x \mid \gamma(x) \in X \}$.
In this work, we consider two encodings for input and output: \ac{TC} and \ac{BRGC}.
For the $4$-bit (unary) \ac{TC} we use $\un\colon [5] \to \B^4$ with $\un(1) = 0001$ and $\un^{-1}(0111) = 3$; $\un^{-1}(0101)$ does not exist.
\ac{BRGC}, compare Figure~\ref{fig:gray-code}, is represented by~$\rg(x)$, and is much more efficient, using only $\lceil \log_2 n \rceil$ bits.
In fact, $\rg\colon [2^k] \to \B^k$ is bijective.

We choose $\un$ and $\rg$ due to the property that in both encodings, for $x \in [k-1]$, $\gamma(x)$ and $\gamma(x+1)$ differ in a single bit only.
This renders them suitable for metastability-containing operations.
We revisit the above example with the metastable control bit $u$ indicating whether to increase $x = 7$ by~$1$.
In \ac{BRGC}, $7$~is encoded as $00100$ and $8$ as~$01100$, so their metastable superposition resolves to $\Res(0\meta100)$, i.e., only to $7$ or~$8$.
Since the original uncertainty was whether or not to increase $x = 7$ by~$1$, the uncertainty is perfectly contained instead of amplified as above.
We formalize the notion of the amount of uncertainty in a partially metastable code word:
$x \in \BM^k$ has \emph{precision-$p$ (w.r.t.\ the code~$\gamma$)} if
\begin{equation}\label{eq:precision}
	\max\left\{ y - \bar{y} \mid y, \bar{y} \in \gamma^{-1}(\Res(x)) \right\} \leq p,
\end{equation}
i.e., if the largest possible difference between resolutions of $x$ is bounded by~$p$.
The precision of $x$ w.r.t.\ $\gamma$ is undefined if some $y \in \Res(x)$ is no code word, which is not the case in our application.

Note that the arithmetic components presented below make heavy use of \ac{BRGC}.
This makes them more involved, but they are exponentially more efficient than their \ac{TC} counterparts in terms of memory and avoid the amplification of uncertainties incurred by standard binary encoding.
As a matter of fact, recently proposed efficient implementations for metastability-containing sorting networks~\cite{blm-nomcsn-17,lm-emcgc2s-16} and metastability-containing \acp{TDC}~\cite{fklp-mametdc-17} use \ac{BRGC}.

\subsection{Digital Components}
\label{sec:arithmetic-components}

In the following, we show that all metastability-containing components required for the clock synchronization algorithm outlined in Section~\ref{sec:arithmetic-application} exist.
As motivated above, the components have to maintain meaningful outputs in face of limited metastability;
more precisely, we deal with precision-$1$ inputs due to the nature of \acp{TDC} (see Section~\ref{sec:arithmetic-application-tdc}).
Note that this section greatly benefits from the machinery established in previous sections\dash---in particular from Observation~\ref{obs:closure-simple} which immediately shows which components exist.

\subsubsection{Thermometer to Binary Reflected Gray Code}
\label{sec:arithmetic-converter}

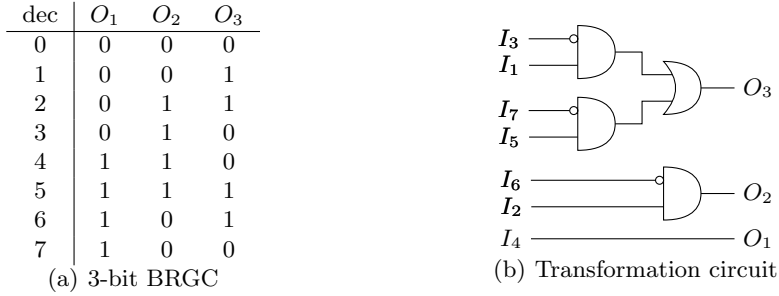
\begin{figure}
	{\small 
	\hfill\subfigure[$3$-bit \acs{BRGC}]{
		\begin{tabular}{c|ccc}
			dec & $O_1$ & $O_2$ & $O_3$\\
			\hline
			$0$ & $0$ & $0$ & $0$ \\
			$1$ & $0$ & $0$ & $1$ \\
			$2$ & $0$ & $1$ & $1$ \\
			$3$ & $0$ & $1$ & $0$ \\
			$4$ & $1$ & $1$ & $0$ \\
			$5$ & $1$ & $1$ & $1$ \\
			$6$ & $1$ & $0$ & $1$ \\
			$7$ & $1$ & $0$ & $0$ \\
		\end{tabular}
		\label{fig:gray-code}
	}\hfill
	\subfigure[Transformation circuit]{
	\raisebox{-1.55cm}{
		\begin{tikzpicture}[scale=1,transform shape,cktbaselength=0.5pt]
			\draw (4,0) node (o1) {$O_1$};
			\draw (0.75,0) node (i4) {$I_4$};
			\path[draw] (o1) -- (i4);

			\draw (4,0.6) node (o2) {$O_2$};
			\draw (3,0.6) node[and2ni,draw,rotate=90] (a1) {};
			\draw (a1.z) -- (o2);
			\path (a1.a) -- ++(-2,0) node (i2) {$I_2$};
			\path (a1.b) -- ++(-1.9,0) node (i6) {$I_6$};
			\draw (i2) node {$I_2$};
			\draw (i6) node {$I_6$};
			\draw (a1.a) -- (i2);
			\draw (a1.b) -- (i6);

			\draw (4,2) node (o3) {$O_3$};
			\draw (3,2) node[or2,draw,rotate=90] (or1) {};
			\draw (or1.z) -- (o3);
			\path (or1.b) -- ++(-1,0.3) node[and2ni,draw,rotate=90] (a2) {};
			\path (or1.a) --  ++(-1,-0.3) node[and2ni,draw,rotate=90] (a3) {};
			\draw (or1.b) -- ++(-0.4,0) -- ++(0,0.3) -- (a2.z);
			\draw (or1.a) -- ++(-0.4,0) -- ++(0,-0.3) -- (a3.z);

			\path (a2.a) -- ++(-0.9,0) node (i1) {$I_1$};
			\path (a2.b) -- ++(-0.8,0) node (i3) {$I_3$};
			\path (a3.a) -- ++(-0.9,0) node (i5) {$I_5$};
			\path (a3.b) -- ++(-0.8,0) node (i7) {$I_7$};
			\draw (i1) node {$I_1$};
			\draw (i3) node {$I_3$};
			\draw (i5) node {$I_5$};
			\draw (i7) node {$I_7$};
			\draw (a2.a) -- (i1);
			\draw (a2.b) -- (i3);
			\draw (a3.a) -- (i5);
			\draw (a3.b) -- (i7);
		\end{tikzpicture}
		}
		\label{fig:gray-circuit}
	}\hfill}
	\caption{Efficient \acs{TC}-to-\acs{BRGC} conversion.}
\label{fig:gray}
\end{figure}

At the hand of the example circuit in Figure~\ref{fig:gray}, we show how precision-$1$ \ac{TC}-encoded data can be efficiently translated into precision-$1$ \acs{BRGC}-encoded data. 
Figure~\ref{fig:gray-circuit} depicts the circuit that translates a $7$-bit \ac{TC} into a $3$-bit \ac{BRGC};
note that gate count and depth are optimal for a fan-in of~$2$.
The circuit can be easily generalized to $n$-bit inputs, having a gate depth of $\lfloor \log_2 n \rfloor$.
While such translation circuits are well-known, it is important to check that the given circuit fulfills the required property of preserving precision-$1$:
This holds as each input bit influences exactly one output bit, and, due to the nature of \ac{BRGC}, this bit makes exactly the difference between $\rg(x)$ and $\rg(x+1)$ given a \acs{TC}-encoded input of $1^x \meta 0^{7-x-1}$.

\subsubsection{Sorting Networks}
\label{sec:arithmetic-sorting}

It is well-known that sorting networks can be efficiently composed from $2$-sort building blocks~\cite{aks-asn-83,b-sna-68}, which map $(x,y)$ to $(\min\{x,y\}, \max\{x,y\})$.
We show that $\max$ (and analogously $\min$) of two precision-$1$ $k$-bit \ac{BRGC} numbers is implementable without masking registers, such that each output has precision-$1$.
Observe that this is straightforward for \acs{TC}-encoded inputs with bit-wise \gand and \gor for $\min$ and $\max$, respectively.
We show, however, that this is possible for \ac{BRGC} inputs as well;
efficient implementations of the proposed $2$-sort building blocks are presented in~\cite{blm-nomcsn-17,lm-emcgc2s-16}.

\begin{lemma}\label{lem:max}
	Define $\max_\text{\acs{BRGC}}\colon \B^k \times \B^k \to \B^k$ as
	\begin{equation}
		{\max}_\text{\acs{BRGC}}(x,y) := \rg\left( \max\left\{ \rg^{-1}(x), \rg^{-1}(y) \right\} \right).
	\end{equation}
	Then $[\max_\text{\acs{BRGC}}]_\meta \in \Fun_S$ and it determines precision-$1$ output from precision-$1$ inputs $x$ and~$y$.
\end{lemma}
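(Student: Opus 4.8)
The plan is to treat the two assertions separately: the first is immediate and the second needs only a short case analysis built on the structure of \ac{BRGC}. That $[\max_\text{\acs{BRGC}}]_\meta \in \Fun_S$ is nothing more than Observation~\ref{obs:closure-simple} applied to the Boolean function $\max_\text{\acs{BRGC}}\colon \B^k \times \B^k \to \B^k$; hence all the content lies in the precision statement, and that is what I would spend the proof on.

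For the precision claim, I would start by letting $x,y \in \BM^k$ have precision-$1$ and rewriting this via \eqref{eq:precision}: the decoded resolutions of $x$ form a set $\rg^{-1}(\Res(x)) \subseteq \{a, a+1\}$ for some $a$, and likewise $\rg^{-1}(\Res(y)) \subseteq \{b, b+1\}$, and without loss of generality $a \leq b$. Since the metastable bits of $x$ and of $y$ resolve independently, $\Res\big((x,y)\big) = \Res(x) \times \Res(y)$, so by \eqref{eq:closure-bool} the output $z := [\max_\text{\acs{BRGC}}]_\meta(x,y)$ is, coordinate by coordinate, the common value of all $\max_\text{\acs{BRGC}}(x',y')$ over $(x',y') \in \Res(x) \times \Res(y)$ wherever these agree, and $\meta$ elsewhere. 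The first real step is then to observe that the set of decoded outputs $M := \{\, \max\{\rg^{-1}(x'),\rg^{-1}(y')\} \mid x' \in \Res(x),\ y' \in \Res(y) \,\}$ satisfies $M \subseteq \{c, c+1\}$ with $c := \max\{a,b\}$: when $a < b$, $a+1 \leq b$ forces each such maximum to be $b$ or $b+1$, and when $a = b$ the maxima lie in $\{a, a+1\}$.

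It then remains to translate ``$M \subseteq \{c, c+1\}$'' back into a statement about $z$. If $|M| = 1$, every $\max_\text{\acs{BRGC}}(x',y')$ equals the single codeword $\rg(c)$, so $z = \rg(c)$ is stable and has precision-$0$. If $M = \{c, c+1\}$, then both codewords $\rg(c)$ and $\rg(c+1)$ occur among the outputs; by the defining property of \ac{BRGC} they differ in exactly one position $j$, so for every $i \neq j$ all occurring outputs agree (forcing $z_i$ stable) while $z_j = \meta$. Hence $z$ has a single metastable bit and $\Res(z) = \{\rg(c), \rg(c+1)\}$, two codewords whose decoded values differ by $1$, i.e.\ $z$ has precision-$1$. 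The analogous statement for $\min$ follows by replacing $\max$ with $\min$ and $c := \max\{a,b\}$ with $\min\{a,b\}$ throughout.

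The part that needs the most care — though it is not deep — is the last paragraph's bit-wise reasoning: one must argue that the ``otherwise'' clause of \eqref{eq:closure-bool} can fire at only one coordinate, which is exactly where the single-bit-difference property of consecutive \ac{BRGC} codewords is used, and that resolving that one coordinate recovers precisely the two neighboring codewords rather than some spurious non-codeword. Everything else reduces to the elementary observation that the $\max$ of two integer intervals of length at most $1$, the smaller one starting no later than the larger, is again an interval of length at most $1$.
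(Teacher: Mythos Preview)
Your proof is correct and follows essentially the same approach as the paper: reduce precision-$1$ inputs to decoded values in $\{a,a+1\}$ and $\{b,b+1\}$, observe that the maxima land in two consecutive integers, and invoke the single-bit-difference property of \ac{BRGC} to conclude via Definition~\ref{def:closure} that at most one output bit is metastable. You spell out the case split and the bit-wise reasoning more carefully than the paper does, and you explicitly cite Observation~\ref{obs:closure-simple} for the $\Fun_S$ claim, but the underlying argument is the same.
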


\begin{proof}
	Since $x$ and $y$ have precision-$1$, $\rg^{-1}(\Res(x)) \subseteq \{a, a+1\}$ for some $a \in [2^k - 1]$ (analogously for $y$ w.r.t.\ some $b \in [2^k - 1]$).
	W.l.o.g.\ assume $a \geq b$, i.e., for all possible resolutions of $x$ and~$y$, the circuit must output $\rg(a)$ or $\rg(a+1)$.
	By Definition~\ref{def:closure} and the fact that $\rg(a)$ and $\rg(a+1)$ differ in a single bit only, $[\max_\text{BRGC}]_\meta(x,y)$ has at most one metastable bit and precision-$1$.
\end{proof}

An analogous argument holds for
\begin{equation}
	{\min}_\text{\acs{BRGC}}(x,y) := \rg\left( \min\left\{ \rg^{-1}(x), \rg^{-1}(y) \right\} \right).
\end{equation}

\subsubsection{Binary Reflected Gray to Thermometer Code}
\label{sec:arithmetic-converter2}

A \acs{BRGC}-encoded number of precision-$1$ has at most one metastable bit:
For any up-count from (an encoding of) $x \in [2^k-1]$ to $x + 1$, a single bit changes, which thus can become metastable if it has precision-$1$.
It is possible to preserve this guarantee when converting to \ac{TC}.
\begin{lemma}
	Define $\rgtoun\colon \B^k \to \B^{(2^k-1)}$ as
	\begin{equation}
		\rgtoun(x) := \un\left( \rg^{-1}(x) \right).
	\end{equation}
	Then $[\rgtoun]_\meta \in \Fun_S$ converts its parameter to \ac{TC}, preserving precision-$1$.
\end{lemma}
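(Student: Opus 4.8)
The plan is to separate the statement into its two parts. Membership $[\rgtoun]_\meta \in \Fun_S$ is immediate from Observation~\ref{obs:closure-simple}, since $\rgtoun\colon \B^k \to \B^{2^k-1}$ is an ordinary Boolean function; so the substance of the lemma is the precision-preservation claim. First I would unpack the hypothesis. Because $\rg\colon[2^k]\to\B^k$ is bijective, every string in $\B^k$ is a valid \ac{BRGC} code word, so the precision of any $x\in\BM^k$ w.r.t.\ $\rg$ is defined, and precision-$1$ means exactly that $\rg^{-1}(\Res(x))$ is a set of integers with pairwise difference at most $1$. Consequently either $x$ is stable, or $x$ has exactly one metastable bit and $\Res(x)=\{\rg(a),\rg(a+1)\}$ for some $a\in[2^k-1]$.

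Next I would evaluate $[\rgtoun]_\meta(x)$ in each case using Definition~\ref{def:closure}. If $x$ is stable then $[\rgtoun]_\meta(x)=\un(\rg^{-1}(x))$ is a genuine \ac{TC} word of precision-$0$, hence precision-$1$, and it is exactly the specified conversion. If $\Res(x)=\{\rg(a),\rg(a+1)\}$, then for each output index $i$ the bit $[\rgtoun]_\meta(x)_i$ equals the common value $\un(a)_i=\un(a+1)_i$ where these agree, and $\meta$ where they disagree. Since $\un(a)=1^a0^{2^k-1-a}$ and $\un(a+1)=1^{a+1}0^{2^k-2-a}$ differ in the single coordinate $a+1$, the output is $1^a\,\meta\,0^{2^k-2-a}$. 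Finally I would read off the conclusion: the resolutions of $1^a\,\meta\,0^{2^k-2-a}$ are precisely $\un(a)$ and $\un(a+1)$, both \ac{TC} code words, so $\un^{-1}\!\left(\Res\left([\rgtoun]_\meta(x)\right)\right)=\{a,a+1\}$, giving precision-$1$; moreover the output has the asserted shape $1^*\meta\,0^*$, i.e., is a valid partially metastable \ac{TC} word whose precision is therefore well-defined.

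I do not anticipate a real difficulty; the argument is essentially the \ac{TC}-side analogue of Lemma~\ref{lem:max}. The one point deserving explicit mention is that the precision-$1$ hypothesis on the input is what forces the two resolutions of a single-metastable-bit $x$ to be \emph{consecutive} Gray code words (without it, a single metastable Gray bit can split values that are far apart, e.g.\ $0$ and $2^k-1$), since it is exactly this consecutiveness that makes the target \ac{TC} encodings $\un(a),\un(a+1)$ differ in one position and keeps the induced metastability to a single bit.
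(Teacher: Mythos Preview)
Your proposal is correct and follows essentially the same approach as the paper's proof: reduce the precision-$1$ input to $\rg^{-1}(\Res(x))\subseteq\{a,a+1\}$ and observe that $\un(a)$ and $\un(a+1)$ differ in a single bit. The paper's version is a terse two-line argument; you merely unpack it, making explicit the invocation of Observation~\ref{obs:closure-simple}, the case split on whether $x$ is stable, and the concrete output form $1^a\meta\,0^{2^k-2-a}$.
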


\begin{proof}
	If $x$ has precision-$1$, then $\rg^{-1}(\Res(x)) \subseteq \{a,a+1\}$ for some $a \in [2^k - 1]$.
	Hence, $\un(a)$ and $\un(a+1)$ differ in a single bit, proving the claim.
\end{proof}

\section{Conclusion}
\label{sec:conclusion}

No digital circuit can reliably avoid, detect, or resolve metastable upsets~\cite{m-gtmo-81}.
So far, the only known counter strategy has been to use synchronizers\dash---trading time for an increased probability of resolving metastability.
We propose a fundamentally different method:
It is possible to design efficient digital circuits that tolerate a certain degree of metastability in the input.
This technique features critical advantages:
\begin{enumerate}
\item
	Where synchronizers decrease the odds of failure, our techniques provide deterministic guarantees.
	A synchronizer may or may not stabilize in the allotted time frame.
	Our model, on the other hand, guarantees to return one of a specific set of known values\dash---like the metastable closure, but this depends on the application\dash---without relying on probabilities.

\item
	Our approach avoids synchronization delay and, in principle, allows higher operating frequencies.
	If the required functions can be implemented in a metastability-containing way, there is no need to use a synchronizer, i.e., to wait a fixed amount of clock cycles before starting the computation.

\item
	Even if metastability needs to be resolved eventually, one can still save time by allowing for stabilization \emph{during} the metastability-containing computations.
\end{enumerate}
In light of these properties, we expect our techniques to prove useful for a variety of applications, especially in time- and mission-critical scenarios.

As a consequence of our techniques, we are the first to establish the implementability of the fault-tolerant clock synchronization algorithm by Lundelius Welch and Lynch~\cite{ll-ftacs-88} with a deterministic correctness guarantee, despite the unavoidable presence of metastable upsets.

Furthermore, we fully classify the functions computable with circuits restricted to standard registers.
Finally, we show that circuits with masking registers become computationally more powerful with each round, resulting in a non-trivial hierarchy of computable functions.

\paragraph*{Future Work}

In this work, we focus on computability under metastable inputs.
There are many open questions regarding circuit complexity in our model of computation.
It is of interest to reduce the gate complexity and latency of circuits, as well as to determine the complexity overhead of metastability-containment in general.
In particular, the overhead of implementing the metastable closure $[f]_\meta$ as compared to an implementation of $f$ that is oblivious to metastability\dash---and if there has to be an overhead at all\dash---is an open question, in general as well as for particular functions~$f$.
Recently, promising results have been obtained for sorting networks~\cite{blm-nomcsn-17,lm-emcgc2s-16}, \iac{TDC} that directly produces precision-$1$ \ac{BRGC}~\cite{fklp-mametdc-17}, and network-on-chip routers~\cite{tfl-mtc-17}.

Masking registers are computationally strictly more powerful than simple registers (Theorem~\ref{thm:diff}).
An open question is which metastability-containing circuits benefit from masking registers, and to find further examples separating $\Fun_M^r$ from $\Fun_M^{r+1}$.

Our model does not capture clock gating, i.e., non-input registers are overwritten in every clock cycle.
Hence, storing intermediate results in masking registers is pointless:
Taking advantage of at most one read from an input masking-register becoming metastable does not apply to results of intermediate computations.
It is an open problem whether this makes a difference in terms of computability.

\section*{Acknowledgments}

The authors would like to thank Attila Kinali, Ulrich Schmid, and Andreas Steininger for many fruitful discussions.
Matthias F{\"u}gger was affiliated with Max Planck Institute for Informatics during the research regarding this paper.

\bibliographystyle{abbrv}
\bibliography{bibliography}

\end{document}